\pgfplotsset{compat=1.17}
\let\originalleft\left
\let\originalright\right
\renewcommand{\left}{\mathopen{}\mathclose\bgroup\originalleft}
\renewcommand{\right}{\aftergroup\egroup\originalright}
\def\E{ {\mathbb E} } 
\def\dd{\mathrm{d}}
\def\>{\rangle}
\def\<{\langle}
\newcommand{\abs}[1]{\left| {#1} \right|} 
\newcommand{\ketbra}[2]{\ensuremath{\left|#1\right\rangle\!\left\langle#2\right|}}
\newcommand{\tr}[1]{\mathrm{Tr}\left( #1 \right)}
\newcommand{\norm}[1]{\left\|#1\right\|}
\DeclareMathOperator{\Var}{Var}
\DeclareMathOperator{\Tr}{Tr}
\DeclareMathOperator{\Cov}{Cov}
\DeclareMathOperator{\rank}{rank}
\DeclareMathOperator{\sgn}{sgn}
\newcommand{\rhohat}{\hat{\rho}}
\definecolor{ppblue}{RGB}{46,117,182}
\definecolor{ppred}{RGB}{197, 90, 17}
\newcommand{\hphide}[1]{}
\newcommand{\set}[1]{\left\{#1\right\}}
\newcommand{\inftynorm}[1]{\left\lVert #1 \right\rVert_{\infty}}
\newcommand{\sonenorm}[1]{\left\lVert #1 \right\rVert_1}
\newcommand{\pnorm}[2]{\left\lVert #1 \right\rVert_{#2}}
\newcommand{\bigo}[1]{\mathcal{O}\left(#1\right)}
\newcommand{\bigthetalog}[1]{\tilde{\Theta}\left(#1\right)}
\newcommand{\task}{\mathsf{Shadows}}
\newcommand{\itask}{\mathsf{I}\text{-}\task}
\newcommand{\sym}{\Pi_{\operatorname{sym}}}
\newcommand{\symm}{\operatorname{S}}
\newcommand{\obs}{\mathrm{Obs}}
\newcommand{\BHM}{\operatorname{BHM}}
\theoremstyle{plain}
\newtheorem{thm}{Theorem}
\newtheorem{theorem}[thm]{Theorem}
\newtheorem{lemma}[thm]{Lemma}
\newtheorem{cor}[thm]{Corollary}
\newtheorem{fact}{Fact}
\theoremstyle{definition} 
\newtheorem{defn}[thm]{Definition}
\begin{document}


\title{Sample-optimal classical shadows for pure states}

\author{Daniel Grier}
\affiliation{Department of Mathematics and Department of Computer Science and Engineering, UC San Diego}
\affiliation{Institute for Quantum Computing, University of Waterloo, Canada}
\author{Hakop Pashayan}
\affiliation{Institute for Quantum Computing, University of Waterloo, Canada}
\affiliation{Department of Combinatorics and Optimization, University of Waterloo, Canada}
\affiliation{Perimeter Institute for Theoretical Physics, Waterloo, Canada}
\affiliation{Dahlem Center for Complex Quantum Systems, Freie Universit\"{a}t Berlin, Germany}
\author{Luke Schaeffer}
\affiliation{Institute for Quantum Computing, University of Waterloo, Canada}
\affiliation{Department of Combinatorics and Optimization, University of Waterloo, Canada}
\affiliation{Joint Center for Quantum Information and Computer Science, University of Maryland, College Park}

\begin{abstract}
We consider the classical shadows task for pure states in the setting of both joint and independent measurements. The task is to measure few copies of an unknown pure state $\rho$ in order to learn a classical description which suffices to later estimate expectation values of observables. Specifically, the goal is to approximate $\mathrm{Tr}(O \rho)$ for any Hermitian observable $O$ to within additive error $\epsilon$ provided $\mathrm{Tr}(O^2)\leq B$ and $\lVert O \rVert = 1$. 
Our main result applies to the joint measurement setting, where we show $\tilde{\Theta}(\sqrt{B}\epsilon^{-1} + \epsilon^{-2})$ samples of $\rho$ are necessary and sufficient to succeed with high probability. The upper bound is a quadratic improvement on the previous best sample complexity known for this problem. For the lower bound, we see that the bottleneck is not how fast we can learn the state but rather how much any classical description of $\rho$ can be compressed for observable estimation.
In the independent measurement setting, we show that $\mathcal O(\sqrt{Bd} \epsilon^{-1} + \epsilon^{-2})$ samples suffice. Notably, this implies that the random Clifford measurements algorithm of Huang, Kueng, and Preskill, which is sample-optimal for mixed states, is not optimal for pure states. Interestingly, our result also uses the same random Clifford measurements but employs a different estimator.
\end{abstract}

\maketitle


\section{Introduction}
\label{sec:intro}

How many copies of an unknown state are required to construct a classical description of the state?  The answer to this question will depend on several details:  what constitutes an accurate description; what is already known about the state; and what restrictions are placed on the measurements of the state. Given the fundamental importance of this question, there has been significant prior work in bounding the number of samples of the states required to perform this learning task in a variety of contexts. 

The most well-known setting is called \emph{quantum state tomography}, where the goal is to learn enough about the state to be able to completely reconstruct it---precisely, estimate the unknown $d$-dimensional quantum state to accuracy $\epsilon$ in the Schatten $1$-norm. Tight upper and lower bounds for the number of copies required for this task are known: $\tilde{\Theta}(\epsilon^{-2} d^3)$ copies of the state are needed with independent measurements~\cite{Haah2016sample}, and $\tilde{\Theta}(\epsilon^{-2} d^2)$ copies are needed when the unknown states can be simultaneously measured in a large joint measurement~\cite{ODonnel2016efficient}. Independent measurements are easier to experimentally implement, while the joint measurements explore what is possible with respect to the fundamental limits of quantum mechanics. A key takeaway in the joint measurement setting is that the algorithm for the upper bound is achieving what would na\"{i}vely be the best possible result, given that a $d$-dimensional state has $\Theta(d^2)$-many independent parameters, and $\Theta(\epsilon^{-2})$ samples are necessary to estimate any one parameter.

In some sense, the requirements of the quantum tomography question are quite rigid. For many applications, only some properties of the unknown state are important. Can we get away with fewer samples if we relax our notion of approximation? In particular, what if we only wish to learn the expected values of certain Hermitian observables? Aaronson gave a somewhat surprising answer to this question in a joint measurement setting called \emph{shadow tomography}~\cite{Aaronson2018shadow}: given $M$ bounded observables 
($\set{O_i}_{i=1}^M$, $\norm{O_i} \le 1$),
estimate $\tr{O_i \rho}$ to within $\epsilon$ additive error.\footnote{Aaronson actually stipulates that each observable is positive semi-definite matrix $E_i$ so that $\{E_i, I-E_i\}$ is a 2-outcome POVM.  We note that this is equivalent to the task of estimating expectation values of the (bounded) Hermitian observables $O_i$ via the mapping $E_i=(O_i+I)/2$.} In this setting, Aaronson showed that only $\tilde{\mathcal O}(\epsilon^{-4} \log^4 M \log d)$ samples of the state are needed. Subsequent work by B\u{a}descu and O'Donnell \cite{buadescu2021improved} improved this to $\tilde{\mathcal O}(\epsilon^{-4} \log^2 M \log d)$, but there are still no matching lower bounds for this setting. That is, we do not know if we are extracting as much information about the unknown state as we can. In the independent measurement setting, $\tilde{\Theta}(\min\{M, d\}/\epsilon^2)$ samples are necessary and sufficient \cite{chen2022exponential}.

One subtlety concerning these observable estimation tasks is whether or not the measurements are allowed to depend on the specific observables $O_i$. In shadow tomography, the measurements \emph{can} depend on the observables, but an increasingly popular setting (inspired by the work of Huang, Kueng, and Preskill \cite{Huang2020}) is one in which the observables $O_i$ are unknown at the time of measurement. That is, the measurements must produce a classical description (called the \emph{classical shadow}) from which the observable expected values can later be calculated. In their randomized Clifford measurement scheme, Huang, Kueng, and Preskill consider the independent measurement setting and show that $\Theta(B \epsilon^{-2} \log M)$ copies of the unknown state are both necessary and sufficient provided that $\tr{O_i^2}\leq B$ for all $i$ (note that $\tr{O_i^2}\leq d$).

Consider now how the classical shadows setting compares to the quantum state tomography setting with regard to the type of measurements allowed.  In the quantum state tomography setting, we know that joint measurements allow us to extract more information from the state, yielding estimates of the unknown state with provably fewer samples than those required with independent measurements.  In the classical shadows setting, however, it is not known how the type of measurement affects the number of samples required.  Concretely, is it possible to perform the classical shadows task with fewer samples if we switch to joint measurements?  We answer this question affirmatively in the setting of \emph{pure} states.

Formally, we show the following: $\mathcal O((\sqrt{B}\epsilon^{-1}+\epsilon^{-2})\log M)$ samples of the unknown pure state are sufficient for performing the classical shadows task with constant probability of failure. Compared to~\cite{Huang2020}, this achieves almost a square root reduction in sample complexity.

Remarkably, in analogy with the quantum state tomography setting, our joint measurement procedure is in some sense extracting the maximum amount of information possible.  To see this, consider a simple setting in which $B = d$, $\epsilon$ is constant, and we only wish to estimate a single observable. Our algorithm uses $\mathcal O(\sqrt d)$ samples. However, Gosset and Smolin~\cite{gosset_smolin2019} show that even if you are given the state as an explicit density matrix, you cannot compress your description of the state down to fewer than $\Omega(\sqrt{d})$-many bits of information in order to estimate arbitrary observable expectation values. Notice, however, that to successfully execute the classical shadows task, one would first need to learn such a compressed description through measurement of the unknown state. A priori, the number of measurements required to do this could be much higher than the size of this compressed description.
The fact that we find a matching upper bound implies that accessing the relevant information contained in the state is not the significant bottleneck.

We show that a similar phenomena exists for arbitrary parameters $B$ and $\epsilon$. Namely, we refine the Gosset-Smolin lower bound for compression to $\Omega(\sqrt{B}\epsilon^{-1})$-many bits, which ultimately allows us to show that $\tilde{\Omega}(\sqrt{B}\epsilon^{-1}+\epsilon^{-2})$ samples of the state are required for the classical shadows task. Therefore, our joint-measurement algorithm above is sample-optimal (at least for a single observable and up to log factors).

Finally, we address the classical shadows question with pure states and independent measurements. We show that $\mathcal O((\frac{\sqrt{Bd}}{\epsilon} + \frac{1}{\epsilon^2})\log M)$ copies of the state suffice. It's worth noticing that in certain parameter regimes, this upper bound is smaller than $\Theta(B \epsilon^{-2} \log M)$. In other words, our algorithm uses fewer samples than the classical shadows algorithm of Huang, Kueng, and Preskill which was designed for general mixed states. Indeed, their lower bound methods require the underlying state to be mixed.

\subsection{The classical shadows task}
\label{sec:classical_shadows_task}
We consider the classical shadows task introduced by Huang, Kueng, and Preskill \cite{Huang2020}: given several copies of an unknown quantum state, produce a classical description of the state that is sufficiently representative to permit the reliable and accurate estimation of expectation values of some number of observables chosen from a broad class.

To formalise the task, let's begin with the class of observables we will use: 
\begin{defn}
For any $B\in (0,d]$, let
$$\obs(B):=\set{O\in \mathbb C^{d \times d}~|~O=O^{\dagger},\inftynorm{O} = 1, \Tr(O^2)\leq B}.$$
\end{defn}
In summary, these observables have been scaled/normalized so that $\inftynorm{O} = 1$ and have a bound of $B$ on their squared Frobenius norm $\Tr(O^2)$. The latter condition is due to the fact that $\Tr(O^2)$ is typically the dominant term in the sample complexity. We could also reasonably upper bound it by the rank of the observable since $\Tr(O^2) \leq \rank{O} \leq d$.

We remark that $\pnorm{O}{2} = \sqrt{\Tr(O^2)}$ and $\inftynorm{O}$ are examples of Schatten $p$-norms where $p = 2$ and $p = \infty$ respectively, but defined in general as $\pnorm{A}{p} := \Tr(|A|^{p})^{1/p}$ for $p \in [1, \infty)$. We will also use the Schatten $1$-norm. Going forward, we write $\pnorm{O}{1}$ for the $1$-norm, $\norm{O}$ for the infinity norm, and prefer $\Tr(O^2)$ over $\pnorm{O}{2}^2$. 

\begin{defn}[Classical Shadows Task]\label{defn:classical_shadows_task}
    The Classical Shadows Task consists of two separate phases---a measurement phase and an observable estimation phase---which are completed by two separate (randomized) algorithms, $\mathcal A_{\mathrm{meas}}$ and $\mathcal A_{\mathrm{est}}$, respectively. In addition to the inputs below, each algorithm also depends on the four parameters $s$, $B$, $\epsilon$, and $\delta$:
    \begin{itemize}
    \item[] \textbf{Measurement:} $\mathcal A_{\mathrm{meas}} \colon \rho^{\otimes s} \to \{0,1\}^*$ \\
        \noindent \textit{Input:}  $s$ copies of a state $\rho \in \mathbb C^{d \times d}$. \\
        \noindent \textit{Output:} A bit string called the \emph{classical shadow}. 
    \item[] \textbf{Estimation:} $\mathcal A_{\mathrm{est}} \colon \obs(B) \times \{0,1\}^* \to \mathbb R$\\
        \noindent \textit{Input:} Observable $O\in \obs(B)$ and a classical shadow. \\
        \noindent \textit{Output:} Estimate $E \in \mathbb R$.
    \end{itemize}
    \noindent It's worth emphasizing that the input to the measurement algorithm is quantum (the state $\rho^{\otimes s}$) and the output is classical (the classical shadow). This output is computed from measuring the input state with some POVM (with arbitrary post-processing). 
    
    We say that $\mathcal A_{\mathrm{meas}}$ and $\mathcal A_{\mathrm{est}}$ constitute a valid protocol for the classical shadows task if their estimate for the expectation of the observable $E := \mathcal A_{\mathrm{est}}(O, \mathcal A_{\mathrm{meas}}(\rho^{\otimes s}))$ is such that 
    \begin{align}\label{eq:close_in_expectation}
    \abs{\Tr(O \rho) - E} < \epsilon
    \end{align}
    with probability at least $1-\delta$ over the randomness of $\mathcal A_{\mathrm{meas}}$ and $\mathcal A_{\mathrm{est}}$. 
\end{defn}

Some may find it useful to think about the classical shadows task as a one-way communication protocol where one party (let's call her Melanie) is given copies of an unknown state and another party (say, Esteban) is given an observable. Melanie doesn't know Esteban's observable, and Esteban cannot send hints because we are assuming one way communication from Melanie to Esteban, so there is only one course of action: Melanie must measure her unknown state and send (over a classical channel) a description of the state from which Esteban can estimate the expected value of his given observable. 

Throughout this paper, we will focus on the classical shadows task with unknown \emph{pure} states. This motivates the following definitions:

\begin{defn}[Sample Complexity of the Classical Shadows Task]
    Let $\task(B,\epsilon,\delta)$ to be the minimum number of samples $s$ required to successfully carry out the classical shadows task on pure states with the set of observables $\obs(B)$, to accuracy $\epsilon$, and failure probability at most $\delta$.
    
    Sometimes we will omit $\delta$ and write $\task(B,\epsilon)$ to denote the minimum number of samples to achieve these tasks with some constant probability of failure, say, $0.001$. 
\end{defn}

\begin{defn}[Classical Shadows with Independent Measurements]
 Let $\itask(B,\epsilon,\delta)$ be the sample complexity for the classical shadows task with pure states when the measurement algorithm can only make \emph{independent} measurements on the input state---that is, the measurement POVM is the tensor product of POVMs on single copies of the state. These POVMs do not have to be identical, but the entire state must be measured at the same time, or in other words, the output from a measurement on one copy of the state cannot influence the measurement on another.
\end{defn}

We note that there are many possible variants for the sample complexity of the classical shadows task that we haven't given individual names. Most notably are the settings where the unknown states are mixed states (rather than pure) and/or the measurements are allowed to be adaptive (while still acting on single copies of the state).

\subsection{Summary of results}
Our main result is to prove matching upper and lower bounds on the sample complexity of performing the classical shadows task with respect to joint measurements and pure states.
\begin{thm}
\label{thm:ps_jm_ubANDlb}
$\task(B,\epsilon) = \bigthetalog{\frac{\sqrt{B}}{\epsilon} + \frac{1}{\epsilon^2}}$ provided $B \le \epsilon d$.
\end{thm}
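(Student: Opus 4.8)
The plan is to establish the upper and lower bounds separately. For the upper bound, the idea is that we should not aim to reconstruct $\rho$ in trace distance (which would cost on the order of $Bd/\epsilon^2$ copies), but only to extract enough to estimate $\Tr(O\rho)$. I would measure all $s$ copies at once with the optimal covariant pure-state-estimation POVM on the symmetric subspace $\mathrm{Sym}^s(\mathbb{C}^d)$, whose outcome $\hat\psi$ is returned (suitably discretized) as the classical shadow and has density proportional to $|\langle\hat\psi|\psi\rangle|^{2s}$. By Schur covariance its first moment is $\E[\hat\rho] = (s\rho + \iden)/(s+d)$, where $\hat\rho = |\hat\psi\rangle\langle\hat\psi|$. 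Since $\Tr(O)$ is known at estimation time, I would invert this to obtain the unbiased estimator
\[
E \;=\; \frac{(s+d)\,\langle\hat\psi|O|\hat\psi\rangle - \Tr(O)}{s},
\]
so that $\E[E] = \Tr(O\rho)$ exactly.

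The crux of the upper bound is controlling $\Var(E) = \tfrac{(s+d)^2}{s^2}\Var(\langle\hat\psi|O|\hat\psi\rangle)$. I would compute the second moment $\E[\hat\rho\otimes\hat\rho]$ by contracting $\langle\psi|^{\otimes s}$ into the Haar average $\int |\hat\psi\rangle\langle\hat\psi|^{\otimes (s+2)}\,\mathrm{d}\hat\psi = \Pi_{s+2}/\dim\mathrm{Sym}^{s+2}$ (with $\Pi_{s+2}$ the projector onto the symmetric subspace), which leaves a combination of $\rho\otimes\rho$, the symmetrization of $\rho\otimes\iden$, and the projector onto $\mathrm{Sym}^2$. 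Tracing against $O\otimes O$ produces the scalars $\Tr(O\rho)^2$, $\Tr(O\rho)\Tr(O)$, $\Tr(O^2\rho)$, $\Tr(O)^2$, and $\Tr(O^2)$, weighted by rational functions of $s$ and $d$. The main obstacle here is the bookkeeping: one must track these coefficients and verify that the debiasing cancels the dangerous $\Tr(O)^2$ contribution (which on its own would force $\sqrt{Bd}/\epsilon$ samples, the independent-measurement rate), leaving $\Var(E) \lesssim \Tr(O^2)/s^2 + \Tr(O^2\rho)/s \le B/s^2 + 1/s$, where the last step uses $\|O\| = 1$. Setting this to $\epsilon^2$ yields $s = O(\sqrt B/\epsilon + 1/\epsilon^2)$, and Chebyshev together with a constant number of median-of-means repetitions gives the required failure probability. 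The hypothesis $B \le \epsilon d$ keeps $s \lesssim d$, which is where the coefficient estimates are clean.

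For the lower bound I would argue the two terms by different means. The $\Omega(1/\epsilon^2)$ term is a two-point argument: any single $O \in \obs(B)$ with $\Tr(O(\rho_0 - \rho_1)) \ge 2\epsilon$ forces the pure states $\rho_0, \rho_1$ to have infidelity $\ge \epsilon^2$, so $\rho_0^{\otimes s}$ and $\rho_1^{\otimes s}$ have trace distance bounded away from $1$ unless $s = \Omega(1/\epsilon^2)$, and no protocol can reliably separate them below this. The $\tilde\Omega(\sqrt B/\epsilon)$ term I would obtain by recasting a refined Gosset--Smolin compression bound information-theoretically: construct $k = \Omega(\sqrt B/\epsilon)$ observables $O_1,\dots,O_k \in \obs(B)$ that are $\epsilon$-fat-shattered by pure states, together with a codebook of $2^{\Omega(k)}$ pure states realizing the sign patterns with margin $\epsilon$. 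Drawing $\rho$ uniformly from the codebook, any valid protocol recovers each bit $b_\ell$ by estimating $\Tr(O_\ell\rho)$, so by Fano the shadow $Y$ satisfies $I(\rho;Y) \ge \Omega(k)$; on the other hand the Holevo bound gives $I(\rho;Y) \le \log\dim\mathrm{Sym}^s(\mathbb{C}^d) = O(s\log d)$, whence $s \ge \tilde\Omega(\sqrt B/\epsilon)$, with the $\log d$ exactly the factor absorbed into $\tilde\Omega$.

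I expect the hardest single step of the whole argument to be this fat-shattering construction: packing $\Omega(\sqrt B/\epsilon)$ observables that can be simultaneously and robustly read off a pure state while respecting both $\|O_\ell\| = 1$ and $\Tr(O_\ell^2) \le B$. Naive small-perturbation encodings saturate at only $\Theta(1/\epsilon^2)$ shattered observables and fail to exploit $B$, because a bounded observable reads an infidelity-$\eta$ change with signal at most $O(\sqrt\eta)$; the construction must instead spread well-separated codeword states across the ambient space, which is precisely why the dimension hypothesis $B \le \epsilon d$ is needed to give the observables enough room.
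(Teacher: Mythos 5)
Your upper bound and your $\Omega(\epsilon^{-2})$ lower bound are, up to presentation, the paper's own arguments: the covariant POVM with outcome density proportional to $|\langle\hat\psi|\psi\rangle|^{2s}$ is exactly the standard symmetric joint measurement of \Cref{defn:measurement}, your debiased estimator is the paper's $\hat{\rho} = ((d+s)\Psi - I)/s$, the $(s+2)$-fold Haar contraction is \Cref{lem:second_moment}, the variance bound $\bigo{\Tr(O^2)/s^2 + \Tr(O^2\rho)/s}$ with Chebyshev and median-of-means is \Cref{cor:batch_estimate} and the proof of \Cref{thm:ps_jm_ub}, and the two-point Helstrom argument is \Cref{thm:distinguishing_lb}. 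One correction there: the hypothesis $B \le \epsilon d$ plays no role in the upper bound (\Cref{thm:ps_jm_ub} is unconditional); the paper needs it only for the lower bound.

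The genuine gap is in the $\tilde\Omega(\sqrt{B}/\epsilon)$ lower bound: the construction you flag as the hardest step is not merely hard, it is \emph{impossible} in the regime the theorem must cover, and this follows from results the paper already cites. Suppose you had fixed observables $O_1,\dots,O_k \in \obs(B)$ and a codebook of $2^{\Omega(k)}$ pure states whose bits are decoded by comparing $\Tr(O_\ell \rho_b)$ to thresholds with margin $\epsilon$. Run the shadow tomography protocol of B\u{a}descu and O'Donnell \cite{buadescu2021improved} on $T = \tilde O(\epsilon^{-4}\log^2 k \, \log d)$ copies of $\rho_b$: a \emph{single} joint measurement outcome $Y$ then $(\epsilon/2)$-approximates all $k$ expectations simultaneously with probability $0.99$, hence determines the codeword $b$. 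Fano gives $I(b;Y) = \Omega(k)$, while Holevo (\Cref{thm:holevo}) gives $I(b;Y) \le T\log d$, forcing $k = \tilde O(\epsilon^{-4}\log^2 d)$. (The literal fat-shattering version dies even faster: by Nayak's random access code bound, equivalently Aaronson's learnability-of-quantum-states theorem, the $\epsilon$-fat-shattering dimension of $d$-dimensional states is $O(\epsilon^{-2}\log d)$.) The theorem requires the lower bound for all $B \le \epsilon d$, in particular $B = \epsilon d$, where your argument needs $k = \Omega(\sqrt{d/\epsilon})$; for large $d$ this vastly exceeds $\tilde O(\epsilon^{-4}\log^2 d)$, so no such family of observables and codewords exists. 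Ironically, it is precisely the efficiency of shadow-tomography-type protocols that rules out proving this lower bound via any fixed, individually decodable set of observables.

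This impossibility is why the paper routes through Boolean Hidden Matching (\Cref{thm:bhm_to_shadows}) instead. There the observables $O_y$ are indexed by Bob's input---exponentially many matchings and labelings, unknown to Alice at measurement time---and no individual bit of Alice's string $x$ is required to be recoverable, only parities along whichever matching Bob happens to hold; this structure evades the shadow-tomography obstruction because a union bound over all of Bob's inputs costs $\log(\#\text{inputs}) = \Omega(n\log n)$ rather than $\log k$. The $\Omega(\sqrt{n/\alpha})$ information requirement is then inherited from the Fourier-analytic one-way communication lower bound of \cite{Gavinsky2007exponential}, transferred to mutual information via Yao's minimax principle (\Cref{thm:yaos_minimax}) and the message-compression theorem (\Cref{thm:information_bounds_communication}), and only then is Holevo combined with the symmetric-subspace dimension bound $\log \dim \sym^{(s)} = O(s\log(n/s+1))$. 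Any repaired version of your information-theoretic argument would need this same feature: an exponentially large family of queries hidden from the measuring party, with hardness certified by something other than a packing/Fano argument over a fixed observable set.
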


Notice that \Cref{thm:ps_jm_ubANDlb} consists of separate upper and lower bound results (for constant $\delta$). These match up to logarithmic factors in $B$ and $\epsilon^{-1}$, and the technical relationship between $B$, $\epsilon$, and $d$ is only required for the lower bound. In \Cref{sec:upper_bound}, we will prove the upper bound, where we will also show that the dependence on the failure probability $\delta$ goes as $\log(1/\delta)$. We note that this dependence on $\delta$ implies that there are efficient protocols for the calculation of several observables simultaneously---that is, if the classical shadows task fails with probability at most $\delta$ on a single observable, then it fails with probability at most $M\delta$ on one or more out of $M$ observables by the union bound. In \Cref{sec:lower_bound}, we will prove the lower bound where only the $\epsilon^{-2}$ term will scale with $\log(1/\delta)$. 

We also prove an upper bound on the sample complexity of performing the classical shadows task with respect to independent measurements and \emph{pure} states. Our upper bound can be compared to the matching upper and lower bound of Huang, Kueng, and Preskill \cite{Huang2020} which applies to independent measurements and \emph{general} states. In certain parameter regimes, our upper bound achieves a \emph{smaller} sample complexity than the lower bound in \cite{Huang2020} which implies that in the independent measurement setting, the classical shadows task has smaller sample complexity for pure states.
\begin{thm}
\label{thm:ps_im_ub_summary}
    For all $\epsilon, \delta > 0$,
    $$\itask(B, \epsilon, \delta) = \bigo{\min \left\{ \frac{B}{\epsilon^2}, \frac{\sqrt{Bd}}{\epsilon} + \frac{1}{\epsilon^2} \right \} \log(\delta^{-1})}.$$
\end{thm}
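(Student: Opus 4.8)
The bound is a minimum of two quantities, so it suffices to exhibit two algorithms and always run whichever is cheaper. The term $B/\epsilon^2$ is delivered verbatim by the random Clifford classical shadows of Huang, Kueng, and Preskill, whose single-observable variance is $\bigo{\Tr(O^2)} = \bigo{B}$; a standard median-of-means boost then uses $\bigo{B\epsilon^{-2}\log(\delta^{-1})}$ samples. All the real work is therefore in the second branch, $\sqrt{Bd}\,\epsilon^{-1} + \epsilon^{-2}$, and the plan is to reuse the very same Clifford measurement channel but to replace the linear estimator by a \emph{quadratic} one that exploits purity.

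Concretely, I would draw $m$ independent Clifford snapshots $\hat\rho_1,\dots,\hat\rho_m$, each an unbiased single-copy estimator with $\E[\hat\rho_i] = \rho$, split them into two halves of size $m/2$, and form the empirical means $\bar\rho_A,\bar\rho_B$. The estimator is $\Tr(O\,\bar\rho_A\,\bar\rho_B)$ (equivalently the $U$-statistic $\binom{m}{2}^{-1}\sum_{i<j}\Tr(O\hat\rho_i\hat\rho_j)$). Since $\bar\rho_A,\bar\rho_B$ are independent and unbiased, its expectation is $\Tr(O\rho^2)$, and here is the only place purity is used: for a pure state $\rho^2 = \rho$, so the estimator is unbiased for the target $\Tr(O\rho)$. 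The one structural fact I need about the snapshot is its second moment; using that the Clifford group is a $3$-design one computes $\E[\hat\rho^2] = d\,\iden + (d-1)\rho$, so the single-snapshot covariance $\Sigma := \E[(\hat\rho-\rho)^2] = \E[\hat\rho^2] - \rho^2$ satisfies $\norm{\Sigma} = \bigo{d}$ while the target observable still obeys $\Tr(O^2) \le B$.

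The crux is the variance. Writing $\bar\rho_A = \rho + E_A$ and $\bar\rho_B = \rho + E_B$ with $E_A,E_B$ independent and mean zero, expand $\Tr(O\bar\rho_A\bar\rho_B) = \Tr(O\rho) + \Tr(O\rho E_B) + \Tr(OE_A\rho) + \Tr(OE_AE_B)$; the three fluctuation terms are pairwise uncorrelated because each unmatched factor $E_A$ or $E_B$ has mean zero. The two cross terms are linear shadow estimators of the effective observable $O\rho$, whose variance is governed by $\Tr((O\rho)^\dagger O\rho) = \Tr(O^2\rho) \le \norm{O}^2 = 1$, contributing $\bigo{1/m}$. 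For the genuinely quadratic term I would condition on $E_A$, bound the inner variance over $E_B$ by $\tfrac{2}{m}\cdot 3\,\Tr(O^2 E_A^2)$, and then take the expectation over $E_A$ using $\E[E_A^2] = \tfrac{2}{m}\Sigma$, which yields $\tfrac{12}{m^2}\Tr(O^2\Sigma)$. The key estimate is then $\Tr(O^2\Sigma) \le \norm{\Sigma}\,\Tr(O^2) = \bigo{dB}$ — using the operator norm of $\Sigma$ rather than the cruder $\norm{O}^2\Tr(\Sigma) = \bigo{d^2}$ is exactly what produces the improved $\sqrt{Bd}$ scaling. Altogether $\Var \le \bigo{1/m + dB/m^2}$, so $m = \bigo{\sqrt{Bd}\,\epsilon^{-1} + \epsilon^{-2}}$ forces the variance to at most $\epsilon^2$.

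Finally I would amplify confidence: run $\bigo{\log(\delta^{-1})}$ independent copies of the above estimator and take the median, so that Chebyshev together with the standard median-of-means argument gives additive error $\epsilon$ with probability $1-\delta$ at the advertised total sample cost. I expect the main obstacle to be the variance bookkeeping of the quadratic term — in particular, verifying that the three fluctuation contributions decouple and pinning down $\E[\hat\rho^2]$ (hence $\norm{\Sigma} = \bigo{d}$) sharply enough that the $\Tr(O^2\Sigma) = \bigo{dB}$ step goes through; everything else is a routine unbiasedness check and a median-of-means wrapper.
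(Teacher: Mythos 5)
Your proposal is correct and takes essentially the same route as the paper: the first branch is Huang--Kueng--Preskill verbatim, and the second branch uses the same $\mathcal{M}_1$/Clifford snapshots combined into a quadratic estimator that is unbiased precisely because $\rho^2=\rho$, yielding a variance of order $1/s + Bd/s^2$ and hence $s = \mathcal{O}(\sqrt{Bd}\,\epsilon^{-1}+\epsilon^{-2})$ per batch, followed by a median-of-means wrapper. The only difference is bookkeeping: the paper analyzes the full pairwise estimator $\frac{1}{s(s-1)}\sum_{i\neq j}\Tr(O\hat\rho_i\hat\rho_j)$ by enumerating covariance terms according to index-matching patterns (its Lemma on covariances, proved in the appendix), whereas you use a split-sample product of means with a law-of-total-variance decomposition --- note these two estimators are not literally equivalent as you claim (the split version omits within-half pairs), but both rest on the same moment computations ($\E[\hat\rho^2]=d\,I+(d-1)\rho$, which is exactly right) and give the same bound up to constants.
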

We discuss and prove \Cref{thm:ps_im_ub_summary} in \Cref{sec:independent_measurements}.

Finally, we note that in all of our algorithms, the estimator $\hat{\rho}$ we use for the unknown state $\rho$ is not itself a proper state. In \Cref{sec:proper_learning}, we show that this is a necessary price for the favorable sample complexity enjoyed by classical shadows schemes. Informally, we show that even for observables in $\obs(1)$, learning an estimate $\hat{\rho}$  that is a proper state to sufficient accuracy to solve the classical shadows task via the formula $\tr{O\hat{\rho}}$, requires a sample complexity that scales linearly in $d$, the dimension of the unknown state.


\section{Preliminaries}
Here we cover key background material related to Haar random states, their moments, and the symmetric subspace. Throughout, we're working with \emph{qudits} of dimension $d \ge 2$ unless otherwise specified. Since the unitary group $\mathrm{U}(d)$ acts on the Hilbert space of dimension $d$, it has a corresponding \emph{Haar measure} which is invariant under the action of the group. \emph{Haar random states} sampled proportional to this measure are ubiquitous in quantum information, and essential to define our measurement in Section~\ref{sec:upper_bound}. 

To perform the necessary calculations on Haar random states, we need to discuss their moments, and some ancillary concepts.
\begin{defn}
    For integer $k \geq 1$, \emph{$k$-th moment} of an ensemble $\mathcal{E}$ of quantum states is 
    $$
    \E_{\ket{\psi} \sim \mathcal{E}}[ \ketbra{\psi}{\psi}^{\otimes k}].
    $$
    An ensemble $\mathcal{E}$ is a \emph{(state) $t$-design} if the moments $1 \leq k \leq t$ are identical to those of the Haar distribution (see Lemma~\ref{lem:rep_theory}). 
\end{defn}

\begin{defn}[permutation operator]
    Given a permutation $\pi \in \mathrm{S}_s$ (for $s \geq 1$), define a permutation operator $W_{\pi} \in \mathbb C^{d^{s} \times d^{s}}$ such that 
    $$
    W_{\pi} \ket{x_1} \cdots \ket{x_s} = \ket{x_{\pi^{-1}(1)}} \cdots \ket{x_{\pi^{-1}(s)}},
    $$
    and extend by linearity. 
    That is, $W_{\pi}$ acts on $(\mathbb C^{d})^{\otimes s}$ by permuting the qudits, sending the qudit in position $i$ to position $\pi(i)$. 
\end{defn}

\begin{defn}[symmetric subspace]
    The symmetric subspace of an $s$-qudit system $(\mathbb C^{d})^{\otimes s}$ is the subspace invariant under $W_{\pi}$ for all $\pi \in \mathrm{S}_s$. We use $\kappa_s$ to denote its dimension and define $\sym^{(s)}$ to be the projector onto it (notationally omitting the dependence on $d$, the dimension of the qudit). 
\end{defn}

We have two characterizations of the symmetric subspace.
\begin{fact}
\label{symmetric_subspace_size}
    For all $s$,
    $\sym^{(s)} = \frac{1}{s!} \sum_{\pi \in \mathrm{S}_s} W_{\pi}$, and 
    $\kappa_s = \binom{s+d-1}{d-1}$.
\end{fact}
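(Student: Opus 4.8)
The plan is to prove the two assertions separately: first that the average of permutation operators is precisely the symmetric projector, and then to count the dimension of its range. For the identity $\sym^{(s)} = \frac{1}{s!}\sum_{\pi} W_\pi$, I would set $P := \frac{1}{s!}\sum_{\pi \in \mathrm{S}_s} W_\pi$ and verify directly that $P$ is the orthogonal projector onto the symmetric subspace. The first step is to check that $\pi \mapsto W_\pi$ is a representation of $\mathrm{S}_s$, i.e.\ $W_\pi W_\sigma = W_{\pi\sigma}$; this is a short computation on product basis states that tracks the composition of the two relabelings, and it is the one place where the inverse appearing in the definition of $W_\pi$ must be handled carefully. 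Granting this, reindexing the sum gives $W_\sigma P = P$ for every $\sigma$, whence $P^2 = \frac{1}{s!}\sum_\sigma W_\sigma P = P$; and since each $W_\pi$ is a real permutation matrix with $W_\pi^\dagger = W_{\pi^{-1}}$, summing over all $\pi$ shows $P^\dagger = P$, so $P$ is an orthogonal projector. Finally I would identify its range: if $\ket{v}$ is symmetric then $W_\pi \ket{v} = \ket{v}$ for all $\pi$, so $P\ket{v} = \ket{v}$; conversely $P\ket{v} = \ket{v}$ forces $W_\sigma \ket{v} = W_\sigma P \ket{v} = P\ket{v} = \ket{v}$ for every $\sigma$. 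Hence the range of $P$ is exactly the $W_\pi$-invariant subspace, which is the symmetric subspace by definition, so $P = \sym^{(s)}$.

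For the dimension $\kappa_s = \binom{s+d-1}{d-1}$, I would exhibit an explicit orthogonal basis. For each multiset of size $s$ drawn from the $d$ computational labels $\{0,\dots,d-1\}$, pick any product state $\ket{x_1}\cdots\ket{x_s}$ of that type and apply $P$; the result is, up to a nonzero scalar, the uniform superposition over all distinct orderings of that multiset. Distinct multisets yield vectors supported on disjoint sets of computational basis states, so these symmetrized states are mutually orthogonal, hence linearly independent, and they span the range of $P$ because $P$ sends the full computational basis onto scalar multiples of them. The dimension is therefore the number of size-$s$ multisets from a $d$-element alphabet, which by stars-and-bars is $\binom{s+d-1}{d-1}$. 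As an alternative that reuses Part~1, I could instead compute $\kappa_s = \Tr(P) = \frac{1}{s!}\sum_\pi \Tr(W_\pi)$ using $\Tr(W_\pi) = d^{c(\pi)}$, where $c(\pi)$ is the number of cycles of $\pi$, and then evaluate $\sum_\pi d^{c(\pi)} = d(d+1)\cdots(d+s-1)$ as a rising factorial; dividing by $s!$ gives the same binomial coefficient.

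Honestly, neither claim is deep, as both are standard structural facts about the symmetric subspace. The only genuine care is in getting the group-action convention right so that $\pi \mapsto W_\pi$ is a homomorphism rather than an anti-homomorphism; the $\pi^{-1}$ in the definition of $W_\pi$ is there precisely to make this work. In the dimension count, the corresponding small obstacle is arguing the orthogonality and spanning of the multiset-symmetrized basis cleanly, or equivalently evaluating the cycle-index character sum $\sum_\pi d^{c(\pi)}$ to a rising factorial.
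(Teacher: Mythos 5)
Your proposal is correct. The paper states this as a background Fact without proof, so there is no in-paper argument to compare against; your argument is the standard one (show $P=\frac{1}{s!}\sum_\pi W_\pi$ is a Hermitian idempotent whose fixed space is exactly the $W_\pi$-invariant subspace, then count dimension via the multiset basis or via $\Tr(P)=\frac{1}{s!}\sum_\pi d^{c(\pi)}$, a rising factorial), and both parts, including the homomorphism check that justifies the $\pi^{-1}$ convention, go through without gaps.
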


The integral of $\ketbra{\psi}{\psi}$ over the Haar measure is known from, e.g., \cite{scott2006tight}.
\begin{lemma}
\label{lem:rep_theory}
$$
\kappa_s \int_{\psi} \left( \ketbra{\psi}{\psi} \right)^{\otimes s} \mathrm{d}\psi = \sym^{(s)} = \frac{1}{s!} \sum_{\pi \in \mathrm{S}_{s}} W_{\pi}
$$  
where $\sym^{(s)}$ is the projector onto the symmetric subspace and $W_{\pi}$ is the operator that permutes $s$ qudits by an $s$-element permutation $\pi$.  
\end{lemma}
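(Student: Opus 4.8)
The plan is to identify the operator $M := \int_\psi (\ketbra{\psi}{\psi})^{\otimes s}\,\mathrm{d}\psi$ as a scalar multiple of $\sym^{(s)}$ via a symmetry argument, and then to fix the scalar by a trace computation. Since the second equality in the statement is exactly Fact~\ref{symmetric_subspace_size}, it suffices to establish $\kappa_s M = \sym^{(s)}$.

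First I would record two structural facts about $M$. Because each $\ketbra{\psi}{\psi}^{\otimes s}$ is invariant under every permutation operator $W_\pi$ (the $s$ tensor factors are identical copies), the same holds after integration, so $M$ has its support and range inside the symmetric subspace; equivalently $\sym^{(s)} M \sym^{(s)} = M$. Second, by left-invariance of the Haar measure, for any $U \in \mathrm{U}(d)$ the substitution $\ket{\psi} \mapsto U\ket{\psi}$ leaves the integral unchanged, and hence $U^{\otimes s} M (U^{\otimes s})^\dagger = M$; that is, $M$ commutes with $U^{\otimes s}$ for every $U$.

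The key step is Schur's lemma. The symmetric subspace carries an irreducible representation of $\mathrm{U}(d)$ under the action $U \mapsto U^{\otimes s}$. Since $M$ is supported on this subspace and commutes with the (irreducible) action there, Schur's lemma forces $M$ to act as a scalar on the symmetric subspace and as zero on its complement, so $M = c\,\sym^{(s)}$ for some constant $c$. I expect this irreducibility to be the main conceptual obstacle: it is the only nontrivial ingredient, and while it is standard, one either cites it or argues directly, for instance by combining the easy direction of Schur--Weyl duality (operators commuting with all $U^{\otimes s}$ are spanned by the $W_\pi$) with the symmetric-subspace support constraint.

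Finally I would pin down $c$ by taking traces. On one hand $\tr{M} = \int_\psi \tr{\ketbra{\psi}{\psi}^{\otimes s}}\,\mathrm{d}\psi = \int_\psi (\langle\psi|\psi\rangle)^s\,\mathrm{d}\psi = 1$, since each $\ket{\psi}$ is normalized and the Haar measure is a probability measure. On the other hand $\tr{c\,\sym^{(s)}} = c\,\kappa_s$ by the definition of $\kappa_s$ as the dimension of the symmetric subspace. Equating the two yields $c = 1/\kappa_s$, and therefore $\kappa_s M = \sym^{(s)}$, which together with Fact~\ref{symmetric_subspace_size} completes the proof.
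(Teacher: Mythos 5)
Your proof is correct, but note that the paper does not actually prove this lemma at all---it states it as a known result with a citation to \cite{scott2006tight}---so your proposal supplies a self-contained argument where the paper has none. Your route is the standard representation-theoretic one: permutation invariance of each $(\ketbra{\psi}{\psi})^{\otimes s}$ gives $\sym^{(s)} M \sym^{(s)} = M$ for $M := \int_\psi (\ketbra{\psi}{\psi})^{\otimes s}\,\mathrm{d}\psi$; Haar invariance gives $[M, U^{\otimes s}] = 0$ for all $U$; Schur's lemma applied to the irreducible symmetric-power representation forces $M = c\,\sym^{(s)}$; and $\Tr(M) = 1$ fixes $c = 1/\kappa_s$. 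The trace step implicitly uses that the Haar measure is normalized to be a probability measure, which is indeed the paper's convention (it is exactly what makes the POVM of \Cref{defn:measurement} integrate to $\sym^{(s)}$ plus the fail outcome). One minor correction: the Schur--Weyl fact you offer as an alternative to citing irreducibility---that every operator commuting with all $U^{\otimes s}$ lies in $\mathrm{span}\{W_\pi\}$---is the \emph{nontrivial} (double-commutant) direction of Schur--Weyl duality, not the easy one; the easy direction is the reverse containment. That said, your fallback does close the argument: if $M = \sum_\pi c_\pi W_\pi$ and $\sym^{(s)} M \sym^{(s)} = M$, then since $\sym^{(s)} W_\pi \sym^{(s)} = \sym^{(s)}$ one gets $M = \bigl(\sum_\pi c_\pi\bigr)\sym^{(s)}$ directly, and the trace normalization finishes as before, with no appeal to Schur's lemma needed.
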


We will often need to compute the (partial) trace of $(A_1 \otimes A_2 \otimes \cdots \otimes A_s) W_{\pi}$ for some linear operators $A_1, \dots, A_s \in \mathbb C^{d \times d}$. It turns out that there is an extremely useful tensor network based pictorial representation that simplifies these calculations. Let us give a brief introduction to those techniques, though readers may also find more thorough treatments useful \cite{gross2015partial, roberts2017chaos}.

To start, we draw a single $d$ dimensional linear operator $A = \sum_{i,j \in [d]} a_{i,j} \ket{i}\bra{j}$ as a tensor block with a leg for the input and output indices for $A$:
\begin{center}
\begin{tikzpicture}[baseline=(current bounding box.center)]
    \node(A) {$A$};
	\draw (A.north) edge node[left,very near end] {$i$} ++(0,1em);
	\draw (A.south) edge node[right,very near end] {$j$} ++(0,-1em);
\end{tikzpicture}
\end{center}
Suppose we have another tensor $B = \sum_{i,j \in [d]} b_{i,j} \ket{i}\bra{j}$. We express composition, tensor product, and trace as the following tensor networks:

\def \loopwidth {1em}
\def \colspacing {2em}
\def \rowspacing {1.5em}
\setlength\tabcolsep{20pt}
\newsavebox{\composition}
\newsavebox{\tensorproduct}
\newsavebox{\traceytrace}
\savebox{\composition}{
\begin{tikzpicture}[baseline=(current bounding box.center)]
	\matrix (m) [matrix of math nodes, row sep=1em, column sep=1.5em, text height=0.8em,text depth=.2em] {
		A \\
		B \\
	};
	\draw (m-1-1) -- (m-2-1);
	\draw (m-2-1.south) -- ++(0,-1em);
	\draw (m-1-1.north) -- ++(0,1em);
\end{tikzpicture}
}%
\savebox{\tensorproduct}{
\begin{tikzpicture}[baseline=(current bounding box.center)]
	\matrix (m) [matrix of math nodes, row sep=\rowspacing, column sep=1em, text height=0.8em,text depth=.2em] {
		& \\
		A & B \\
		& \\
	};
	\draw (m-1-1) -- (m-2-1) -- (m-3-1);
	\draw (m-1-2) -- (m-2-2) -- (m-3-2);
\end{tikzpicture}
}%
\savebox{\traceytrace}{
\begin{tikzpicture}[baseline=(current bounding box.center)]
	\matrix (m) [matrix of math nodes, row sep=\rowspacing, column sep=\colspacing, text height=0.8em,text depth=.2em] {
		A \\
	};
	\draw[rounded corners] (m-1-1.south) -- ++(0,-1em) -- ++(\loopwidth,0) |- ($(m-1-1.north) + (0,1em)$) -- ++ (m-1-1);
\end{tikzpicture}
}
\begin{center}
\begin{tabular}{c c c}
Composition ($AB$) & Tensor Product ($A \otimes B$) & Trace ($\Tr(A)$) \\ \hline
\usebox\composition & \usebox\tensorproduct & \usebox\traceytrace
\end{tabular}
\end{center}
The reason the tensor network picture is particularly nice for dealing with traces of $W_{\pi}$ terms is because each $W_{\pi}$ term is simply a permutation of wires in the tensor network picture. For example, 
for a simple cyclic permutation, we have
\[
\begin{tikzpicture}[baseline=(current bounding box.center)]
    \node[draw,rectangle,minimum width=4em] (W) {$W_{(1\,2\,3)}$};
    \draw (W.north) ++ (1.5em,0) -- ++(0,1em);
    \draw (W.north) ++ (0em,0) -- ++(0,1em);
    \draw (W.north) ++ (-1.5em,0) -- ++(0,1em);
    \draw (W.south) ++ (1.5em,0) -- ++(0,-1em);
    \draw (W.south) ++ (0em,0) -- ++(0,-1em);
    \draw (W.south) ++ (-1.5em,0) -- ++(0,-1em);
    \end{tikzpicture}
\,\,\, =
\begin{tikzpicture}[baseline=(current bounding box.center)]
	\matrix (m) [matrix of math nodes, row sep=\rowspacing, column sep=1em, text height=0.8em,text depth=.2em] {
		\phantom{A} & \phantom{B} & \phantom{C} \\
	};
	\draw[rounded corners] ($(m-1-1.north)+(0,1em)$) -- (m-1-1.north) -- (m-1-3.south) -- ++(0,-1em);
	\draw[rounded corners] ($(m-1-2.north)+(0,1em)$) -- (m-1-2.north) -- (m-1-1.south) -- ++(0,-1em);
	\draw[rounded corners] ($(m-1-3.north)+(0,1em)$) -- (m-1-3.north) -- (m-1-2.south) -- ++(0,-1em);
\end{tikzpicture}.
\]

The key feature of tensor networks is that only the topology of the network matters, so we can simplify tensor networks just by moving the elements around. For example, consider a common partial trace  that will arise in this paper: $\Tr_1((A \otimes B) W_{(1 \, 2)})$. Drawing the tensor network, we get 
\[
\Tr_1((A \otimes B) W_{(1\,2)}) \quad = \quad 
\begin{tikzpicture}[baseline=(current bounding box.center)]
	\matrix (m) [matrix of math nodes, row sep=\rowspacing, column sep=\colspacing, text height=0.8em,text depth=.2em] {
		A & B \\
		& \\
	};
	
	\draw (m-1-2.north) -- ++(0,1em);
	\draw[rounded corners=0.5em] (m-1-1) -- (m-2-1.north) -- (m-2-2.south) -- ++(0,-1em);
	\draw[rounded corners=0.5em] (m-1-2) -- (m-2-2.north) -- (m-2-1.south) -- ++(0,-1em) -- ++(-\loopwidth,0) |- ($(m-1-1.north) + (0,1em)$) -- ++ (m-1-1);
\end{tikzpicture}
\,=\,\,
\begin{tikzpicture}[baseline=(current bounding box.center)]
	\matrix (m) [matrix of math nodes, row sep=\rowspacing, column sep=\colspacing, text height=0.8em,text depth=.2em] {
		\phantom{B} & B \\
		\phantom{A} & A \\
	};
	
	\draw (m-2-2.south) -- ++(0,-1em);
	\draw (m-1-2.north) -- ++(0,1em);
	\draw[rounded corners] (m-1-2.south) -- (m-2-1.north) -- (m-2-1.south) -- ++(0,-1em) -- ++ (-1em,0) |- ($(m-1-1.north)+(0,1em)$) -- (m-1-1.south) -- (m-2-2.north);
\end{tikzpicture}
\,=\,
\begin{tikzpicture}[baseline=(current bounding box.center)]
	\matrix (m) [matrix of math nodes, row sep=\rowspacing, column sep=\colspacing, text height=0.8em,text depth=.2em] {
		B \\
		A \\
	};
	
	\draw (m-1-1) -- (m-2-1);
	\draw (m-2-1.south) -- ++(0,-1em);
	\draw (m-1-1.north) -- ++(0,1em);
\end{tikzpicture}
= \quad BA
\]
where we can push the $B$ tensor through the SWAP and around the trace loop to see that it is composed with $A$. In other words, we have just shown the identity $\Tr_1((A \otimes B) W_{(1 \, 2)}) = BA$.

As a generalization, we have the following useful fact:
\begin{fact}
    \label{fact:two}
    Let $n \geq 1$ and $\pi = (1 \, 2 \, \cdots \, n) \in \symm_n$. For any $A_1, \ldots, A_n$, we have
    $$
    \Tr_{-1}(W_{\pi}(A_1 \otimes A_2 \otimes \cdots \otimes A_n)) = A_n A_{n-1} \cdots A_1,
    $$
    where $\Tr_{-1}$ indicates the partial trace of all but the first qudit. Thus, $\Tr(W_{\pi}(A_1 \otimes A_2 \otimes \cdots \otimes A_n)) = \Tr(A_n A_{n-1} \cdots A_1)$.
\end{fact}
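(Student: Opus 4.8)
\emph{Proof plan.} The cleanest route is to mimic the worked example $\Tr_1((A\otimes B)W_{(1\,2)}) = BA$ appearing just above the statement, either (i) by a direct index computation or (ii) by the tensor-network topology argument. I will sketch the former as the primary route and indicate an inductive alternative at the end.

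\textbf{Direct computation.} First I would record the action of the cyclic permutation on a product basis. Since $\pi = (1\,2\,\cdots\,n) \in \symm_n$ has $\pi(i)=i+1$ for $i<n$ and $\pi(n)=1$, its inverse satisfies $\pi^{-1}(1)=n$ and $\pi^{-1}(j)=j-1$ for $j>1$, so by the definition of $W_\pi$ we have $W_\pi \lvert x_1\rangle\cdots\lvert x_n\rangle = \lvert x_n\rangle\lvert x_1\rangle\cdots\lvert x_{n-1}\rangle$. Applying $A_1\otimes\cdots\otimes A_n$ first and then $W_\pi$, I would compute the matrix entries of $M := W_\pi(A_1\otimes\cdots\otimes A_n)$ in this basis; the permuted kets force a cyclic shift of the output indices, giving $M_{(z_1,\ldots,z_n),(y_1,\ldots,y_n)} = (A_n)_{z_1,y_n}\prod_{k=1}^{n-1}(A_k)_{z_{k+1},y_k}$.

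Second, I would take the partial trace over qudits $2,\ldots,n$, i.e.\ set $z_j=y_j$ for $j\ge 2$ and sum. The factor $(A_1)_{z_2,y_1}$ retains the free indices $z_1,y_1$, while the remaining factors $(A_2)_{z_3,z_2},\ldots,(A_{n-1})_{z_n,z_{n-1}},(A_n)_{z_1,z_n}$ form a telescoping chain of index contractions. Reading these contractions off as matrix multiplications shows that the chain collapses to $(A_nA_{n-1}\cdots A_1)_{z_1,y_1}$, which is exactly the claim $\Tr_{-1}(M)=A_nA_{n-1}\cdots A_1$. The trace corollary then follows immediately from $\Tr(M)=\Tr(\Tr_{-1}(M))=\Tr(A_n\cdots A_1)$ (equivalently, by closing the last remaining loop in the network).

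\textbf{Alternative and obstacle.} An inductive proof is also available: the base case $n=1$ is trivial ($W_\pi=\iden$ and $\Tr_{-1}$ acts as the identity), the worked $n=2$ identity is the first nontrivial step, and the inductive step traces out a single qudit to merge two adjacent operators, reducing the $n$-cycle to an $(n-1)$-cycle. Pictorially the whole statement is transparent: the cyclic wiring together with the trace over qudits $2,\ldots,n$ splices the boxes $A_1,\ldots,A_n$ into a single connected path from the free input to the free output, and following that path composes them in reverse. The one point demanding care---and the only place an error is likely to creep in---is the bookkeeping of this reversal together with the $\pi$ versus $\pi^{-1}$ convention in the definition of $W_\pi$; getting the orientation right is what fixes the product as $A_n\cdots A_1$ rather than $A_1\cdots A_n$.
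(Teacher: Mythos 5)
Your proposal is correct, but it takes a genuinely different route from the paper. The paper's entire proof is pictorial: it draws the tensor network for $W_{\pi}(A_1 \otimes \cdots \otimes A_5)$ with the trace loops closed on qudits $2,\ldots,5$, and asserts that the identity "follows" from the topology of the diagram---essentially your option (ii), presented as a single representative example with the generalization left implicit. Your primary route is instead an explicit index computation, and the bookkeeping checks out: with the paper's convention $W_\pi\ket{x_1}\cdots\ket{x_n}=\ket{x_{\pi^{-1}(1)}}\cdots\ket{x_{\pi^{-1}(n)}}$ one indeed gets $M_{(z_1,\ldots,z_n),(y_1,\ldots,y_n)}=(A_n)_{z_1,y_n}\prod_{k=1}^{n-1}(A_k)_{z_{k+1},y_k}$, and setting $z_j=y_j$ for $j\ge 2$ and summing produces the contraction chain
\begin{equation*}
\sum_{z_2,\ldots,z_n}(A_n)_{z_1,z_n}(A_{n-1})_{z_n,z_{n-1}}\cdots(A_2)_{z_3,z_2}(A_1)_{z_2,y_1}=(A_nA_{n-1}\cdots A_1)_{z_1,y_1},
\end{equation*}
with the reversed order of the product emerging exactly as you flagged. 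What the two approaches buy is complementary: your index argument is airtight for arbitrary $n$ and makes the $\pi$-versus-$\pi^{-1}$ orientation issue explicit and checkable (a genuine service, since that is the only place a sign-of-direction error could hide), whereas the paper's diagram conveys in one picture \emph{why} the cyclic wiring splices the boxes into a single path, at the cost of asking the reader to trust the topology argument and extrapolate from $n=5$. Your inductive alternative is also sound (trace out one qudit to merge two adjacent boxes, reducing an $n$-cycle to an $(n-1)$-cycle), though as stated it is only a sketch; either of your first two routes suffices on its own.
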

\begin{proof}
The fact is best seen with a small example. When $n = 5$, for instance, the tensor network diagram is
\[
 	\begin{tikzpicture}[baseline=(current bounding box.center)]
		\matrix (m) [matrix of math nodes, row sep=\rowspacing, column sep=\colspacing, text height=0.8em,text depth=.2em] {
			& & & & \\
			A_1 & A_2 & A_3 & A_4 & A_5 \\
		};
		
		\draw[rounded corners=0.5em] ($(m-1-1.north) + (0,1em)$) -- (m-1-1.north) -- (m-1-5.south) -- (m-2-5.north);
		
		\draw[rounded corners=0.5em] (m-2-2.south) -- ++(0,-1em) -- ++(\loopwidth,0) |- ($(m-1-2.north) + (0,1em)$) -- (m-1-2.north) -- (m-1-1.south) -- (m-2-1.north);
		
		\draw[rounded corners=0.5em] (m-2-3.south) -- ++(0,-1em) -- ++(\loopwidth,0) |- ($(m-1-3.north) + (0,1em)$) -- (m-1-3.north) -- (m-1-2.south) -- (m-2-2.north);
		
		\draw[rounded corners=0.5em] (m-2-4.south) -- ++(0,-1em) -- ++(\loopwidth,0) |- ($(m-1-4.north) + (0,1em)$) -- (m-1-4.north) -- (m-1-3.south) -- (m-2-3.north);

		\draw[rounded corners=0.5em] (m-2-5.south) -- ++(0,-1em) -- ++(\loopwidth,0) |- ($(m-1-5.north) + (0,1em)$) -- (m-1-5.north) -- (m-1-4.south) -- (m-2-4.north);
		
		\draw (m-2-1.south) -- ++ (0,-1em);
	\end{tikzpicture}
\]
from which the identity follows.
\end{proof}


\section{Joint Measurement Upper Bound}
\label{sec:upper_bound}

The goal of this section is to prove the following upper bound for the sample complexity of classical shadows for pure states and joint measurements.
\begin{thm}
\label{thm:ps_jm_ub}
$$\task(B,\epsilon,\delta) = \bigo{\left(\frac{\sqrt{B}}{\epsilon}+ \frac{1}{\epsilon^2}\right)\log \frac{1}{\delta}}.$$ 
\end{thm}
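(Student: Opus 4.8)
The plan is to design a joint measurement that, on each batch of $s$ copies of $\rho=\ketbra{\psi}{\psi}$, implements the covariant POVM with elements $\{\kappa_s\,\ketbra{\phi}{\phi}^{\otimes s}\,\mathrm{d}\phi\}$ over Haar-random $\ket{\phi}$. This is a valid resolution of the identity on the symmetric subspace by \Cref{lem:rep_theory}, and since $\rho^{\otimes s}$ is supported there, it applies to our input. The outcome is a single vector $\ket{\phi}$ distributed with density $\kappa_s\,|\langle\phi|\psi\rangle|^{2s}$, concentrated near $\ket{\psi}$; the classical shadow is a finite-precision encoding of $\ket{\phi}$. First I would compute the first moment of the outcome. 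Writing $|\langle\phi|\psi\rangle|^{2s}=\langle\phi|^{\otimes s}\rho^{\otimes s}|\phi\rangle^{\otimes s}$ and using $\kappa_{s+1}\int\ketbra{\phi}{\phi}^{\otimes(s+1)}\mathrm{d}\phi=\sym^{(s+1)}$, the expectation becomes a partial trace of $(\rho^{\otimes s}\otimes\iden)\sym^{(s+1)}$, which \Cref{fact:two} evaluates (using $\Tr(\rho^k)=1$) to
$$\mathbb E_\phi\!\left[\ketbra{\phi}{\phi}\right]=\frac{s}{s+d}\rho+\frac{1}{s+d}\iden.$$
Inverting this linear relation yields the unbiased single-shot estimator $\hat\rho(\phi)=\frac{s+d}{s}\ketbra{\phi}{\phi}-\frac{1}{s}\iden$, so that $E:=\Tr(O\hat\rho(\phi))=\frac{s+d}{s}\langle\phi|O|\phi\rangle-\frac{1}{s}\Tr(O)$ satisfies $\mathbb E[E]=\Tr(O\rho)$ for every $O\in\obs(B)$.

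The crux is to bound the variance of this estimator. Since $E$ is an affine function of $\langle\phi|O|\phi\rangle$, we have $\Var(E)=\left(\tfrac{s+d}{s}\right)^2\Var_\phi(\langle\phi|O|\phi\rangle)$, and I would compute $\mathbb E_\phi[\langle\phi|O|\phi\rangle^2]=\Tr\!\big((O\otimes O)\,\mathbb E_\phi[\ketbra{\phi}{\phi}^{\otimes 2}]\big)$ using the same mechanism with $\sym^{(s+2)}$ in place of $\sym^{(s+1)}$. Expanding $\sym^{(s+2)}=\frac{1}{(s+2)!}\sum_{\pi}W_\pi$ and applying the cycle evaluation of \Cref{fact:two}, the sum reduces to a combinatorial average over $\pi\in\mathrm{S}_{s+2}$ of cycle traces built from the values $\Tr(O\rho)$, $\langle\psi|O^2|\psi\rangle$, $\Tr(O)$, and $\Tr(O^2)$. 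The target is to show that, uniformly in $s,d,B$,
$$\Var(E)=O\!\left(\frac{\langle\psi|O^2|\psi\rangle}{s}+\frac{\Tr(O^2)}{s^2}\right)=O\!\left(\frac{1}{s}+\frac{B}{s^2}\right),$$
where the first bound uses $\langle\psi|O^2|\psi\rangle\le\norm{O}^2=1$ and the second uses $\Tr(O^2)\le B$. Intuitively, the $1/s$ term comes from the first-order tilt of $\ket{\phi}$ away from $\ket{\psi}$ in directions carrying off-diagonal weight of $O$ (governed by $\langle\psi|O^2|\psi\rangle-\Tr(O\rho)^2\le 1$), while the $B/s^2$ term comes from the second-order deviation, where the amplification factor $\left(\tfrac{s+d}{s}\right)^2$ multiplies the $O(\Tr(O^2)/d^2)$ fluctuation of $O$ on the near-Haar-random residual direction.

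Finally, I would boost the confidence by median-of-means. Taking $s=\Theta(\sqrt B/\epsilon+1/\epsilon^2)$ makes $\Var(E)=O(\epsilon^2)$, so averaging $K=O(1)$ independent single-shot estimates brings the group variance below $\epsilon^2/3$; by Chebyshev each group mean lies within $\epsilon$ of $\Tr(O\rho)$ with probability at least $2/3$, and the median over $N=O(\log(1/\delta))$ groups lies within $\epsilon$ with probability at least $1-\delta$. The measurement algorithm records the $NK$ outcome vectors (rounded to $O(\log(d/\epsilon))$ bits per amplitude, which perturbs each $\langle\phi|O|\phi\rangle$ by $o(\epsilon)$ since $\norm{O}=1$), and the estimation algorithm computes the $E_j$, groups, averages, and takes the median. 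The total sample count is $NKs=O\!\big((\sqrt B/\epsilon+1/\epsilon^2)\log(1/\delta)\big)$, as claimed.

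The main obstacle is the variance bound of the second paragraph: carrying out the $\mathrm{S}_{s+2}$ permutation sum and collecting terms so that the $\langle\psi|O^2|\psi\rangle$ contribution is cleanly isolated at order $1/s$ from the $\Tr(O^2)$ contribution at order $1/s^2$, and verifying that the prefactor $\left(\tfrac{s+d}{s}\right)^2$ — which diverges when $s\ll d$ — is exactly compensated so that the bound holds in every regime of $s$, $d$, and $B$ rather than only for $s\gtrsim d$.
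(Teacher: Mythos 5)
Your setup coincides exactly with the paper's: the same covariant POVM $\{\kappa_s\ketbra{\phi}{\phi}^{\otimes s}\,\mathrm{d}\phi\}$ supported on the symmetric subspace, the same first moment $\E[\Psi]=(I+s\rho)/(d+s)$, the same unbiased estimator $\rhohat=\frac{(d+s)\Psi-I}{s}$, and the same median-based confidence boosting. However, there is a genuine gap: the variance bound $\Var(\Tr(O\rhohat))=O(1/s+B/s^2)$ --- which you yourself flag as ``the main obstacle'' --- is stated as a target rather than proven, and it is precisely the technical heart of the theorem; without it the choice $s=\Theta(\sqrt{B}/\epsilon+1/\epsilon^2)$ has no justification.

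The paper closes this gap with three ingredients you would still need to supply. First (\Cref{lem:second_moment}), the sum over $\mathrm{S}_{s+2}$ is organized not by raw cycle enumeration but by whether positions $1$ and $2$ lie in distinct cycles or the same cycle; the bijection $\pi\mapsto(1\,2)\pi$ (\Cref{fact:AB_bijection}) shows the latter class contributes exactly $W_{(1\,2)}$ times the former, giving the closed form $\E[\Psi\otimes\Psi]=\frac{2}{(d+s)(d+s+1)}\bigl((I+s\rho)^{\otimes2}-\frac{s(s+1)}{2}\rho^{\otimes2}\bigr)\sym^{(2)}$. Second (\Cref{cor:batch_estimate}), your worry about the divergent prefactor $\bigl(\frac{d+s}{s}\bigr)^2$ is resolved by exact cancellations: the scalar $\frac{2}{(d+s)(d+s+1)}$ in the second moment absorbs the $(d+s)^2$, and the terms of order $s^2\Tr(O\rho)^2$ in $\E[\Tr(O\Psi)^2]$ cancel against the squared first moment --- but only because purity gives $\Tr((O\rho)^2)=\Tr(O\rho)^2$, an identity your sketch never invokes. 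Third, and this is a subtlety your outline misses entirely: for observables with $\Tr(O)\neq 0$, the second moment contains terms $\Tr(O)^2$ and $s\Tr(O)\Tr(O\rho)$ that can be of order $d^2$, and these do not obviously cancel. The paper sidesteps this by observing that $\Tr(\rhohat)=1$ deterministically, so $\Var(\Tr(O\rhohat))=\Var(\Tr(O_0\rhohat))$ for the traceless part $O_0=O-\Tr(O)I/d$, combined with the bounds $\Tr(O_0^2)\le\Tr(O^2)$ and $\|O_0^2\|\le 4\|O^2\|$. (Relatedly, your sharper target form $O(\langle\psi|O^2|\psi\rangle/s+\Tr(O^2)/s^2)$ does not follow from this reduction, since $\langle\psi|O_0^2|\psi\rangle$ can be order $1$ even when $\langle\psi|O^2|\psi\rangle=0$; only the relaxed bound $O(1/s+B/s^2)$ is established.) Until you carry out this or an equivalent computation, the proposal is a correct plan but not a proof.
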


The proof of \Cref{thm:ps_jm_ub} is constructive; given $B$, $\epsilon$, $\delta$ and $d$, we specify the number of samples and a pair of algorithms $\mathcal A_{\mathrm{meas}}$ and $\mathcal A_{\mathrm{est}}$ that solve the classical shadows task with that many samples.

In brief, the construction is as follows. We give a measurement $\mathcal{M}_{s}$ on $s$ copies of $\rho$, where the outcome of the measurement is a classical description of a pure state $\Psi$. We apply an affine transformation to the outcome $\Psi$ to produce an unbiased ``shadow'' estimator: a unital Hermitian matrix $\rhohat$ such that $\E[\rhohat] = \rho$. 
Increasing the number of samples, $s$,  suppresses the additive error $\epsilon$ and the failure probability $\delta$ by a factor of $s^{-\bigo{1}}$. To improve this to an inverse exponential suppression in the failure probability,
we repeat the entire procedure $k = \mathcal O(\log(\delta^{-1}))$ times and take the median of the batch estimates akin to the \emph{median of means} method~\cite{lugosi2019mean,lerasle2019lecture,Huang2020}. A pseudocode description is given in Algorithms~\ref{alg:ps_jm_meas} and~\ref{alg:ps_jm_est}.

\begin{algorithm}[H]
\caption{Algorithm for $\mathcal{A}_{\mathrm{meas}}$ of  Theorem~\ref{thm:ps_jm_ub}}\label{alg:ps_jm_meas}
\begin{algorithmic}[1]
\algrenewcommand\algorithmicrequire{\textbf{Input:}}
\algrenewcommand\algorithmicensure{\textbf{Output:}}
\Require Quantum state $\rho^{\otimes N}$, $B$, $\epsilon$, $\delta$, $d$.
\Ensure Classical shadow $\set{\hat{\rho}^{(i)}}_{i \in [k]}$.
\Statex
\State{$s \gets \mathcal O(\sqrt{B} \epsilon^{-1} + \epsilon^{-2})$} \Comment{Samples per batch}
\State{$k \gets \lfloor{N/s}\rfloor$} \Comment{Number of batches}
\For{each batch $i = 1, \ldots, k$}
    \State{$\psi_i \gets $ Measure new batch of $\rho^{\otimes s}$ with $\mathcal{M}_s$}
    \State{$\rhohat^{(i)} \gets \frac{(d+s)\psi_i - I}{s}$}
\EndFor
\State{\Return $\set{\hat{\rho}^{(i)}}_{i \in [k]}$}
\end{algorithmic}
\end{algorithm}

\begin{algorithm}[H]
\caption{Algorithm for $\mathcal{A}_{\mathrm{est}}$ of Theorem~\ref{thm:ps_jm_ub} and \Cref{thm:ps_im_ub}}\label{alg:ps_jm_est}
\begin{algorithmic}[1]
\algrenewcommand\algorithmicrequire{\textbf{Input:}}
\algrenewcommand\algorithmicensure{\textbf{Output:}}
\Require Classical shadow $\set{\hat{\rho}^{(i)}}_{i \in [k]}$ and observable $O$. 
\State{$E \gets \operatorname{median}(\Tr(O \rhohat^{(1)}), \ldots, \Tr(O \rhohat^{(k)}))$}
\State{\Return $E$}
\end{algorithmic}
\end{algorithm}

We now define our measurement $\mathcal{M}_s$. 
\begin{defn}
\label{defn:measurement}
The \emph{standard symmetric joint measurement} is a measurement on $s$ qu\emph{d}its. It is defined by the POVM $\mathcal{M}_{s} = \{ A_{\psi} \}_{\psi} \cup \{ I - \sym^{(s)} \}$ with elements
$$
A_{\psi} := \kappa_s \ketbra{\psi}{\psi}^{\otimes s} \mathrm{d} \psi,
$$
for all $d$-dimensional pure states, proportional to the Haar measure, plus a ``fail'' outcome $I - \sym^{(s)}$ for non-symmetric states.
\end{defn}
We will be interested in the setting were $\rho$ is pure\footnote{This is the first time we use the purity of $\rho$ in our analysis, but certainly not the last.} and therefore $\rho^{\otimes s}$ is in the symmetric subspace, so we will never see the ``fail'' outcome---it exists solely to make the POVM sum/integrate to $I$.

One might be concerned that the standard symmetric joint measurement is constructed from the Haar measure, resulting in a continuum of outcomes. This is technically inconsistent with \cref{defn:classical_shadows_task} where the measurement must output a \emph{finite} length bit string. However, it will turn out that our analysis (c.f. \cref{thm:ps_jm_ub}) only requires the states that appear in the POVM to form an $(s+2)$-design, where $s$ is the number of samples jointly measured. That is, it suffices to replace the continuous POVM $\mathcal M_s$ with a finite POVM $\{\kappa_s p_i \ketbra{\psi_i}{\psi_i}^{\otimes s} \}_i \cup \{ I - \sym^{(s)} \}$ such that $\sum_i p_i \ketbra{\psi_i}{\psi_i}^{\otimes(s+2)} = \int_\psi \ketbra{\psi}{\psi}^{\otimes (s+2)} \mathrm{d} \psi$ where the $p_i \ge 0$ define a finite probability distribution. 

For some perspective, consider the independent measurement setting in which $s=1$. By the above observation, we require the measurement to form a 3-design. Since the set of multi-\emph{qubit} stabilizer states forms a 3-design, we recover the efficient measurement protocol of \cite{Huang2020}. That said, our measurements typically involve many copies of the state, resulting in a large $s$. In such cases, we must use much more complicated constructions of designs (see, e.g., \cite{bajnok1992construction, hayashi2005reexamination, bondarenko2013optimal}). Nevertheless, these constructions result in a finite POVM that can at least in principle be implemented with a projective measurement using $\mathsf{poly}(d, \log(1/\epsilon))$-many ancillas \cite{nielsen2010quantum}.

\subsection{Analysis}
After defining the measurement, the estimator, and how many samples we need, the only remaining technical component is to bound the probability of failure. This ultimately comes down to Chebyshev's inequality:
$$\Pr\left[|\Tr(O \rhohat) - \E[\Tr(O \rhohat)]| \geq \epsilon \right] \leq \frac{\Var(\Tr(O \rhohat))}{\epsilon^2}.$$
Hence, we need to calculate the mean and variance of the random variable $\Tr(O \rhohat)$. To be precise, let $\rho$ be a pure state and suppose we measure $\rho^{\otimes s}$ with the standard symmetric joint measurement $\mathcal{M}_{s}$. Let $\Psi$ be the density matrix random variable for $\ketbra{\psi}{\psi}$, where $\psi$ is the outcome of the measurement. Let's start with the mean:
\begin{lemma}[First moment]
\label{lem:first_moment}
For measurement $\mathcal M_s$ on pure state $\rho^{\otimes s}$, we have
\begin{align*}
    \E[\Psi] &= \frac{I + s\rho}{d+s}.
\end{align*}
\end{lemma}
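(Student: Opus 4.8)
The plan is to compute $\E[\Psi]$ directly from the outcome distribution of $\mathcal{M}_s$ and then reduce the resulting integral to a symmetric-subspace projector via Lemma~\ref{lem:rep_theory}. First I would record the outcome distribution: writing the pure state as $\rho = \ketbra{\phi}{\phi}$, the POVM element $A_\psi = \kappa_s \ketbra{\psi}{\psi}^{\otimes s}\,\dd\psi$ assigns the outcome $\psi$ the probability density $\tr{A_\psi \rho^{\otimes s}} = \kappa_s\langle\psi|\rho|\psi\rangle^{s} = \kappa_s|\langle\psi|\phi\rangle|^{2s}$. Hence $\E[\Psi] = \kappa_s\int_\psi \langle\psi|\rho|\psi\rangle^{s}\,\ketbra{\psi}{\psi}\,\dd\psi$, and the remaining task is a pure moment computation against the Haar measure.

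The key maneuver is to recognize the scalar weight $\langle\psi|\rho|\psi\rangle^{s}$ together with the outer $\ketbra{\psi}{\psi}$ as a partial trace of the $(s+1)$-st moment. Concretely, $\langle\psi|\rho|\psi\rangle^{s}\,\ketbra{\psi}{\psi} = \Tr_{1\ldots s}\!\left[(\rho^{\otimes s}\otimes I)\,\ketbra{\psi}{\psi}^{\otimes(s+1)}\right]$, where $\Tr_{1\ldots s}$ traces out the first $s$ qudits and keeps the last. Pulling the (linear) partial trace outside the integral and applying Lemma~\ref{lem:rep_theory} at order $s+1$ gives $\int_\psi \ketbra{\psi}{\psi}^{\otimes(s+1)}\,\dd\psi = \kappa_{s+1}^{-1}\sym^{(s+1)}$, so that $\E[\Psi] = \frac{\kappa_s}{\kappa_{s+1}}\Tr_{1\ldots s}\!\left[(\rho^{\otimes s}\otimes I)\,\sym^{(s+1)}\right]$. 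The prefactor is elementary from Fact~\ref{symmetric_subspace_size}: $\kappa_s/\kappa_{s+1} = \binom{s+d-1}{d-1}/\binom{s+d}{d-1} = (s+1)/(d+s)$.

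It remains to evaluate $\Tr_{1\ldots s}\!\left[(\rho^{\otimes s}\otimes I)\,\sym^{(s+1)}\right]$, which is where the purity of $\rho$ and the cycle structure of permutations do the work; I expect this to be the main obstacle. Expanding $\sym^{(s+1)} = \frac{1}{(s+1)!}\sum_{\pi\in\symm_{s+1}} W_\pi$ and using the tensor-network/cycle reasoning behind Fact~\ref{fact:two}, each term $\Tr_{1\ldots s}\!\left[(\rho^{\otimes s}\otimes I)\,W_\pi\right]$ factors over the cycles of $\pi$. A cycle lying entirely inside $\{1,\ldots,s\}$ closes into a loop contributing $\Tr(\rho^{\ell}) = \Tr(\rho) = 1$, using $\rho^{\ell} = \rho$ since $\rho$ is pure; the unique cycle through index $s+1$ leaves an open operator $\rho^{m-1}$ on the surviving qudit, which equals $I$ when $\pi(s+1)=s+1$ (cycle length $m=1$) and equals $\rho$ otherwise (again collapsing $\rho^{m-1}=\rho$ by purity). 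Thus each permutation contributes $I$ or $\rho$ according to whether it fixes $s+1$.

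Finally I would count: exactly $s!$ of the $(s+1)!$ permutations fix $s+1$, so $\Tr_{1\ldots s}\!\left[(\rho^{\otimes s}\otimes I)\,\sym^{(s+1)}\right] = \frac{1}{(s+1)!}\left(s!\,I + ((s+1)!-s!)\,\rho\right) = \frac{I + s\rho}{s+1}$. Multiplying by the prefactor $(s+1)/(d+s)$ yields $\E[\Psi] = (I+s\rho)/(d+s)$, as claimed. The only subtle point is the bookkeeping of cycle contributions, and in particular noting that purity is exactly what forces every $\rho^{\ell}$ to collapse to $\rho$ (or to $I$ in the fixed-point case), which is why the clean formula is special to pure $\rho$.
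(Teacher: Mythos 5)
Your proof is correct and follows essentially the same route as the paper's: express $\E[\Psi]$ as a Haar integral, lift it to the $(s+1)$-st moment via Lemma~\ref{lem:rep_theory}, and evaluate each $W_\pi$ term by cycle decomposition (purity collapsing $\rho^{\ell}$ to $\rho$), then count permutations fixing the distinguished index. The only cosmetic differences are that you place the identity slot at position $s+1$ rather than position $1$ and compute the ratio $\kappa_s/\kappa_{s+1} = (s+1)/(d+s)$ explicitly up front.
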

\begin{proof}
To start, let's express the expectation as a Haar integral using the definition of $\mathcal M_s$:
$$
\E[\Psi] = \int \psi \cdot \Pr[\Psi = \psi] 
= \int \psi \cdot \Tr(A_\psi \rho^{\otimes s})
= \kappa_{s} \int \psi \cdot \Tr( \psi^{\otimes s} \rho^{\otimes s}) \, \dd\psi.
$$
Using the identity $A \Tr(B) = \Tr_2 (A \otimes B)$ for all square matrices $A$ and $B$, we can apply \Cref{lem:rep_theory} to compute the integral above:
$$
\E[\Psi] = \kappa_{s} \int \Tr_{-1} \left( \psi^{\otimes s+1} \cdot (I \otimes \rho^{\otimes s}) \right) \, \dd \psi 
= \frac{\kappa_{s}}{\kappa_{s+1}} \frac{1}{(s+1)!} \sum_{\pi \in \symm_{s+1}} \Tr_{-1} (W_{\pi} (I \otimes \rho^{\otimes s})).
$$
We attack the right hand side by evaluating $\Tr_{-1}(W_{\pi}(I \otimes \rho^{\otimes s})$ for each $\pi$. 
In particular, we will show that
    $$
    \Tr_{-1}( W_{\pi} (I \otimes \rho^{\otimes s})) = \begin{cases}
    I, & \text{if $\pi(1) = 1$,} \\
    \rho, & \text{otherwise.}
    \end{cases}
    $$
    To do this, we take the cycle decomposition of $\pi$ and analyze each cycle separately. Notice that any cycle not involving position $1$ is completely traced out and the cycle operator acts on a tensor power of $\rho$ only, so Fact~\ref{fact:two} says the trace is $\Tr( \rho^{k} ) = \Tr( \rho ) = 1$ (since $\rho$ is pure). Thus, only the cycle through position $1$ matters. If $\pi(1) = 1$, then this cycle is trivial, and the result is $I$. Otherwise, the cycle visits $k \geq 1$ copies of $\rho$, leading to the product $\rho^k = \rho$. 
    
    There are $s!$ permutations which fix $1$ (i.e., $\pi(1) = 1$) and hence $s \cdot s!$ which do not, so we conclude that
    $$
    \E[\Psi] = \frac{\kappa_{s}}{\kappa_{s+1}} \frac{1}{(s+1)!} \sum_{\pi \in \symm_{s+1}} \Tr_{-1} (W_{\pi} (I \otimes \rho^{\otimes s})) = \frac{s! \cdot I + s \cdot s! \cdot \rho}{(d+s) s!} = \frac{I + s \rho}{d+s}. 
    $$
    Notice that $\Tr( \E[\Psi] ) = \frac{\Tr(I) + s \Tr(\rho)}{d+s} = 1 = \E[\Tr(\Psi)]$, as a sanity check.
\end{proof}

We now turn to the variance calculation, which depends on the second moment of the estimator:
\begin{lemma}[Second moment]
    \label{lem:second_moment}
For measurement $\mathcal M_s$ on pure state $\rho^{\otimes s}$, we have
    \begin{align*}
        \E[\Psi \otimes \Psi] &= \frac{2}{(d+s)(d+s+1)} \left( (I + s \rho)^{\otimes 2} - \frac{s(s+1)}{2} (\rho \otimes \rho) \right) \sym^{(2)}
    \end{align*}
\end{lemma}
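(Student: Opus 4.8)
The plan is to mirror the proof of \Cref{lem:first_moment}, now carrying two kept qudits instead of one. First I would write the second moment as a Haar integral using the definition of $\mathcal M_s$,
\[
\E[\Psi \otimes \Psi] = \kappa_s \int (\psi \otimes \psi)\,\Tr(\psi^{\otimes s}\rho^{\otimes s})\,\dd\psi,
\]
and fold the scalar trace into a partial trace via the identity $A\,\Tr(B) = \Tr_{-\{1,2\}}(A \otimes B)$, so that the integrand becomes $\psi^{\otimes (s+2)}\,(I^{\otimes 2} \otimes \rho^{\otimes s})$ with positions $1,2$ carrying $I$ and positions $3,\dots,s+2$ carrying $\rho$. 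Integrating with \Cref{lem:rep_theory} replaces $\kappa_{s+2}\int \psi^{\otimes(s+2)}\,\dd\psi$ by $\sym^{(s+2)} = \frac{1}{(s+2)!}\sum_{\pi} W_\pi$, giving
\[
\E[\Psi \otimes \Psi] = \frac{\kappa_s}{\kappa_{s+2}}\,\frac{1}{(s+2)!} \sum_{\pi \in \symm_{s+2}} \Tr_{-\{1,2\}}\big(W_\pi (I^{\otimes 2} \otimes \rho^{\otimes s})\big),
\]
where, using \Cref{symmetric_subspace_size}, the prefactor simplifies to $\frac{\kappa_s}{\kappa_{s+2}(s+2)!} = \frac{1}{(d+s)(d+s+1)\,s!}$.

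The main step is to evaluate $\Tr_{-\{1,2\}}(W_\pi (I^{\otimes 2} \otimes \rho^{\otimes s}))$ for each $\pi$ by cycle decomposition. As in \Cref{lem:first_moment}, any cycle avoiding positions $1$ and $2$ is entirely traced out and contracts a power of the pure $\rho$, contributing a harmless factor $\Tr(\rho^k)=1$; so only the cycle(s) through positions $1,2$ survive. The genuinely new feature, absent from the first moment, is that positions $1$ and $2$ may lie in the \emph{same} cycle, in which case the surviving operator is no longer a tensor product but is dressed by a swap. A short tensor-network computation in the style of \Cref{fact:two} shows that a cycle through both $1$ and $2$ evaluates to $(X_1 \otimes X_2)\,W_{(1\,2)}$, where $X_2 = \rho$ iff the arc of the cycle leading from position $1$ to position $2$ passes through at least one $\rho$-carrying position (and $X_2 = I$ otherwise), and symmetrically for $X_1$ with the arc from $2$ back to $1$. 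When instead $1$ and $2$ lie in distinct cycles, each cycle collapses to $I$ or $\rho$ exactly as before, yielding the tensor product $X_1 \otimes X_2$.

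Finally I would tally the multiplicities of the seven resulting operators $I^{\otimes 2}$, $I\otimes\rho$, $\rho\otimes I$, $\rho\otimes\rho$, $W_{(1\,2)}$, $(I\otimes\rho)W_{(1\,2)}$, and $(\rho\otimes I)W_{(1\,2)}$. Classifying by whether each of positions $1,2$ is a fixed point, lies in a distinct nontrivial cycle, or shares a cycle with the other, the counts reduce to elementary sums over the lengths of the $\rho$-arcs, each telescoping to a multiple of $s!$; dividing by $s!$ gives small polynomial coefficients in $s$. Reassembling and using $\sym^{(2)} = \tfrac12(I + W_{(1\,2)})$ together with $(\rho\otimes\rho)\,\sym^{(2)} = \rho\otimes\rho$, I expect the linear combination to collapse to $\frac{2}{(d+s)(d+s+1)}\big((I+s\rho)^{\otimes 2} - \tfrac{s(s+1)}{2}(\rho\otimes\rho)\big)\sym^{(2)}$, as claimed. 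The main obstacle is precisely the same-cycle case: correctly identifying the swap-dressed operators and their dependence on which arc contains a $\rho$, and then bookkeeping the combinatorial multiplicities so that the cross terms recombine into the compact symmetrized form.
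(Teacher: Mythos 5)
Your proposal is correct in substance and would yield the lemma: the setup, the prefactor $\kappa_s/(\kappa_{s+2}(s+2)!) = 1/((d+s)(d+s+1)\,s!)$, and the evaluation rule for a cycle containing both kept positions (a swap dressed by $I$ or $\rho$ on each tensor factor according to whether the corresponding arc passes through a $\rho$-carrying position) are all right, and the arc-counting does collapse to multiples of $s!$ exactly as you anticipate. The difference from the paper is in how the same-cycle (``type B'') permutations are handled, which is precisely the step you flag as the main obstacle. The paper sidesteps it entirely: by the bijection $\pi \mapsto (1\,2)\pi$ between permutations where $1,2$ lie in distinct cycles and those where they share a cycle, every type B term equals $W_{(1\,2)}$ times a type A term, so the whole type B contribution is $W_{(1\,2)}$ times the type A total and the factor $W_{(1)(2)} + W_{(1\,2)} = 2\sym^{(2)}$ appears for free, with no counting of arc structures needed. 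Your direct counting reproduces the same numbers ($s!$ for the empty-arc cycle $(1\,2)$, $s\cdot s!$ for each one-sided nonempty arc, $\tfrac{s(s-1)}{2}s!$ for both arcs nonempty), mirroring the type A counts --- this mirroring is not a coincidence but exactly the paper's bijection in disguise, so your route costs more bookkeeping for the same payoff. One caveat: your list of ``seven resulting operators'' omits the eighth case, $(\rho\otimes\rho)W_{(1\,2)}$, arising when \emph{both} arcs contain a $\rho$-position; this case accounts for $\tfrac{s(s-1)}{2}s!$ permutations and cannot be dropped. Since $\rho$ is pure, $(\rho\otimes\rho)W_{(1\,2)} = \rho\otimes\rho$, so the tally still closes and the final factorization into $\bigl((I+s\rho)^{\otimes 2} - \tfrac{s(s+1)}{2}\rho^{\otimes 2}\bigr)\,2\sym^{(2)}$ goes through, but this identity (or the explicit eighth term) must appear in the count or the coefficient of $\rho\otimes\rho$ comes out wrong.
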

\begin{proof}
    As in \Cref{lem:first_moment}, we evaluate $\E[\Psi \otimes \Psi]$ as
    \begin{align*}
    \E[\Psi \otimes \Psi] 
    &= \int (\psi \otimes \psi) \cdot \Pr[\Psi = \psi] \\
    &= \int (\psi \otimes \psi) \cdot \kappa_{s} \Tr( \psi^{\otimes s} \rho^{\otimes s}) \, \dd \psi \\
    &= \kappa_{s} \int \Tr_{-1,2} \left( \psi^{\otimes s+2} \cdot (I^{\otimes 2} \otimes \rho^{\otimes s}) \right) \, \dd \psi \\
    &= \frac{\kappa_{s}}{\kappa_{s+2}} \frac{1}{(s+2)!} \sum_{\pi \in \symm_{s+2}} \Tr_{-1,2} (W_{\pi} (I^{\otimes 2} \otimes \rho^{\otimes s})),
    \end{align*}
    where the partial trace $\Tr_{-1,2}$ now preserves the first two qudits. In \Cref{fig:second_moment} and for the special case of $s=1$, we show a complete derivation of how this trace simplifies using the tensor network notation, which may be useful to some readers before proceeding to the more general proof.

    Let us evaluate the sum term-by-term. We divide the permutations into two types: those where $1$ and $2$ appear in separate cycles (type A) and those where $1$ and $2$ appear in the same cycle (type B). Consider the type A permutations first:
    $$
    \Tr_{-1,2}(W_{\pi} (I^{\otimes 2} \otimes \rho^{\otimes s})) = \begin{cases}
    I \otimes I & \text{if $\pi(1) = 1$ and $\pi(2) = 2$} \\
    I \otimes \rho & \text{if $\pi(1) = 1$ and $\pi(2) \neq 2$} \\
    \rho \otimes I & \text{if $\pi(1) \neq 1$ and $\pi(2) = 2$} \\
    \rho \otimes \rho & \text{if $\pi(1) \neq 1$ and $\pi(2) \neq 2$} 
    \end{cases}
    $$
    \begin{figure}[h]
        \centering
        \def\rowspacing{0.3em}
\def\colspacing{.1em}
\def\roundness{0.3em}
\def\overdist{0.5em}
\begin{flalign*}
&\mathbb{E}[\Psi \otimes \Psi] = d \cdot \mathbb{E}_{\psi \sim \text{Haar}} \left[ \psi \otimes \psi  \Tr(\rho \psi) \right]&
\end{flalign*}
\vspace{-1.0cm}
\begin{eqnarray*}
&=& 
d  \cdot \mathbb{E}_{\psi \sim \text{Haar}} \left[
\begin{tikzpicture}[baseline=(current bounding box.center)]
	\matrix (m) [matrix of math nodes, row sep=\rowspacing, column sep=\colspacing, text height=0.8em,text depth=.2em] {
		& & \rho \\
		\psi & \psi & \psi \\
	};
	
	\draw[rounded corners=\roundness] ($(m-1-1.north) + (0,\overdist)$) -- (m-2-1);
	\draw (m-2-1.south) -- ++(0,-\overdist);
	\draw[rounded corners=\roundness] ($(m-1-2.north) + (0,\overdist)$) -- (m-2-2);
	\draw (m-2-2.south) -- ++(0,-\overdist);
	\draw (m-1-3) -- (m-2-3);
	\draw[rounded corners=\roundness] (m-2-3.south) -- ++(0,-\overdist) -- ++(\loopwidth,0) |-  ($(m-1-3.north) + (0,\overdist)$) -- (m-1-3);
\end{tikzpicture}\,\,\,
\right] \\
&=& \frac{1}{(d+1)(d+2)} \sum_{\pi \in \mathrm{S}_3} \left[
\begin{tikzpicture}[baseline=(current bounding box.center)]
	\matrix (m) [matrix of math nodes, row sep=\rowspacing, column sep=\colspacing, text height=0.8em,text depth=.2em] {
		 & & \rho \\
		\phantom{\psi} & \phantom{\psi} & \phantom{\psi} \\
	};
	\node[draw, rectangle, fit=(m-2-1)(m-2-3), inner sep=0] (W) {};
	\node at (W.center) {$W_{\pi}$};
	
	\draw (m-2-1.north) -- (m-1-1.north) -- ++(0,\overdist);
	\draw (m-2-1.south) -- ++(0,-\overdist);
	\draw (m-2-2.north) -- (m-1-2.north) -- ++(0,\overdist);
	\draw (m-2-2.south) -- ++(0,-\overdist);
	
	\draw (m-1-3) -- (m-2-3.north);
	\draw[rounded corners=\roundness] (m-2-3.south) -- ++(0,-\overdist) -- ++(\loopwidth,0) |- ($(m-1-3.north) + (0,\overdist)$) -- ++ (m-1-3);
\end{tikzpicture} \,\, \right] \\
&=& 
\frac{1}{(d+1)(d+2)}
\left[
\begin{tikzpicture}[baseline=(current bounding box.center)]
	\matrix (m) [matrix of math nodes, row sep=\rowspacing, column sep=\colspacing, text height=0.8em,text depth=.2em] {
		& & \rho \\
		\phantom{\psi} & \phantom{\psi} & \phantom{\psi} \\
	};
	
	\draw[rounded corners=\roundness] ($(m-1-1.north) + (0,\overdist)$) -- (m-2-1.north) -- (m-2-1.south) -- ++(0,-\overdist);
	\draw[rounded corners=\roundness] ($(m-1-2.north) + (0,\overdist)$) -- (m-2-2.north) --
	(m-2-2.south) -- ++(0,-\overdist);
	\draw[rounded corners=\roundness]  (m-1-3) -- (m-2-3.north) -- (m-2-3.south) -- ++(0,-\overdist) -- ++(\loopwidth,0) |-  ($(m-1-3.north) + (0,\overdist)$) -- (m-1-3);
\end{tikzpicture}
+ \!\!\!\!\!\!
\begin{tikzpicture}[baseline=(current bounding box.center)]
	\matrix (m) [matrix of math nodes, row sep=\rowspacing, column sep=\colspacing, text height=0.8em,text depth=.2em] {
		& & \rho \\
		\phantom{\psi} & \phantom{\psi} & \phantom{\psi} \\
	};
	
	\draw[rounded corners=\roundness] (m-1-3) -- (m-2-3.north) -- (m-2-1.south) -- ++(0,-\overdist);
	\draw[rounded corners=\roundness] ($(m-1-2.north) + (0,\overdist)$) -- (m-2-2.north) --
	(m-2-2.south) -- ++(0,-\overdist);
	\draw[rounded corners=\roundness] ($(m-1-1.north) + (0,\overdist)$) -- (m-2-1.north) -- (m-2-3.south) -- ++(0,-\overdist) -- ++(\loopwidth,0) |-  ($(m-1-3.north) + (0,\overdist)$) -- (m-1-3);
\end{tikzpicture}
+ \!\!\!\!\!\! 
\begin{tikzpicture}[baseline=(current bounding box.center)]
	\matrix (m) [matrix of math nodes, row sep=\rowspacing, column sep=\colspacing, text height=0.8em,text depth=.2em] {
		& & \rho \\
		\phantom{\psi} & \phantom{\psi} & \phantom{\psi} \\
	};
	
	\draw[rounded corners=\roundness] (m-1-3) -- (m-2-3.north) -- (m-2-2.south) -- ++(0,-\overdist);
	\draw[rounded corners=\roundness] ($(m-1-1.north) + (0,\overdist)$) -- (m-2-1.north) --
	(m-2-1.south) -- ++(0,-\overdist);
	\draw[rounded corners=\roundness] ($(m-1-2.north) + (0,\overdist)$) -- (m-2-2.north) -- (m-2-3.south) -- ++(0,-\overdist) -- ++(\loopwidth,0) |-  ($(m-1-3.north) + (0,\overdist)$) -- (m-1-3);
\end{tikzpicture}
+\!\!\!\!\!\!
\begin{tikzpicture}[baseline=(current bounding box.center)]
	\matrix (m) [matrix of math nodes, row sep=\rowspacing, column sep=\colspacing, text height=0.8em,text depth=.2em] {
		& & \rho \\
		\phantom{\psi} & \phantom{\psi} & \phantom{\psi} \\
	};
	
	\draw[rounded corners=\roundness] ($(m-1-1.north) + (0,\overdist)$) -- (m-2-1.north) -- (m-2-2.south) -- ++(0,-\overdist);
	\draw[rounded corners=\roundness] ($(m-1-2.north) + (0,\overdist)$) -- (m-2-2.north) --
	(m-2-1.south) -- ++(0,-\overdist);
	\draw[rounded corners=\roundness]  (m-1-3) -- (m-2-3.north) -- (m-2-3.south) -- ++(0,-\overdist) -- ++(\loopwidth,0) |-  ($(m-1-3.north) + (0,\overdist)$) -- (m-1-3);
\end{tikzpicture}
+\!\!\!\!\!\!
\begin{tikzpicture}[baseline=(current bounding box.center)]
	\matrix (m) [matrix of math nodes, row sep=\rowspacing, column sep=\colspacing, text height=0.8em,text depth=.2em] {
		& & \rho \\
		\phantom{\psi} & \phantom{\psi} & \phantom{\psi} \\
	};
	
	\draw[rounded corners=\roundness] (m-1-3) -- (m-2-3.north) -- (m-2-1.south) -- ++(0,-\overdist);
	\draw[rounded corners=\roundness] ($(m-1-1.north) + (0,\overdist)$) -- (m-2-1.north) --
	(m-2-2.south) -- ++(0,-\overdist);
	\draw[rounded corners=\roundness] ($(m-1-2.north) + (0,\overdist)$) -- (m-2-2.north) -- (m-2-3.south) -- ++(0,-\overdist) -- ++(\loopwidth,0) |-  ($(m-1-3.north) + (0,\overdist)$) -- (m-1-3);
\end{tikzpicture}
+\!\!\!\!\!\!
\begin{tikzpicture}[baseline=(current bounding box.center)]
	\matrix (m) [matrix of math nodes, row sep=\rowspacing, column sep=\colspacing, text height=0.8em,text depth=.2em] {
		& & \rho \\
		\phantom{\psi} & \phantom{\psi} & \phantom{\psi} \\
	};
	
	\draw[rounded corners=\roundness] (m-1-3) -- (m-2-3.north) -- (m-2-2.south) -- ++(0,-\overdist);
	\draw[rounded corners=\roundness] ($(m-1-2.north) + (0,\overdist)$) -- (m-2-2.north) --
	(m-2-1.south) -- ++(0,-\overdist);
	\draw[rounded corners=\roundness] ($(m-1-1.north) + (0,\overdist)$) -- (m-2-1.north) -- (m-2-3.south) -- ++(0,-\overdist) -- ++(\loopwidth,0) |-  ($(m-1-3.north) + (0,\overdist)$) -- (m-1-3);
\end{tikzpicture} \,\,\,\right] \\
&=& 
\frac{1}{(d+1)(d+2)}
\left[
\begin{tikzpicture}[baseline=(current bounding box.center)]
	\matrix (m) [matrix of math nodes, row sep=\rowspacing, column sep=0.8em, text height=0.8em,text depth=.2em] {
		& \\
		\phantom{\psi} & \phantom{\psi} \\
	};
	
	\draw[rounded corners=\roundness] ($(m-1-1.north) + (0,\overdist)$) -- (m-2-1.north) -- (m-2-1.south) -- ++(0,-\overdist);
	\draw[rounded corners=\roundness] ($(m-1-2.north) + (0,\overdist)$) -- (m-2-2.north) --
	(m-2-2.south) -- ++(0,-\overdist);
\end{tikzpicture}
+
\begin{tikzpicture}[baseline=(current bounding box.center)]
	\matrix (m) [matrix of math nodes, row sep=\rowspacing, column sep=0.8em, text height=0.8em,text depth=.2em] {
		\rho & \\
		\phantom{\psi} & \phantom{\psi} \\
	};
	
	\draw[rounded corners=\roundness] ($(m-1-1.north) + (0,\overdist)$) -- (m-1-1);
	\draw[rounded corners=\roundness] (m-1-1.south) -- (m-2-1.south) -- ++(0,-\overdist);
	\draw[rounded corners=\roundness] ($(m-1-2.north) + (0,\overdist)$) -- (m-2-2.north) --
	(m-2-2.south) -- ++(0,-\overdist);
\end{tikzpicture}
+
\begin{tikzpicture}[baseline=(current bounding box.center)]
	\matrix (m) [matrix of math nodes, row sep=\rowspacing, column sep=0.8em, text height=0.8em,text depth=.2em] {
		& \rho \\
		\phantom{\psi} & \phantom{\psi} \\
	};
	
	\draw[rounded corners=\roundness] ($(m-1-1.north) + (0,\overdist)$) -- (m-2-1.north) -- (m-2-1.south) -- ++(0,-\overdist);
	\draw[rounded corners=\roundness] ($(m-1-2.north) + (0,\overdist)$) -- (m-1-2);
	\draw[rounded corners=\roundness] (m-1-2) -- (m-2-2.south) -- ++(0,-\overdist);
\end{tikzpicture}
+
\begin{tikzpicture}[baseline=(current bounding box.center)]
	\matrix (m) [matrix of math nodes, row sep=\rowspacing, column sep=0.8em, text height=0.8em,text depth=.2em] {
		& \\
		\phantom{\psi} & \phantom{\psi} \\
	};
	
	\draw[rounded corners=\roundness] ($(m-1-1.north) + (0,\overdist)$) -- (m-2-1.north) -- (m-2-2.south) -- ++(0,-\overdist);
	\draw[rounded corners=\roundness] ($(m-1-2.north) + (0,\overdist)$) -- (m-2-2.north) --
	(m-2-1.south) -- ++(0,-\overdist);
\end{tikzpicture} +
\begin{tikzpicture}[baseline=(current bounding box.center)]
	\matrix (m) [matrix of math nodes, row sep=\rowspacing, column sep=0.8em, text height=0.8em,text depth=.2em] {
		\rho & \\
		\phantom{\psi} & \phantom{\psi} \\
	};
	
	\draw[rounded corners=\roundness] ($(m-1-1.north) + (0,\overdist)$) -- (m-1-1);
	\draw[rounded corners=\roundness] (m-1-1) -- (m-2-1.north) -- (m-2-2.south) -- ++(0,-\overdist);
	\draw[rounded corners=\roundness] ($(m-1-2.north) + (0,\overdist)$) -- (m-2-2.north) --
	(m-2-1.south) -- ++(0,-\overdist);
\end{tikzpicture} +
\begin{tikzpicture}[baseline=(current bounding box.center)]
	\matrix (m) [matrix of math nodes, row sep=\rowspacing, column sep=0.8em, text height=0.8em,text depth=.2em] {
		& \rho \\
		\phantom{\psi} & \phantom{\psi} \\
	};
	
	\draw[rounded corners=\roundness] ($(m-1-1.north) + (0,\overdist)$) -- (m-2-1.north) -- (m-2-2.south) -- ++(0,-\overdist);
	\draw[rounded corners=\roundness] ($(m-1-2.north) + (0,\overdist)$) -- (m-1-2);
	\draw[rounded corners=\roundness] (m-1-2) -- (m-2-2.north) --
	(m-2-1.south) -- ++(0,-\overdist);
\end{tikzpicture} \right] \\
&=& \frac{(I \otimes I + \rho \otimes I + I \otimes \rho)(W_{(1)(2)} + W_{(1\,2)})}{(d+1)(d+2)}
\end{eqnarray*}
        \caption{Second moment calculation in the special case $s = 1$.}
        \label{fig:second_moment}
    \end{figure}
    As before, we end up getting $I$ or $\rho$ for each position, depending on whether $1$ and $2$ were fixed by the permutation. The combinatorics is similar but not identical: there are $(s+1)!$ permutations which fix $1$, and of those, $s!$ fix $2$ and $s \cdot s!$ do not. Likewise, $s \cdot s!$ fix $2$ but not $1$. The remainder of the type A permutations fix neither $1$ nor $2$, and to count these we need a fact.
    \begin{fact}
    \label{fact:AB_bijection}
        Let $\pi \in \symm_n$ be a permutation, and consider $\pi' := (1 \, 2) \pi$. Then 
        $$
        \text{$1$ and $2$ are in distinct cycles of $\pi$} \iff \text{$1$ and $2$ are in the same cycle of $\pi'$}.
        $$
    \end{fact}
    In other words, there is a bijection between A and B permutations, so exactly half of all permutations ($\tfrac{1}{2} (s+2)!$) are type A, and half are type B. It follows that there are $\frac{s(s-1)}{2} \cdot s!$ type A permutations such that $\pi(1) \neq 1$ and $\pi(2) \neq 2$. Therefore, the overall contribution of the type A permutations is equal to $s!$ times
    $$
    (I \otimes I) + s(I \otimes \rho) + s(\rho \otimes I) + \frac{s(s-1)}{2} (\rho \otimes \rho) = (I + s \rho)^{\otimes 2} - \frac{s(s+1)}{2} \rho^{\otimes 2}.
    $$
    Fortunately, we do not have to repeat this counting argument for the type B permutations. The bijection from \Cref{fact:AB_bijection} decomposes each type B permutation $\pi$ as $(1\,2)\pi'$ where $\pi'$ is type A, and so 
    $$
    \Tr_{-1,2}(W_{\pi}(I \otimes \rho^{\otimes s})) = \Tr_{-1,2}(W_{(1\,2)}W_{\pi'}(I \otimes \rho^{\otimes s})) = W_{(1\,2)} \Tr_{-1,2}(W_{\pi'}(I \otimes \rho^{\otimes s})).
    $$
    Therefore, we can multiply our result for type A permutations by $W_{(1)(2)} + W_{(1\,2)} = 2 \sym^{(2)}$ to get the total. The result follows from some careful accounting of the scalar factors.
\end{proof}

\begin{cor}
    \label{cor:batch_estimate}
    Let $\rhohat = \frac{(d+s) \Psi - I}{s}$. For any observable $O \in \obs(B)$ and $\epsilon > 0$, we bound the probability of failure as 
    $$
    \Pr[|\Tr(O \rhohat) - \Tr(O \rho)| \geq \epsilon] \leq \frac{1}{\epsilon^2 s^2} \left[ \Tr( O^2 ) + 8s \Tr(O^2 \rho) \right].
    $$
\end{cor}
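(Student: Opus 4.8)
The plan is to apply Chebyshev's inequality to the real-valued random variable $X := \Tr(O\rhohat)$, exactly as foreshadowed just before the corollary. The first step is to verify that $X$ is unbiased. Since $\rhohat = \frac{(d+s)\Psi - I}{s}$ is affine in $\Psi$, linearity of expectation together with \Cref{lem:first_moment} gives $\E[\rhohat] = \tfrac{1}{s}\big((d+s)\,\E[\Psi] - I\big) = \tfrac{1}{s}\big((I + s\rho) - I\big) = \rho$, so that $\E[X] = \Tr(O\,\E[\rhohat]) = \Tr(O\rho)$. Hence the event in the corollary is precisely $\{\,|X - \E[X]| \ge \epsilon\,\}$, and Chebyshev's inequality reduces everything to the variance estimate $\Var(X) \le \tfrac{1}{s^2}\left(\Tr(O^2) + 8s\Tr(O^2\rho)\right)$.

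Next I would compute $\E[X^2]$. Writing $X^2 = \Tr\!\big((O\otimes O)(\rhohat\otimes\rhohat)\big)$ and expanding $\rhohat\otimes\rhohat = \tfrac{1}{s^2}\big((d+s)^2\,\Psi\otimes\Psi - (d+s)(\Psi\otimes I + I\otimes\Psi) + I\otimes I\big)$, I take expectations term by term and substitute \Cref{lem:first_moment} and \Cref{lem:second_moment} to put $\E[\rhohat\otimes\rhohat]$ in closed form. To contract against $O\otimes O$ I would use $\sym^{(2)} = \tfrac12\big(I + W_{(1\,2)}\big)$ together with the $n=2$ instance of \Cref{fact:two}, i.e.\ $\Tr\!\big((A\otimes B)W_{(1\,2)}\big) = \Tr(AB)$. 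Applied to the building blocks $I\otimes I$, $\rho\otimes I$, $I\otimes\rho$, and $\rho\otimes\rho$, this converts every tensor-network trace into ordinary traces such as $\Tr(O)^2$, $\Tr(O)\Tr(O\rho)$, $\Tr(O^2)$, $\Tr(O^2\rho)$, and $\Tr(O\rho O\rho)$. Purity of $\rho$ enters here decisively: for $\rho = \ketbra{\phi}{\phi}$ one has $\Tr(O\rho O\rho) = \Tr(O\rho)^2$, collapsing the $\rho\otimes\rho$ terms.

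Collecting the scalar factors---bookkeeping that mirrors the type A/type B counting in \Cref{lem:second_moment}---the leading part of $s^2\,\E[X^2]$ comes out to $\Tr(O^2) + 2s\Tr(O^2\rho) + s(s-1)\Tr(O\rho)^2$. Subtracting $s^2(\E[X])^2 = s^2\Tr(O\rho)^2$ then makes the dominant part of $s^2\Var(X)$ equal to $\Tr(O^2) + 2s\Tr(O^2\rho) - s\Tr(O\rho)^2$. A reassuring feature is that the potentially large $\Tr(O)^2$ and $\Tr(O)\Tr(O\rho)$ terms cancel at this leading order and hence never reach the final bound; what remains is a correction of order $\tfrac{1}{d+s+1}$, which I would write as $-\tfrac{2}{d+s+1}\,Q$ with $Q := \tfrac12\Tr(O)^2 + \tfrac12\Tr(O^2) + s\Tr(O\rho)\Tr(O) + s\Tr(O^2\rho) + \tfrac{s(s-1)}{2}\Tr(O\rho)^2$.

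The crux---and the step I expect to be the main obstacle---is controlling this correction with the correct sign, since a mishandled $\tfrac{1}{d+s}$ term could reintroduce a dependence on $\Tr(O)$. I would argue $Q \ge 0$, after which $-\tfrac{2}{d+s+1}Q \le 0$ is simply dropped. The sole indefinite term in $Q$ is the cross term $s\Tr(O\rho)\Tr(O)$; completing the square in $\Tr(O)$ yields $\tfrac12\Tr(O)^2 + s\Tr(O\rho)\Tr(O) + \tfrac{s(s-1)}{2}\Tr(O\rho)^2 \ge -\tfrac{s}{2}\Tr(O\rho)^2$, and then the purity inequality $\Tr(O\rho)^2 \le \Tr(O^2\rho)$---which is just $\bra{\phi}O\ket{\phi}^2 \le \bra{\phi}O^2\ket{\phi}$ by Cauchy--Schwarz for $\rho = \ketbra{\phi}{\phi}$---absorbs this deficit into the nonnegative leftovers $\tfrac12\Tr(O^2) + s\Tr(O^2\rho)$, establishing $Q \ge 0$. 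Dropping the correction leaves $s^2\Var(X) \le \Tr(O^2) + 2s\Tr(O^2\rho)$, comfortably within the claimed $\Tr(O^2) + 8s\Tr(O^2\rho)$, the looser constant leaving slack for cruder intermediate estimates if desired. Dividing by $s^2\epsilon^2$ and applying Chebyshev's inequality then gives the corollary.
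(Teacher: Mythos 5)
Your proof is correct, and while it shares the unavoidable skeleton with the paper's argument (Chebyshev plus the first and second moment lemmas, \Cref{lem:first_moment} and \Cref{lem:second_moment}, contracted against $O \otimes O$ via $\sym^{(2)} = \tfrac12(I + W_{(1\,2)})$), it handles the one genuine difficulty---the $\Tr(O)$-dependent terms---by a different route. The paper first restricts to traceless $O$, where positivity of the second-moment expression is obvious so the prefactor $\tfrac{2(d+s)}{d+s+1} \le 2$ can be dropped, and then treats general $O$ by centering, $O_0 := O - \Tr(O)I/d$, using that $\Tr(\rhohat)=1$ deterministically so $\Var(\Tr(O\rhohat)) = \Var(\Tr(O_0\rhohat))$; the price is the bounds $\Tr(O_0^2) \le \Tr(O^2)$ and $\| O_0^2 \| \le 4\| O^2 \|$, which cost a factor of $4$ and, more importantly, replace the state-dependent quantity $\Tr(O^2\rho)$ by the operator norm $\| O^2 \|$. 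You instead keep general $O$ throughout, isolate the exact correction $-\tfrac{2Q}{d+s+1}$, and prove $Q \ge 0$ by completing the square in $\Tr(O)$ together with the purity/Cauchy--Schwarz inequality $\Tr(O\rho)^2 \le \Tr(O^2\rho)$. I checked your algebra: indeed $s^2\,\E[X^2] = \Tr(O^2) + 2s\Tr(O^2\rho) + s(s-1)\Tr(O\rho)^2 - \tfrac{2Q}{d+s+1}$ with your $Q$, hence $s^2 \Var(X) \le \Tr(O^2) + 2s\Tr(O^2\rho)$. What your approach buys is notable: it establishes the corollary exactly as stated---with $\Tr(O^2\rho)$, and in fact with constant $2$ in place of $8$---whereas the paper's own centering argument really concludes with $\Tr(O^2) + 8s\| O^2 \|$ (the form quoted later in the proof of \Cref{thm:ps_jm_ub}), which is weaker since $\Tr(O^2\rho) \le \| O^2 \|$; so your proof also repairs a small mismatch between the paper's statement and its proof. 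What the paper's route buys in exchange is that the traceless reduction sidesteps any sign analysis of the correction term, which is the step most prone to error in your version.
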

\begin{proof}
Our goal will be to compute the mean and variance of the estimate $\Tr(O \rhohat)$ in order to apply Chebyshev's inequality. By \Cref{lem:first_moment}, we have 
$$
\E[\rhohat] = \E \left[\frac{(d+s) \Psi - I}{s} \right] = \rho,
$$
and so the mean of the estimate is correct: $\E[\Tr(O \rhohat)] = \Tr(O \E[\rhohat]) = \Tr(O \rho)$.

To analyze the variance, let us first consider \emph{traceless} observables $O$, where $\Tr(O) = 0$. As usual, it will be useful to break the variance into a first moment term $\E[\Tr(O \Psi)]$ and a second moment term $\E[\Tr(O \Psi)^2]$:
\begin{align*}
\Var(\Tr(O \rhohat)) = \Var \left( \frac{(d+s)\Tr(O \Psi) - \Tr(O)}{s}\right)
= \left( \frac{d+s}{s} \right)^2 (\E[\Tr(O \Psi)^2] - \E[\Tr(O \Psi)]^2)
\end{align*}
Putting aside the $s^{-2}$ factor for now, the (squared) first moment term is
$$
(d+s)^2 \E[\Tr(O\Psi)]^2 = (d+s)^2 \left(\frac{\Tr(O) + s\Tr(O \rho)}{d+s}\right)^2 = s^2 \Tr(O \rho)^2.
$$
For the second moment term, we use $\E[\Tr(O \Psi)^2] = \E[\Tr((O\otimes O) (\Psi \otimes \Psi))]$ and \Cref{lem:second_moment} to write
\begin{align*}
(d+s)^2 \E[\Tr(O \Psi)^2] 
&= \frac{2(d+s)}{d+s+1} \Tr \left[ O^{\otimes 2}  \left( (I + s \rho)^{\otimes 2} - \frac{s(s+1)}{2} \rho^{\otimes 2} \right) \sym^{(2)} \right] \\
&\le \Tr \left[ O^{\otimes 2}  \left(I^{\otimes 2} + s (I \otimes \rho + \rho \otimes I) - \frac{s(s-1)}{2} \rho^{\otimes 2} \right) (2\sym^{(2)}) \right].
\end{align*}
Recall that $(2\sym^{(2)}) = W_{(1)(2)} + W_{(1\,2)}$, so to simplify, consider the contribution of those two terms:
\begin{align*}
    W_{(1)(2)}:&\;\;\Tr(O)^2 + 2s \Tr(O) \Tr(O \rho) + \frac{s(s-1)}{2} \Tr(O \rho)^2 \\
    W_{(1\,2)}:&\;\;\Tr(O^2) + 2s \Tr(O^2 \rho) + \frac{s(s-1)}{2} \Tr((O \rho)^2)
\end{align*}
In fact, because $\rho$ is pure, we have\footnote{If $\rho := \proj{\psi}$, we get
$\Tr((O\rho)^2) = \Tr(O\proj{\psi}O\proj{\psi}) = \Tr(\bra{\psi}O\ket{\psi}\bra{\psi}O\ket{\psi}) = \bra{\psi}O\ket{\psi}^2 = \Tr((O\rho)^2).$} that $\Tr((O\rho)^2) = \Tr(O \rho)^2$. Combining all of the above, we arrive at a bound for the (scaled) variance of $\Tr(O \Psi)$:
$$
(d+s)^2 \Var(\Tr(O \Psi)) 
\le \Tr(O^2) + 2s \Tr(O^2 \rho) - s \Tr(O \rho)^2 \\
\le \Tr(O^2) + 2s \norm{O^2}
$$
where the last inequality uses H\"older's inequality.\footnote{
H\"older's inequality: For density matrix $\rho$ and observable $O$,
$\Tr(O \rho) \le \sonenorm{O\rho} \leq (\inftynorm{O}\sonenorm{\rho}) \leq \inftynorm{O}.$
}

    It follows that $\Var(\Tr(O \rhohat))$ is 
    $$
    \Var(\Tr(O \rhohat)) \leq \frac{\Tr(O^2) + 2s \| O^2 \|}{s^2},
    $$
    \emph{for traceless observables}. Now suppose $O$ has nonzero trace, and let $O_0 := O - \Tr(O) I/d$ be its traceless part. Naturally, we have 
    \begin{align*}
    \Var(\Tr(O \rhohat)) 
    &= \Var(\Tr(O_0 \rhohat)) \leq \frac{\Tr(O_0^2) + 2s \| O_0^2 \|}{s^2}.
    \end{align*}
    Let's now show that we can bound each of those terms ($\Tr(O_0^2)$ and $\| O_0^2 \|$) using functions of the original observable $O$. First, for the $\Tr(O_0^2)$ term, we have that
    $$
    \Tr(O_0^2) = \Tr(O^2) - \Tr(O)^2/d \leq \Tr(O^2).
    $$
    Next consider the $\| O_0^2 \|$ term. We have that $\Tr(O) \le \|O\| d$, and so the largest eigenvalue (in absolute value) of $O - \Tr(O)I/d$ is at most $2\|O\|$. We get
    $$
    \| O_0^2 \| = \| O_0\|^2 \le (2\| O \|)^2 = 4 \| O^2 \|.
    $$
    Hence $\Var(\Tr(O \rhohat)) \leq \tfrac{1}{s^2} \left( \Tr(O)^2 + 8s \| O^2 \| \right)$, and the result follows by Chebyshev's theorem. 
\end{proof}

At last, we can prove the main theorem for this section.
\begin{proof}[Proof of Theorem~\ref{thm:ps_jm_ub}.]
    Consider an arbitrary observable $O$, and use Corollary~\ref{cor:batch_estimate} to bound the probability a single batch estimate $\rhohat^{(i)}$ is wrong by 
    $$
    \Pr[|\Tr(O \rhohat^{(i)}) - \Tr(O \rho)| \geq \epsilon] \leq \frac{1}{\epsilon^2 s^2} \left[ \Tr( O^2 ) + 8s \| O^2 \| \right] \leq \frac{B+8s}{\epsilon^2 s^2}. 
    $$
    Suppose we want this probability to be less than some constant $p < 1/2$; we leave it to the reader to check that at most 
    $\mathcal O(1/(\epsilon^2 p) + \sqrt{B/(\epsilon^2 p)})$
    samples suffice, and note that $s$ is chosen accordingly in Algorithm~\ref{alg:ps_jm_meas}.
    
    Recall that our final estimate $E$ is 
    $$
    E := \operatorname{median}(\Tr(O \rhohat^{(1)}), \ldots, \Tr(O \rhohat^{(k)})).
    $$
    Assume there are an odd number of batches, so the median is actually some $\Tr(O \rhohat^{(i)})$.
    If $E$ is a bad estimate, i.e., $|E - \Tr(O \rho)| > \epsilon$ then at least $k/2$ of the batch estimates are wrong: either $E$ and the estimates higher than it, or $E$ and the estimates lower than it.
    
    The batches are independent so Chernoff bounds the chance of seeing $\geq k/2$ failures.
    $$
    \Pr[|E - \Tr(O \rho)| \geq \epsilon] \leq \Pr \left[\# \{ i : |\Tr(O \rhohat^{(i)}) - \Tr(O \rho)| \geq \epsilon \} \geq \frac{k}{2} \right] \leq \sqrt{4p(1-p)}^{k}
    $$
    Setting this less than the failure probability $\delta$, we have 
    $$
    k \geq \frac{\log\delta^{-1}}{\log \, (4p(1-p))^{-1/2}}.
    $$
    Again, we note that $k$ is set accordingly in Algorithm~\ref{alg:ps_jm_meas}.
\end{proof}

\subsection{Discussion}

Let us compare this result with the original classical shadows protocol of Huang, Kueng, and Preskill \cite{Huang2020}. Their algorithm measures each copy of $\rho$ with $\mathcal{M}_1$\footnote{Technically, they use a Clifford unitary instead of a Haar random unitary, presumably for the sake of efficient implementation. However, $\mathcal{M}_1$ will work in place of their measurement, and the analysis is identical since it uses up to third moments of the ensemble of unitaries, which are the same for Clifford vs.\ Haar random unitaries, i.e., the Cliffords are a $3$-design~\cite{Webb2016, Zhu2017,kueng2015qubit,zhu2016clifford}.}, producing unbiased single-copy estimates $\rhohat_1, \ldots, \rhohat_s$ for $\rho$, which are then averaged into a batch estimate $\rhohat = \frac{1}{s} \sum_{i=1}^{s} \rhohat_i$. Given the observable $O$, the estimate is then $\Tr(O \rhohat)$, or the median of several batches, if necessary to reduce the probability of failure. 

We have just seen that the variance of a single-copy estimate is $\Var(\Tr(O \rhohat_i)) \leq \Tr(O^2) + \Tr(O^2 \rho)$, and averaging $s$ estimates together reduces the variance by a factor of $\frac{1}{s}$. On the other hand, our measurement with $\mathcal{M}_s$ provides an unbiased estimate with variance
\[
\Var(\Tr(O \rhohat)) \leq \frac{\Tr(O^2)}{s^2} + \frac{\Tr(O^2 \rho)}{s}.
\]
Since $\Tr(O^2 \rho)\leq 1$, we see that the quadratic denominator of $\Tr(O^2)$ (which is the dominant term) is making all the difference. 


\section{Joint Measurement Lower Bound}
\label{sec:lower_bound}
\begin{thm}
$\task(B, \epsilon, \delta) = \Omega \left( \frac{\sqrt{B}}{\epsilon \log(B+1)} + \frac{\log \delta^{-1}}{\epsilon^2} \right)$ provided $B \le \epsilon d$.
\end{thm}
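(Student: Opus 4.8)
The plan is to prove the two summands as independent lower bounds and combine them: since the claimed quantity is a sum and $\max\{a,b\}\ge\tfrac12(a+b)$, it suffices to establish $\task(B,\epsilon,\delta)=\Omega\!\left(\tfrac{\log\delta^{-1}}{\epsilon^2}\right)$ and $\task(B,\epsilon,\delta)=\Omega\!\left(\tfrac{\sqrt B}{\epsilon\log(B+1)}\right)$ separately. The first is a routine two-point statistical bound that already holds for the trivial class $\obs(1)\subseteq\obs(B)$; the second is a compression bound refining Gosset--Smolin, and is where essentially all of the difficulty lies.

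For the $\tfrac{\log\delta^{-1}}{\epsilon^2}$ term, I would fix the rank-one observable $O=\ketbra{0}{0}\in\obs(1)$ and two pure states $\rho_0,\rho_1$ on a two-dimensional subspace, say $\rho_b=\ketbra{\psi_b}{\psi_b}$ with $\ket{\psi_b}=\cos\theta_b\ket0+\sin\theta_b\ket1$ and $\theta_b=\tfrac\pi4+(-1)^b2\epsilon$, so that $\Tr(O\rho_0)-\Tr(O\rho_1)=\sin 4\epsilon\ge 4\epsilon$ while $|\innerbrakets{\psi_0}{\psi_1}|^2=\cos^2 4\epsilon=1-\Theta(\epsilon^2)$. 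Any protocol that solves the task on this single $O$ to accuracy $\epsilon$ with failure probability $\delta$ distinguishes $\rho_0^{\otimes s}$ from $\rho_1^{\otimes s}$ with error $\le\delta$ by thresholding its estimate (the two $\epsilon$-intervals are disjoint since the margin exceeds $2\epsilon$). The overlap of the $s$-copy states is $|\innerbrakets{\psi_0}{\psi_1}|^{2s}=(1-\Theta(\epsilon^2))^s$, so their trace distance is $\sqrt{1-(1-\Theta(\epsilon^2))^s}$; the Helstrom bound then forces $(1-\Theta(\epsilon^2))^s\le 4\delta$, i.e. $s=\Omega(\epsilon^{-2}\log\delta^{-1})$.

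For the $\tfrac{\sqrt B}{\epsilon\log(B+1)}$ term the plan has three stages. \emph{(Compression.)} First I would construct an ensemble of pure states $\{\rho_x\}_{x\in\{0,1\}^m}$ supported on a subspace of dimension $d'=\Theta(B/\epsilon)$, together with observables $O_1,\dots,O_m\in\obs(B)$ and $m=\Theta(\sqrt B/\epsilon)$, so that for each $j$ the value $\Tr(O_j\rho_x)$ encodes the bit $x_j$ with margin $2\epsilon$. Taking $X$ uniform and $Z=\mathcal A_{\mathrm{meas}}(\rho_X^{\otimes s})$, the task guarantee applied to each fixed $O_j$ lets a decoder recover $x_j$ from $Z$ with error $\le\delta$, so Fano's inequality gives $H(X\mid Z)\le m\,H_b(\delta)$ and hence $I(X;Z)\ge m(1-H_b(\delta))=\Omega(\sqrt B/\epsilon)$ for constant $\delta$ (this is why only the $\epsilon^{-2}$ term carries the $\log\delta^{-1}$). \emph{(Holevo.)} Since each $\rho_x$ is pure, $\rho_x^{\otimes s}$ lies in the symmetric subspace $\sym^{(s)}$ of $(\mathbb C^{d'})^{\otimes s}$, of dimension $\kappa_s=\binom{s+d'-1}{d'-1}$; Holevo's bound, valid for the arbitrary joint POVM producing $Z$, gives $I(X;Z)\le S(\E_X\rho_X^{\otimes s})\le\log\kappa_s\le s\log(2ed'/s)$ for $s\le d'$. \emph{(Arithmetic.)} Combining yields $s\log(2ed'/s)=\Omega(\sqrt B/\epsilon)$ with $d'=\Theta(B/\epsilon)$; as $u\mapsto u\log(2ed'/u)$ is increasing for $u<2d'$, plugging in the candidate threshold $s\asymp\sqrt B/(\epsilon\log(B+1))$ makes $\log(2ed'/s)=\Theta(\log(B+1))$, and solving gives $s=\Omega\!\left(\tfrac{\sqrt B}{\epsilon\log(B+1)}\right)$. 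The hypothesis $B\le\epsilon d$ enters exactly here, ensuring $d'=\Theta(B/\epsilon)\le d$ so the hard instance embeds in the ambient space.

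The main obstacle is the compression stage. A key subtlety is that the classical shadows guarantee is \emph{per-observable} (each $O_j$ correct with high probability), not simultaneous over all observables, so one cannot use a worst-case covering-number argument: for $\obs(B)\supseteq\obs(2)$ the sign observable $\mathrm{sgn}(\rho-\sigma)$ is rank $\le2$, so the worst-case metric collapses to trace distance and would wildly overshoot. The Fano route above is therefore essential, and it forces the ensemble to carry $m$ \emph{independent} bits each individually robust to $\epsilon$-error. Achieving the full $m=\Theta(\sqrt B/\epsilon)$ is delicate: I expect naive binary ``phase-code'' constructions, where $O_j$ reads coordinate $j$ through off-diagonal (flip-pair) matrix elements, to be limited to $m=O(\epsilon^{-1}\log(B/\epsilon))$ bits by edge-isoperimetry in the Boolean cube, which is too weak. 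The refinement must instead exploit the full rank budget $\Tr(O_j^2)\le B$, using genuinely high-rank observables whose expectation aggregates $\Theta(B)$ correlated contributions to amplify the per-bit signal by a $\sqrt B$ factor, and I expect the cleverest part of the argument to be precisely this construction and its margin analysis.
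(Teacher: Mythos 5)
Your first lower bound (the $\Omega(\epsilon^{-2}\log\delta^{-1})$ term via a two-state Helstrom argument) is correct and is essentially the paper's own proof of \Cref{thm:distinguishing_lb}, and your Holevo-plus-symmetric-subspace stage and closing arithmetic (bootstrapping $s=\Omega(1/\epsilon^2)$ into the logarithm) also match the paper. The fatal problem is the compression stage, and it is not merely that the construction is missing: the object you want provably cannot exist in the regime where the $\sqrt{B}/\epsilon$ term matters. An ensemble $\{\rho_x\}_{x\in\{0,1\}^m}$ in dimension $d'$ with observables $O_j$, $\|O_j\|\le 1$, such that an $\epsilon$-accurate estimate of $\Tr(O_j\rho_x)$ determines $x_j$ for every $x$, is exactly a quantum random access code: measuring the two-outcome POVM $\{(I\pm O_j)/2\}$ on each of $t=\mathcal O(1/\epsilon^2)$ fresh copies and applying your decision rule to the empirical mean recovers $x_j$ with probability at least $2/3$ from $\rho_x^{\otimes t}$. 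Nayak's random-access-code bound (superadditivity of mutual information over the independent bits $X_j$, plus Fano and Holevo) then forces $m\,(1-h(1/3)) \le \log\dim \le t\log d'$, where $h$ is the binary entropy, i.e.\ $m = \mathcal O(\epsilon^{-2}\log d')$. With $d'=\Theta(B/\epsilon)$ this caps $m$ at $\mathcal O(\epsilon^{-2}\log(B/\epsilon))$, which is asymptotically below your target $m=\Theta(\sqrt{B}/\epsilon)$ exactly when $\epsilon\sqrt{B}\gg\log(B/\epsilon)$ --- e.g.\ in the Gosset--Smolin regime $B=\Theta(d)$, $\epsilon$ constant, where you would need $m=\Theta(\sqrt d)$ but at most $\mathcal O(\log d)$ individually decodable bits can exist. (In the complementary regime the $\sqrt{B}/\epsilon$ term is dominated by $1/\epsilon^2$ up to logs, so nothing new is proved there.) No amount of cleverness with high-rank observables rescues the Fano route; the isoperimetry limitation you flagged was a symptom of this universal obstruction, not an artifact of naive constructions.

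The idea your proposal lacks is that the $\Omega(\sqrt{B}/\epsilon)$ bits of information carried by the shadow need not be --- and cannot be --- individually decodable bits of $x$. The paper instead reduces from Boolean Hidden Matching (\Cref{thm:bhm_to_shadows}): Alice holds $n=B/\epsilon$ bits encoded in the phase state $\ket{\psi_x}$, and the shadow only has to answer a single \emph{random} parity query of Bob, realized as a rank-$\alpha n$ projector $O_y$ with margin $\alpha=\epsilon$. The Fourier-analytic lower bound of Gavinsky et al.\ (\Cref{thm:gavinsky_main}) shows that any one-way protocol answering such queries with constant error needs $\Omega(\sqrt{n/\alpha})=\Omega(\sqrt{B}/\epsilon)$ bits of communication, even though no fixed bit of $x$ is learnable from the message; Yao's minimax principle (\Cref{thm:yaos_minimax}) together with the message-compression theorem of Harsha et al.\ (\Cref{thm:information_bounds_communication}) then converts this communication bound into the mutual-information bound $I(X:M)=\Omega(\sqrt{B}/\epsilon)$ that your Fano step was meant to supply. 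From that point on, your Holevo and arithmetic stages go through exactly as you wrote them.
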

Notice that this bound matches the $\mathcal O((\sqrt{B}\epsilon^{-1} + \epsilon^{-2})\log(\delta^{-1}))$ upper bound up to a $\log(B)$ and a $\log(1/\delta)$ factor. We prove this as two separate lower bounds: $\Omega(\frac{\log \delta^{-1}}{\epsilon^2})$ and $\Omega( \frac{\sqrt{B}}{\epsilon \log(B+1)})$.

The first lower bound ($\Omega(\epsilon^{-2}\log(\delta^{-1})$) is derived via a reduction from the problem of distinguishing two pure states, $\rho_0$ and $\rho_1$, at trace distance $2 \epsilon$ from each other. We then use the known performance of the optimal measurement (Helstrom  measurement). 

The second lower bound is shown via a reduction from a problem in communication complexity known as Boolean Hidden Matching \cite{Gavinsky2007exponential}. We will show that any protocol for the classical shadows task implies a protocol for the Boolean Hidden Matching problem, which has known communication complexity lower bounds. These communication lower bounds will imply that the classical shadow must contain a significant amount of information. However, Holevo's theorem gives an upper bound on the amount of information gained through measurement. Therefore, in order to successfully complete the classical shadows task, many copies of the unknown state are required.

\subsection{\texorpdfstring{$\Omega(\epsilon^{-2}\log(\delta^{-1}))$}{Omega(epsilon\^{}-2)} lower bound}

The proof of the lower bound uses known results relating the trace distance between two states with our ability to distinguish the states by observables or binary measurements. In particular, the maximum gap for the expectation of a positive semi-definite observable is equal to the trace distance between the states:
\begin{lemma}
    \label{lem:observable_trace_distance}
    For arbitrary states $\rho$ and $\sigma$,
    $$
    \max_{0 \preceq O \preceq I} |\Tr( O \rho ) - \Tr( O \sigma )| = \tfrac{1}{2} \| \rho - \sigma \|_1.
    $$
    Furthermore, there is an optimal $O$ satisfying $\Tr(O^2) \leq \tfrac{1}{2} \rank (\rho - \sigma) \leq \tfrac{1}{2}(\rank \rho + \rank \sigma)$.
\end{lemma}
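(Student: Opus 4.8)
The plan is to work with the difference $\Delta := \rho - \sigma$, which is Hermitian and traceless since $\Tr(\rho) = \Tr(\sigma) = 1$. First I would take its spectral decomposition and split it into positive and negative parts, $\Delta = \Delta_+ - \Delta_-$, where $\Delta_+, \Delta_- \succeq 0$ have orthogonal supports. Tracelessness forces $\Tr(\Delta_+) = \Tr(\Delta_-)$, and since $\| \Delta \|_1 = \Tr(\Delta_+) + \Tr(\Delta_-)$, each part has trace exactly $\tfrac{1}{2}\| \Delta \|_1$.

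For the upper bound direction, I would note that for any $0 \preceq O \preceq I$,
$$\Tr(O\Delta) = \Tr(O\Delta_+) - \Tr(O\Delta_-) \leq \Tr(O\Delta_+) \leq \Tr(\Delta_+) = \tfrac{1}{2}\| \Delta \|_1,$$
where the first inequality uses $\Tr(O\Delta_-) \geq 0$ (as $O, \Delta_- \succeq 0$) and the second uses $\Tr((I-O)\Delta_+) \geq 0$ (as $I - O, \Delta_+ \succeq 0$). The symmetric estimate $-\Tr(O\Delta) \leq \Tr(\Delta_-) = \tfrac{1}{2}\| \Delta \|_1$ then yields $|\Tr(O\Delta)| \leq \tfrac{1}{2}\| \Delta \|_1$ for every admissible $O$.

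For achievability together with the rank bound, I would let $P_+$ and $P_-$ be the orthogonal projectors onto the positive and negative eigenspaces of $\Delta$ (the supports of $\Delta_+$ and $\Delta_-$). A direct computation gives $\Tr(P_+\Delta) = \Tr(\Delta_+) = \tfrac{1}{2}\| \Delta \|_1$ and $\Tr(P_-\Delta) = -\Tr(\Delta_-) = -\tfrac{1}{2}\| \Delta \|_1$, so both projectors saturate the bound in absolute value, establishing the claimed equality. Since each is a projector, $\Tr(P_\pm^2) = \rank(P_\pm) = \rank(\Delta_\pm)$.

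The one genuine observation — and the reason the factor of $\tfrac{1}{2}$ appears — is that because we maximize the \emph{absolute} value, we are free to select whichever of $P_+, P_-$ has smaller rank. As the supports are orthogonal, $\rank(\Delta_+) + \rank(\Delta_-) = \rank(\Delta)$, so the smaller rank is at most $\tfrac{1}{2}\rank(\Delta)$, furnishing an optimal $O$ with $\Tr(O^2) \leq \tfrac{1}{2}\rank(\rho - \sigma)$. The final inequality $\rank(\rho - \sigma) \leq \rank\rho + \rank\sigma$ is just subadditivity of rank. I anticipate no serious obstacle here; the only point worth stating carefully is that a single fixed sign would merely bound $\Tr(O^2)$ by $\rank(\Delta_+)$, which need not be at most half of $\rank(\Delta)$ — it is precisely the freedom to flip sign that delivers the desired factor.
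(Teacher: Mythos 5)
Your proof is correct and follows essentially the same route as the paper's: decompose $\rho - \sigma$ into its positive and negative parts, observe that the projectors onto those eigenspaces each saturate $\tfrac{1}{2}\|\rho-\sigma\|_1$ in absolute value, and use the orthogonality of their supports plus the freedom to flip sign to pick the one of rank at most $\tfrac{1}{2}\rank(\rho-\sigma)$. If anything, yours is slightly more complete, since you also spell out the upper-bound direction ($|\Tr(O\Delta)| \le \tfrac{1}{2}\|\Delta\|_1$ for all $0 \preceq O \preceq I$), which the paper leaves implicit.
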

\begin{proof}
    Diagonalize $\rho - \sigma$ as $\sum_{i} \lambda_i \ketbra{\phi_i}{\phi_i}$. Define two positive semi-definite observables, $O_{+}$ and $O_{-}$.  
    \begin{align*}
        O_{+} &:= \sum_{i : \lambda_i > 0} \ketbra{\phi_i}{\phi_i} &
        O_{-} &:= -\sum_{i : \lambda_i < 0} \ketbra{\phi_i}{\phi_i}
    \end{align*}
    Clearly $O_{+} - O_{-}$ is a projector onto the eigenvectors of $\rho - \sigma$, so $\Tr((O_{+} - O_{-})(\rho - \sigma)) = \Tr(\rho - \sigma) = 0$, and $\rank (O_{+} - O_{-}) = \rank (\rho - \sigma)$. On the other hand, 
    $$
    \Tr((O_{+} + O_{-})(\rho - \sigma)) = \Tr \left( \sum_{i : \lambda_i \neq 0} (\sgn \lambda_i) \lambda_i \ketbra{\phi_i}{\phi_i} \right) = \sum_{i : \lambda_i \neq 0} |\lambda_i| = \| \rho - \sigma \|_{1}.
    $$ 
    It follows that $\Tr(O_{+} (\rho - \sigma)) = \Tr(O_{-}(\rho - \sigma)) = \tfrac{1}{2} \| \rho - \sigma \|_1$. Since $O_{+}$ and $O_{-}$ are orthogonal, the rank of their sum, $\rank (\rho - \sigma)$, is the sum of their ranks. Hence, we can take whichever of $O_{+}$ and $O_{-}$ has rank at most $\tfrac{1}{2} \rank (\rho - \sigma)$. 
    \end{proof}
    
Separately, we know the optimal measurement for distinguishing a uniformly randomly chosen $\rho$ or $\sigma$ is given by:
\begin{lemma}[Helstrom measurement \cite{helstrom1976quantum}]
    \label{lem:helstrom}
    The optimal measurement for distinguishing states $\rho$ and $\sigma$ succeeds with probability
    $\frac{1}{2} + \frac{1}{4} \| \rho - \sigma \|_1$.
\end{lemma}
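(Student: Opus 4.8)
The plan is to reduce the optimal-distinguishing question to the spectral optimization already established in \Cref{lem:observable_trace_distance}. Any strategy for distinguishing $\rho$ from $\sigma$ by a single binary measurement is described by a two-outcome POVM $\{E, I - E\}$ with $0 \preceq E \preceq I$; upon seeing the first outcome we guess $\rho$, and upon the second we guess $\sigma$. Since the two states are presented with equal prior probability $\tfrac12$, the probability of guessing correctly is
$$
P_{\mathrm{succ}} = \tfrac12 \Tr(E\rho) + \tfrac12 \Tr((I-E)\sigma).
$$

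First I would simplify this expression using $\Tr(\sigma) = 1$ to obtain
$$
P_{\mathrm{succ}} = \tfrac12 + \tfrac12\big(\Tr(E\rho) - \Tr(E\sigma)\big) = \tfrac12 + \tfrac12\Tr\!\big(E(\rho-\sigma)\big).
$$
Maximizing the success probability over measurements is thus equivalent to maximizing $\Tr(E(\rho-\sigma))$ over all $0 \preceq E \preceq I$. Next I would invoke \Cref{lem:observable_trace_distance}, which identifies $\max_{0 \preceq O \preceq I}|\Tr(O\rho)-\Tr(O\sigma)| = \tfrac12\|\rho-\sigma\|_1$ and exhibits the projector $O_+$ onto the positive eigenspace of $\rho-\sigma$ as a maximizer. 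Because $\Tr(O_+(\rho-\sigma)) = \tfrac12\|\rho-\sigma\|_1 \ge 0$, this same $E = O_+$ maximizes the unsigned quantity as well, giving
$$
P_{\mathrm{succ}} = \tfrac12 + \tfrac12\cdot\tfrac12\|\rho-\sigma\|_1 = \tfrac12 + \tfrac14\|\rho-\sigma\|_1,
$$
as claimed.

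I do not anticipate any genuine obstacle, since the spectral optimization---the only nontrivial ingredient---has already been carried out in \Cref{lem:observable_trace_distance}. The sole point deserving a sentence of care is that the earlier lemma bounds an absolute value, whereas here I need the signed maximum of $\Tr(E(\rho-\sigma))$; this causes no loss because $\rho-\sigma$ is traceless, so whenever $\rho \neq \sigma$ it has a nonempty positive eigenspace and the maximizing choice $E=O_+$ achieves the positive value $+\tfrac12\|\rho-\sigma\|_1$ rather than its negation. A final remark would note that the guessing rule attached to the POVM---declare $\rho$ on the first outcome---is precisely the one that turns this optimum into the correct-guess probability, closing the argument.
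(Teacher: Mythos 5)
Your proposal is correct, but note that the paper itself offers no proof of this lemma at all: it is quoted as a known result with a citation to Helstrom's book. Your derivation is the standard argument for the Helstrom bound, with one twist specific to this paper: you outsource the only nontrivial step---maximizing $\Tr\big(E(\rho-\sigma)\big)$ over $0 \preceq E \preceq I$---to \Cref{lem:observable_trace_distance}, which the paper does prove, thereby making the treatment self-contained rather than reliant on an external reference. Your handling of the sign subtlety is right: the earlier lemma bounds an absolute value, but since $\rho-\sigma$ is traceless, the projector $O_+$ onto its positive eigenspace attains the signed value $+\tfrac{1}{2}\|\rho-\sigma\|_1$, and no $E$ can exceed this because of the absolute-value bound; so the signed and unsigned maxima coincide. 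The one point you gloss over is the reduction from arbitrary discrimination strategies to binary POVMs: a priori the optimal strategy could use a many-outcome POVM followed by a (possibly randomized) guessing rule, but coarse-graining the outcomes according to the guess---and noting that the success probability is linear in any randomization, so a deterministic rule is optimal---collapses this to the two-outcome case you analyze. That is a one-sentence fix, not a gap, and with it your argument fully establishes both that every measurement is bounded by $\tfrac{1}{2} + \tfrac{1}{4}\|\rho-\sigma\|_1$ (the direction actually used in \Cref{thm:distinguishing_lb}) and that this value is achieved.
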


\begin{theorem}
\label{thm:distinguishing_lb}
$\task(1, \epsilon, \delta) = \Omega(\epsilon^{-2} \log(1/\delta))$.
\end{theorem}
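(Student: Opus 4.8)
The plan is to reduce the problem of distinguishing two fixed pure states from $s$ copies to the classical shadows task, and then invoke the optimality of the Helstrom measurement (\Cref{lem:helstrom}) to lower bound $s$. Concretely, I would fix two pure states $\rho_0$ and $\rho_1$ in dimension $d \ge 2$ whose fidelity is $F := |\langle\psi_0|\psi_1\rangle|^2 = 1 - 4\epsilon^2$, so that their trace distance satisfies $\tfrac{1}{2}\|\rho_0 - \rho_1\|_1 = \sqrt{1-F} = 2\epsilon$ (such states exist whenever $2\epsilon \le 1$). A distinguisher must then decide, given $\rho_b^{\otimes s}$ for an unknown uniformly random bit $b \in \{0,1\}$, which state it was handed.

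Next I would turn any valid shadows protocol into such a distinguisher. By \Cref{lem:observable_trace_distance}, there is an observable $O$ achieving the maximal gap $\Tr(O\rho_0) - \Tr(O\rho_1) = \tfrac{1}{2}\|\rho_0 - \rho_1\|_1 = 2\epsilon$, and moreover $O$ can be taken to be a projector onto a subspace of dimension at most $\tfrac12\rank(\rho_0-\rho_1) = 1$. Hence $\|O\| = 1$ and $\Tr(O^2) \le 1$, i.e., $O \in \obs(1)$, which is exactly the observable class relevant for $\task(1,\epsilon,\delta)$. Running $\mathcal A_{\mathrm{est}}$ on the shadow of $\rho_b^{\otimes s}$ yields an estimate $E$ with $|E - \Tr(O\rho_b)| < \epsilon$ with probability at least $1-\delta$; since the two possible target values $\Tr(O\rho_0)$ and $\Tr(O\rho_1)$ are exactly $2\epsilon$ apart, thresholding $E$ at their midpoint recovers $b$ correctly with probability at least $1-\delta$. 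This is a (generally suboptimal) measurement on $\rho_b^{\otimes s}$, so its success probability is at most the Helstrom optimum $\tfrac12 + \tfrac14 \|\rho_0^{\otimes s} - \rho_1^{\otimes s}\|_1 = \tfrac12 + \tfrac12\sqrt{1 - F^s}$, where I used that the fidelity of $s$-fold tensor powers of pure states is $F^s$.

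Finally I would solve the resulting inequality $1 - \delta \le \tfrac12 + \tfrac12\sqrt{1-F^s}$ for $s$. Rearranging gives $F^s \le 4\delta(1-\delta)$, and taking logarithms yields
\[
s \ge \frac{\log\!\big(1/(4\delta(1-\delta))\big)}{\log(1/F)} = \frac{\log\!\big(1/(4\delta(1-\delta))\big)}{\log\!\big(1/(1-4\epsilon^2)\big)}.
\]
Using $\log\!\big(1/(1-4\epsilon^2)\big) \le 8\epsilon^2$ for small $\epsilon$ and $\log\!\big(1/(4\delta(1-\delta))\big) = \Omega(\log(1/\delta))$ for small $\delta$ then gives $s = \Omega(\epsilon^{-2}\log(1/\delta))$, as desired. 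The conceptual crux is the observation that the overlap of pure states decays geometrically as $F^s$ under tensoring, which is precisely what converts the $1-\delta$ success requirement into the $\log(1/\delta)$ factor; the main technical care is in choosing $F$ so that the induced optimal observable lies in $\obs(1)$ while the gap stays exactly $2\epsilon$, and in restricting to the regimes (with $\epsilon$ and $\delta$ bounded away from their extreme values) needed for the logarithmic estimates to hold.
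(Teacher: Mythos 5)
Your proposal is correct and follows essentially the same route as the paper's proof: the same pair of pure states at trace distance $2\epsilon$, the same rank-one observable from \Cref{lem:observable_trace_distance} placing it in $\obs(1)$, the same thresholding reduction to state discrimination, and the same Helstrom-plus-fidelity-decay argument $F^s \le 4\delta(1-\delta)$ yielding $s = \Omega(\epsilon^{-2}\log(1/\delta))$. Your bookkeeping of the constants (e.g., $F = 1-4\epsilon^2$) is in fact slightly more careful than the paper's, which drops a factor of $4$ in passing to $(1-\epsilon^2)^s$, though this does not affect the asymptotic conclusion in either case.
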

\begin{proof}
Let $\rho_0$ and $\rho_1$ be pure states with trace distance $\tfrac{1}{2} \| \rho_0 - \rho_1 \|_1 = 2 \epsilon$.
We claim that a protocol for the classical shadows task to $\epsilon$-approximate the expected values of observables of rank 1 with probability of failure at most $\delta$ can be used to distinguish states $\rho_0, \rho_1$ with probability of failure at most $\delta$. First, apply the measurement subroutine to unknown state $\rho_b^{\otimes s}$ to produce the classical shadow. Then, use the observable $O$ from Lemma~\ref{lem:observable_trace_distance} to estimate $\Tr(O \rho_b)$. If the estimate is closer to $\Tr(O \rho_0)$ then guess $b=0$, otherwise guess $b=1$. Since the gap $|\Tr(O \rho_0) - \Tr(O \rho_1)| = \tfrac{1}{2} \| \rho_0 - \rho_1 \| = 2\epsilon$, we succeed whenever the gap between the estimate and $\Tr(O \rho_b)$ is less than $\epsilon$. 

We can see this classical shadows protocol as a binary measurement distinguishing $\rho_0$ and $\rho_1$. Since it succeeds with probability $1 - \delta$, the optimal distinguishing measurement from Lemma~\ref{lem:helstrom} must do better, so 
    \begin{align*}
        \left(1 - 2\delta \right)^2 &\leq \frac{1}{4} \| \rho_0^{\otimes s} - \rho_1^{\otimes s} \|_1^2 
        = 1 - \Tr( \rho_0^{\otimes s} \rho_1^{\otimes s} ) 
        = 1 - \Tr( \rho_0 \rho_1 )^{s} \\
        &= 1 - \left(1 - \frac{1}{4} \| \rho_0 - \rho_1 \|_1^{2} \right)^{s} 
        \!= 1 - (1 - \epsilon^2)^{s}
    \end{align*}
where we have used the equation $\frac{1}{2} \| \rho_0 - \rho_1 \| = \sqrt{1 - \Tr( \rho_0 \rho_1 )}$ relating trace distance and fidelity for pure states~\cite{Fuchs1999}. Rearranging, we have $(1 - \epsilon^2)^{s} \leq 1 - (1 - 2 \delta)^2 = 4 \delta - 4 \delta^2 \leq 4 \delta$. Taking logs and using $1 - \frac{1}{x} \leq \ln x$ we have 
$$
-\frac{s\epsilon^2}{1 - \epsilon^2} \leq s \log(1 - \epsilon^2) \leq \log (4\delta),
$$
or equivalently,
$$
s \geq \frac{1-\epsilon^2}{\epsilon^2} \log \left(\frac{1}{4 \delta}\right) = \Omega(\epsilon^{-2} \log(1/\delta)). 
$$
\end{proof}

\subsection{\texorpdfstring{$\Omega(\epsilon^{-1} B / \log(B+1))$}{Omega(epsilon\^{}-1 B/log(B+1))} lower bound}

To prove this lower bound, we leverage the perspective that the classical shadows task is fundamentally a one-way communication problem---recall the setup of the classical shadows task (c.f., \Cref{sec:classical_shadows_task}) where Melanie measures copies of an unknown state $\rho$ and sends a classical message to Esteban that allows him to estimate the expectation of some observable $O$ on $\rho$. Intuitively, measuring more copies of $\rho$ means the message will contain more information about $\rho$. Conversely, if we can prove that Melanie's message must contain a lot of information, we can prove that she must have measured many copies of $\rho$. In other words, there is a tight correspondence between the sample complexity and one-way classical communication complexity of the classical shadows task. 

Formalizing this correspondence is somewhat tricky, so we leave the precise details for later (in particular, \Cref{sec:one-way_communication_complexity}). However, once this correspondence is established, the high-level structure of the proof is relatively straightforward.

Our starting point is a one-way communication task called ``Boolean Hidden Matching''. As in the classical shadows task, there are two parties involved in the task: Alice and Bob. Alice has a labeled graph and Bob has a ``partial matching'' (a collection of vertex-disjoint edges from the graph). Together, these encode a secret bit\footnote{See \Cref{sec:boolean_hidden_matching} for the precise definition.}. Bob doesn't know the labels on the graph, so Alice's goal is to send him a classical message so that he can extract the encoded bit. \cite{Gavinsky2007exponential} shows a lower bound on the number of bits that Alice must send to be successful---namely, she must send $\Omega(\sqrt{n/\alpha})$ bits where $n$ is the number of vertices in Alice's graph and $\alpha$ is the fraction of edges in Bob's partial matching. 

Our goal will be to take this lower bound for Boolean Hidden Matching and turn it into a lower bound for the classical shadows task. To do this, we create an ensemble of states (corresponding to labeled graphs) and observables (corresponding to partial matchings) such that computing the expected value of an observable with a state solves the Boolean Hidden Matching problem for the corresponding graph and matching. In other words, if Alice and Bob want to solve the Boolean Hidden Matching problem, they can first create the corresponding states and observables, and then use a protocol for classical shadows. We give a depiction of this reduction in \Cref{fig:lower_bound_figure}.

\begin{figure}
    \centering
    \includegraphics[scale=.6]{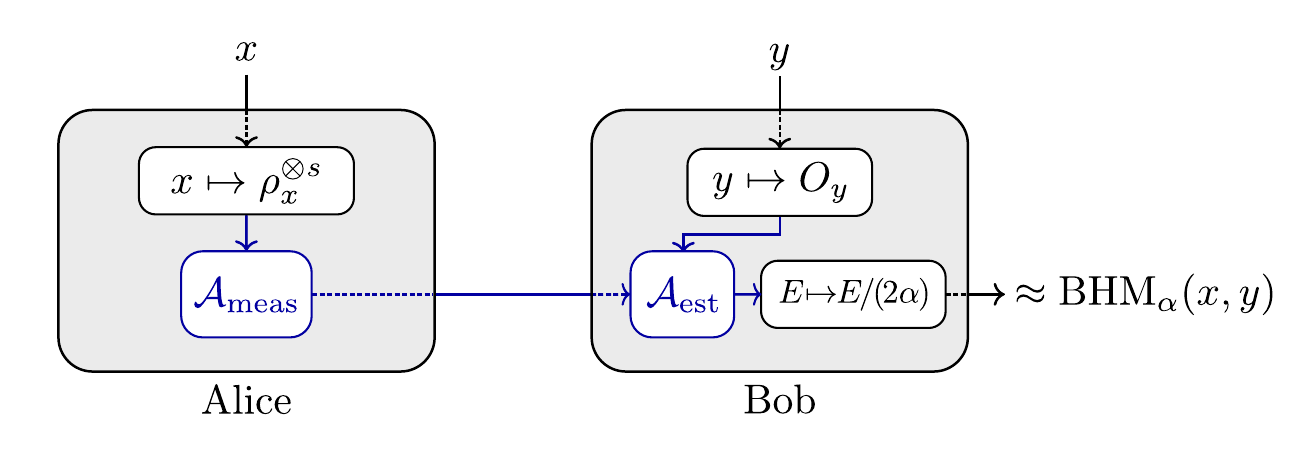}
    \caption{Protocol for Boolean Hidden Matching using classical shadows subroutines (shown in blue). From her input $x$, Alice prepares $\rho_x^{\otimes s}$, measures, and sends the classical shadow to Bob. From his input $y$, Bob computes $O_y$ and then estimates $\Tr(O_y \rho_x)$ to accuracy $\alpha$ from the classical shadow that Alice sent to him. He then uses that estimate to answer the Boolean Hidden Matching problem. Correctness follows from the fact that $\Tr(O_y \rho_x) = 2\alpha \cdot \BHM_\alpha(x,y)$. For the details of $\rho_x$ and $O_y$ see \Cref{thm:bhm_to_shadows}.}
    \label{fig:lower_bound_figure}
\end{figure}

To tie everything together, we appeal to the equivalence between the sample complexity of the classical shadows task and the one-way communication complexity. Namely, Alice must measure a number of copies of her state (roughly) proportional to the number of bits she wants to send Bob. Since we have a lower bound on the number of bits she must send Bob, we have a lower bound on the number of copies she must measure. This completes the proof. 

This overall idea draws considerable inspiration from that in the work of Gosset and Smolin for compressing classical descriptions of quantum states \cite{gosset_smolin2019}. Our proof can be seen as generalization of their techniques.

The remainder of this section is devoted to formalizing the above ideas. \Cref{sec:one-way_communication_complexity} introduces one-way communication complexity, culminating in a powerful theorem connecting the number of bits exchanged in a communication protocol and the amount of Shannon information exchanged in the protocol. In \Cref{sec:boolean_hidden_matching}, we give the formal definition of the Boolean Hidden Matching problem and fill in the missing details from the proof outline above.

\subsubsection{One-way communication complexity}
\label{sec:one-way_communication_complexity}
Because our proof is based on principles from communication complexity, let's briefly introduce that topic. We are interested in \emph{one-way communication protocols} where two parties---Alice and Bob---are trying to jointly compute some function $f \colon \mathcal X \times \mathcal Y \to \{0,1\}$. Alice is given some input $x \in \mathcal X$ and Bob is given some input $y \in \mathcal Y$. Alice's goal is to send a single message $m \in \{0,1\}^*$ to Bob, so that he can compute $f(x,y)$. Of course, she could choose to send her entire input $x$, but in many cases it may be possible to communicate fewer bits and still be successful.

To be precise about the size of the message Alice must send, let $X$ and $Y$ be the random variables (possibly correlated) for the inputs of Alice and Bob, respectively, and let $M$ be the random variable for Alice's message to Bob. Notice that implicit in $M$ is Alice's communication strategy, which may be an arbitrary (randomized) function of her input. Let's start with the easiest setting, where Alice and Bob run deterministic algorithms. 
\begin{defn}[Deterministic One-Way Communication Complexity]
$D^{(X,Y)}_\delta(f)$, the bounded-error deterministic one-way communication complexity of $f$, is the minimum number of bits that Alice must send to Bob to compute $f$ with at most $\delta$ probability of error whenever their inputs are chosen according to the distribution $(X,Y)$.
\end{defn}
A natural variant of classical one-way protocols is when Alice and Bob are allowed to run randomized algorithms. There are two settings: private-coin protocols, where Alice and Bob each have access to private random strings; and public-coins protocols, where Alice and Bob also have access to a shared random string along with their private strings. It will not be critical to completely understand the nuances of the various types of protocols for our proof, but we define them in order to precisely state the theorems on which the lower bound rests.
\begin{defn}[Randomized One-Way Communication Complexity]
$R_\delta(f)$, the bounded-error randomized one-way communication complexity of $f$, is the minimum number of bits that Alice must send to Bob with a public-coin protocol to compute $f$ over \emph{all} possible inputs with failure probability at most $\delta$.
\end{defn}

While randomized strategies may seem more powerful than deterministic strategies, Yao's minimax principle shows that there is always some input distribution for which the randomized and deterministic complexities coincide:
\begin{thm}[Yao's minimax principle \cite{yao1977probabilistic}]
\label{thm:yaos_minimax}
$\max_{(X,Y)} D^{(X,Y)}_\delta (f) = R_\delta(f)$. 
\end{thm}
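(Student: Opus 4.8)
The plan is to prove the two inequalities $\max_{(X,Y)} D^{(X,Y)}_\delta(f) \le R_\delta(f)$ and $R_\delta(f) \le \max_{(X,Y)} D^{(X,Y)}_\delta(f)$ separately, with the nontrivial direction resting on von Neumann's minimax theorem for finite zero-sum games. Throughout I assume the input alphabets $\mathcal X, \mathcal Y$ are finite, so that for any fixed communication budget $k$ the set $\mathcal P_k$ of deterministic one-way protocols sending at most $k$ bits is finite. The conceptual crux is the standard observation that a public-coin randomized protocol is \emph{exactly} a probability distribution over deterministic protocols, so a randomized strategy and a mixed strategy in a suitable game are the same object.

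For the easy direction, let $R$ be an optimal public-coin protocol with cost $c = R_\delta(f)$, realized as a distribution $p$ over deterministic protocols $\{P_r\}$, each of cost at most $c$, with error at most $\delta$ on every fixed input. Fix any input distribution $(X,Y)$. Averaging the per-input guarantee, the error of $R$ under $(X,Y)$ is at most $\delta$; writing this as $\E_{r \sim p}\,\E_{(x,y)\sim(X,Y)}[\mathrm{err}(P_r,(x,y))] \le \delta$ and applying an averaging argument, some fixed $r$ satisfies $\E_{(x,y)\sim(X,Y)}[\mathrm{err}(P_r)] \le \delta$. That $P_r \in \mathcal P_c$ witnesses $D^{(X,Y)}_\delta(f) \le c$, and since $(X,Y)$ was arbitrary the maximum is at most $R_\delta(f)$.

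For the hard direction, set $k := \max_{(X,Y)} D^{(X,Y)}_\delta(f)$ and consider the finite zero-sum game in which the minimizer picks $P \in \mathcal P_k$, the maximizer picks $(x,y) \in \mathcal X \times \mathcal Y$, and the payoff to the maximizer is the indicator that $P$ errs on $(x,y)$. Mixed minimizer strategies are distributions over $\mathcal P_k$, i.e.\ public-coin protocols of cost $\le k$, and mixed maximizer strategies are input distributions. By the definition of $k$, against every input distribution $(X,Y)$ there is a deterministic response in $\mathcal P_k$ of expected error at most $\delta$, so the upper value $\max_{(X,Y)} \min_{P \in \mathcal P_k} \E[\mathrm{err}] \le \delta$. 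Von Neumann's minimax theorem equates this with the lower value $\min_{p} \max_{(x,y)} \E_{P \sim p}[\mathrm{err}]$, yielding a distribution $p$ over $\mathcal P_k$ with error at most $\delta$ on \emph{every} input $(x,y)$ simultaneously. This $p$ is precisely a public-coin protocol of worst-case cost at most $k$ succeeding with probability $\ge 1-\delta$ on all inputs, so $R_\delta(f) \le k$, and combining the two bounds gives equality.

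The main obstacle is the hard direction: one must set up the error game so that its payoff matrix is genuinely finite, which is exactly why restricting to protocols of bounded communication $k$ and to finite input alphabets is essential before invoking the minimax theorem (equivalently, LP duality). The payoff of duality is precisely the swap that converts a \emph{per-distribution} existence statement (a good deterministic protocol for each hard distribution) into a \emph{single} randomized protocol good against all inputs at once. A secondary point to verify is that the mixed minimizer strategy is a legitimate public-coin protocol of worst-case cost $\le k$, which holds because it is supported on protocols each sending at most $k$ bits.
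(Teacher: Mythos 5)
The paper does not prove this statement at all: it is imported verbatim as a known result, cited to Yao's 1977 paper, and used as a black box in the lower-bound argument of \Cref{thm:main_lower_bound}. Your proof is the standard and correct argument for that classical result---the easy direction by averaging a public-coin protocol against a fixed input distribution, and the hard direction by applying von Neumann's minimax theorem (equivalently LP duality) to the finite zero-sum game between deterministic protocols of cost at most $k$ and inputs---including the necessary care that finiteness of the payoff matrix comes from bounding the communication and assuming finite input alphabets, so there is nothing in the paper to compare it against and no gap to report.
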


It turns out that we will eventually be interested in the amount of \emph{information} contained in Alice's message $M$, not just the length, which is what is measured by the communication complexity. That said, these two quantities are intuitively related---if the information $I(M : X)$ that Alice's message $M$ reveals about her input $X$ is much lower than the number of bits she is communicating, she should be able to send a smaller message and still be successful. The following theorem formalizes this message compression idea:

\begin{thm}[\cite{harsha2007communication}] 
\label{thm:information_bounds_communication}
$D^{(X,Y)}_{\Delta+\delta}(f) = 2 \Delta^{-1} \left[ \min I(M : X) + O(1) \right]$ where the minimization is over all one-way private-coin protocols for $f$ with input distribution $(X,Y)$ and probability of error at most $\delta$. 
\end{thm}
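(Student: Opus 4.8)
The plan is to read the claimed identity as the message-compression upper bound
$$
D^{(X,Y)}_{\Delta+\delta}(f) \le 2\Delta^{-1}\big(\min I(M:X) + O(1)\big),
$$
and to construct, from any information-efficient protocol, a short deterministic one. Fix a one-way private-coin protocol for $f$ that errs with probability at most $\delta$ on inputs drawn from $(X,Y)$, let $I := I(M:X)$ be its information cost, write $P_{M|X=x}$ for the law of Alice's message $M$ on input $x$, and let $P_M := \E_{x \sim X}[P_{M|X=x}]$ be its marginal. The goal is to let Bob reconstruct one genuine sample from $P_{M|X=x}$ while Alice transmits only about $I$ bits, and then to trade a little extra error for a worst-case length bound.

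The first and central step would be a \emph{correlated sampling} lemma. Using the public coins, Alice and Bob share an i.i.d.\ sequence $m_1, m_2, \dots$ drawn from $P_M$. On input $x$, Alice runs a greedy rejection-sampling procedure against this sequence, accepting some index $K$ so that $m_K$ is distributed \emph{exactly} as $P_{M|X=x}$; she sends $K$ with a prefix-free integer code (length $\log K + O(\log\log K)$ bits) and Bob feeds $m_K$ into the original receiver. The crux is that the greedy scheme can be analyzed to give $\E[\log K \mid x] \le D_{\mathrm{KL}}(P_{M|X=x}\,\|\,P_M) + O\!\big(\log(D_{\mathrm{KL}}+1)\big) + O(1)$; averaging over $x \sim X$ and using the identity $\E_{x}\,D_{\mathrm{KL}}(P_{M|X=x}\,\|\,P_M) = I$ then makes the expected encoded length of $K$ at most $I + O(1)$. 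Since Bob's reconstructed sample has exactly the right conditional law, the resulting public-coin protocol still computes $f$ with error at most $\delta$.

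Second, I would truncate for a worst-case guarantee: if the encoding of $K$ exceeds $L := 2\Delta^{-1}(I+O(1))$ bits, Alice aborts and Bob outputs a default value. By Markov's inequality the abort probability is at most $(I+O(1))/L = \Delta/2 \le \Delta$, so the error rises to at most $\delta + \Delta$ while the communication is now always at most $L$. Third, I would derandomize: the error is an average over the fixed input distribution $(X,Y)$ and over the public coins, so some coin setting has error over $(X,Y)$ no larger than the average, i.e.\ at most $\delta+\Delta$; fixing the coins to that value leaves a deterministic protocol with the same length bound $L$ (equivalently, one may invoke \Cref{thm:yaos_minimax}). Taking the minimum over all information-efficient protocols yields the claim.

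The main obstacle is the first step. Naive rejection sampling, which accepts $m_i$ with probability proportional to $P_{M|X=x}(m_i)/P_M(m_i)$, has expected index governed by the max-divergence $\sup_m P_{M|X=x}(m)/P_M(m) = 2^{D_\infty(P_{M|X=x}\|P_M)}$ rather than the KL divergence, which is far too lossy to reach the $I + O(1)$ target. Achieving the KL bound $\E[\log K \mid x] \approx D_{\mathrm{KL}}$ requires the greedy/residual construction and its careful expected-index analysis; by comparison, the Markov truncation and the averaging/derandomization are routine.
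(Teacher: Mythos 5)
This theorem is not proved in the paper at all---it is imported verbatim from \cite{harsha2007communication}---so the only meaningful comparison is with that reference, and your proposal is essentially a reconstruction of its argument: greedy rejection sampling against a shared i.i.d.\ sequence drawn from $P_M$ (the one-shot message-compression step), a Markov truncation converting the expected-length guarantee into a worst-case length at the cost of $\Delta$ extra error, and an averaging step to fix the coins (your parenthetical appeal to \Cref{thm:yaos_minimax} is not needed; plain averaging over all coins, public \emph{and} private, including those used in the greedy acceptance step and in Bob's post-processing, already yields the deterministic protocol). Two points deserve care. First, the statement should be read as an upper bound on $D^{(X,Y)}_{\Delta+\delta}(f)$---the direction the paper actually uses to turn its $\Omega(\sqrt{n/\alpha})$ communication bound into an information bound---and your choice to prove only that direction is the right one; as a literal equality the displayed formula is not correct. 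Second, your claim that the expected encoded length is at most $I + O(1)$ is slightly too strong: the greedy sampler gives $\E[\log K \mid x] \le D_{\mathrm{KL}}(P_{M|X=x}\,\|\,P_M) + O\big(\log(D_{\mathrm{KL}}+1)\big) + O(1)$, and after prefix-free coding and averaging over $x$ (using concavity of $\log$) one obtains $I + O(\log(I+1)) + O(1)$, not $I + O(1)$. This weakening is harmless for the paper's application, since $\log(I+1) = O(I+1)$ and hence the truncated protocol still has worst-case length $O\big(\Delta^{-1}(I+1)\big)$, which is all that \Cref{thm:main_lower_bound} uses; but as literally stated your expected-length bound does not follow from the rejection-sampling analysis you invoke. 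Your closing observation---that naive rejection sampling is governed by the max-divergence and that the greedy/residual construction is the crux---is exactly the content of the cited work.
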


In the next section, we will show that classical shadows must also contain a lot of information, which will be the basis of our lower bound.

\subsubsection{Classical shadows for Boolean Hidden Matching}
\label{sec:boolean_hidden_matching}
Our starting point is a lower bound for the one-way communication complexity for the Boolean Hidden Matching function $\BHM_\alpha \colon \mathcal X \times \mathcal Y \to \{0,1\}$ which was introduced by Gavinsky, Kempe, Kerenidis, Raz, and de Wolf. 

\begin{thm}[\cite{Gavinsky2007exponential}]
\label{thm:gavinsky_main}
$R_\delta(\BHM_\alpha) = \Theta(\sqrt{n/\alpha})$ for any constant $\delta < 1/2$.
\end{thm}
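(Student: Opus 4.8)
The plan is to prove the two bounds separately, handling the upper bound with a short sampling protocol and devoting the real effort to the lower bound, which carries the Fourier-analytic content. Recall the problem: Alice holds $x \in \{0,1\}^n$, Bob holds a matching $M$ with $\alpha n$ edges together with a string $w$, under the promise that $w$ equals the edge-parities $Mx$ either exactly or with every bit flipped, and Bob must decide which. For $R_\delta(\BHM_\alpha) = O(\sqrt{n/\alpha})$ I would have Alice and Bob use their shared (public) random string to jointly sample a uniformly random coordinate set $S \subseteq [n]$ with $|S| = \Theta(\sqrt{n/\alpha})$; Alice transmits the bits $\{x_i : i \in S\}$, costing only $|S| = O(\sqrt{n/\alpha})$ bits since no indices need naming. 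A birthday-type estimate shows the expected number of edges of $M$ with both endpoints in $S$ is $\approx \alpha n \cdot (|S|/n)^2 = \Theta(1)$, so with constant probability Bob finds an edge $(i,j) \in M$ with both $x_i, x_j$ known; comparing $x_i \oplus x_j$ against the corresponding bit of $w$ reveals the hidden bit. Boosting to success $1-\delta$ costs an $O(\log \delta^{-1})$ factor, which is $O(1)$ for constant $\delta$.

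For the lower bound I would first apply Yao's minimax principle (\Cref{thm:yaos_minimax}) to reduce to a deterministic protocol succeeding with probability $\ge 1-\delta$ against the hard distribution: $x$ uniform on $\{0,1\}^n$, $M$ uniform over $\alpha n$-edge matchings, and the hidden bit uniform. A deterministic message of length $c$ partitions $\{0,1\}^n$ into at most $2^c$ cells, so by averaging it suffices to show that for a typical consistent set $A \subseteq \{0,1\}^n$, Bob's advantage in guessing the hidden bit is small unless $c = \Omega(\sqrt{n/\alpha})$. The heart is a Fourier reduction. Writing $g := \mathbf 1_A \cdot 2^n/|A|$ for the density-normalized indicator and noting that the distribution of $Mx$ for $x$ uniform on $A$ has Fourier coefficients $\widehat{g}(e(S))$—where, for a set $S$ of edges, $e(S) \subseteq [n]$ is the disjoint union of their endpoints—and that flipping all $\alpha n$ output bits negates exactly the odd-$|S|$ characters, an $L_1$–$L_2$ (Cauchy–Schwarz plus Parseval) bound shows that Bob's squared distinguishing advantage, averaged over $M$, is at most
$$\E_M \sum_{|S|\ \mathrm{odd}} \widehat{g}(e(S))^2.$$

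The crux, and the step I expect to be hardest, is to bound this quantity by $O(\alpha c^2/n)$, after which $c = \Omega(\sqrt{n/\alpha})$ is immediate. Two ingredients combine. First, a combinatorial estimate: for a fixed vertex set $U$ with $|U| = 2k$, the probability that a uniformly random $\alpha n$-edge matching contains $k$ edges whose union is exactly $U$ is at most $(2k-1)!!\,(2\alpha/n)^k$, since $U$ must be perfectly matched by edges of $M$. This rewrites the display as $\sum_k (2k-1)!!\,(2\alpha/n)^k\, W_{2k}[g]$, with $W_{2k}[g]$ the level-$2k$ Fourier weight. Second, the level-$k$ inequality from hypercontractivity gives $W_{2k}[g] \le (e c'/k)^{2k}$ for a set of density $2^{-c}$, where $c' = \Theta(c)$; substituting and bounding $(2k-1)!! \le (2k)^k$ makes the $k$-th term at most $(4 e^2 \alpha c'^2/(nk))^k$, a sum dominated by its $k=1$ term $O(\alpha c^2/n)$.

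The delicate points are getting the matching-decomposition probability right for a \emph{partial} ($\alpha n$-edge) matching rather than a perfect one, and verifying that the level-$k$ inequality applies across the full range of $k$ so that the higher odd levels contribute nothing beyond the level-two term; reconciling these two estimates is precisely what forces the $\sqrt{n/\alpha}$ threshold. Finally, any protocol with error $\delta < 1/2$ has advantage at least $1 - 2\delta = \Omega(1)$, so $O(\alpha c^2/n) = \Omega(1)$ yields $R_\delta(\BHM_\alpha) = \Omega(\sqrt{n/\alpha})$, matching the upper bound.
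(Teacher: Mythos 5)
This theorem is never proved in the paper: it is imported as a black box from \cite{Gavinsky2007exponential} and used only as an ingredient in the reduction behind \Cref{thm:main_lower_bound}. So there is no internal proof to compare your attempt against; the right comparison is with the original Gavinsky--Kempe--Kerenidis--Raz--de Wolf argument, and your proposal is essentially a faithful reconstruction of it. Your upper bound (publicly sampling a coordinate set $S$ of size $\Theta(\sqrt{n/\alpha})$ and exploiting a birthday collision with the matching) is exactly the standard protocol and is sound: for any fixed matching of $\alpha n$ disjoint edges, the probability that no edge falls inside a random $S$ of size $C\sqrt{n/\alpha}$ is at most $e^{-\Omega(C^2)}$, and since Bob knows when he has failed to find an edge, independent repetition handles any constant $\delta < 1/2$. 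Your lower bound follows the original route step for step: Yao's principle (\Cref{thm:yaos_minimax}), the partition into at most $2^c$ message cells, the identification of the Fourier spectrum of the distribution of $Mx$ on a cell with the coefficients $\widehat{g}(e(S))$, the $L_1$--$L_2$ step isolating the odd levels, the matching-coverage estimate $(2k-1)!!\,O(\alpha/n)^k$, and the hypercontractive level-$k$ inequality.

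The one place where your sketch leaves real technical debt is the point you flag yourself. The level-$k$ inequality $W_{2k}[g] \le (O(c)/k)^{2k}$ is valid only for $k = O(c)$ (equivalently $2k \le 2\ln(1/\mu)$ for a cell of density $\mu$), so ``the sum is dominated by its $k=1$ term'' cannot be concluded level by level over the whole range of $k$. Above that threshold one must fall back on the trivial bound $W_{2k}[g] \le \E[g^2] = 2^{c+O(1)}$ and ask the coverage factor itself to decay geometrically; this is where the constraints $k \le \alpha n$ and $\alpha \le 1/4$ enter (giving $(2k-1)!!\,(O(\alpha)/n)^k \le (O(\alpha^2))^k$, with the refined estimate on the partial-matching probability needed when $\alpha$ is not a small constant), and it leaves a residual additive $2^{-\Theta(c)}$ in the squared-advantage bound. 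That residual means the argument as you state it yields the dichotomy $c = \Omega(\sqrt{n/\alpha})$ or $c = O(1)$, and the constant-communication regime must be dispatched separately. These points are all resolved in the cited paper; your outline is the correct one, and the steps you flag as delicate are precisely where the original proof does its work.
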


Recall that our lower bound technique is to show that a classical shadows strategy with few samples implies a communication protocol for the Boolean Hidden Matching function with low complexity. To do this, let's look more closely at the $\BHM_\alpha$ function.

It will be useful describe the function directly as a communication problem with the inputs $x \in \mathcal X$ for Alice and $y \in \mathcal Y$ for Bob. Alice is given $(0,1)$-assignments for the $n$ vertices of a graph, and Bob is given $(0,1)$-assignments to $\alpha n$ vertex-disjoint edges in the graph. A set of vertex-disjoint edges from a graph is called a \emph{matching}, hence the name of the function. Importantly, the function is only defined on inputs for Alice and Bob that satisfy the following promise: for each edge in the matching, the parity of the connected vertices (from Alice's input) plus Bob's edge bit assignment is some constant $b\in \set{0,1}$. The output of the function is then defined as this bit $b$.

Formally, for every $n \ge 2$, the Boolean Hidden Matching function $\BHM_\alpha \colon \mathcal X \times \mathcal Y \to \{0,1\}$ with parameter $\alpha \in (0,1/4]$ is defined on inputs $\mathcal X = \{ 0, 1 \}^{n}$ and $\mathcal Y = (\mathbb N, \mathbb N)^{\alpha n} \times \{ 0, 1 \}^{\alpha n}$ as follows:

\begin{center}
    \begin{tabular}{p{1.5cm} p{11.5cm}}
    \textbf{Alice:} & $x \in \mathcal X$. \\
    \textbf{Bob:} & $y = (\mathcal M, w) \in \mathcal Y$, where $\mathcal{M} = \{ (i_1, j_1), \ldots, (i_{\alpha n}, j_{\alpha n})\}$ is a matching and $w \in \{ 0, 1 \}^{\alpha n}$ is a bit assignment. \\
    \textbf{Promise:} & There exists $b \in \{ 0, 1 \}$ such that $b = x_{i_k} \oplus x_{j_k} \oplus w_{k}$ for all $k$. \\
    \textbf{Output:} & $b \in \{0,1\}$
    \end{tabular}
\end{center}

In the setting where $\alpha$ is constant, \cite{Gavinsky2007exponential} show that the \emph{quantum} communication complexity of the Boolean Hidden Matching problem is low---Alice only needs to send a $(\log n)$-qubit state. Gosset and Smolin \cite{gosset_smolin2019} notice that this implies the existence of a set of states and observables whose expectation values give solutions to the Boolean Hidden Matching function. We generalize this observation to the non-constant $\alpha$ setting below:

\begin{thm}
\label{thm:bhm_to_shadows}
There is a set of states $\{\rho_x \in \mathbb C^n\}_{x \in \mathcal X}$ and observables $\{O_y \in \mathrm{Obs}(\alpha n)\}_{y \in \mathcal Y}$ such that $\Tr(O_y \rho_x) = 2 \alpha \cdot \BHM_\alpha(x,y)$. Furthermore, a protocol for the classical shadows task for observables of squared Frobenius norm $B:= \alpha n$, estimation accuracy $\epsilon := \alpha$, and failure probability $\delta$ implies a one-way private-coin protocol for Boolean Hidden Matching with failure probability $\delta$.
\end{thm}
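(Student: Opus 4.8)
The plan is to construct the ensemble explicitly, verify the promised trace identity, and then package the two classical-shadows subroutines into a one-way protocol. For Alice's input $x \in \{0,1\}^n$ I would use the \emph{phase encoding} $\ket{\psi_x} := \frac{1}{\sqrt n}\sum_{i=1}^n (-1)^{x_i}\ket{i}$ and set $\rho_x := \ketbra{\psi_x}{\psi_x}$, a pure state on $\mathbb C^n$. For Bob's input $y = (\mathcal{M}, w)$ I would attach to each matched edge $(i_k,j_k)$ the unit vector $\ket{e_k} := \tfrac{1}{\sqrt 2}(\ket{i_k} - (-1)^{w_k}\ket{j_k})$ and define $O_y := \sum_{k=1}^{\alpha n}\ketbra{e_k}{e_k}$. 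Crucially $O_y$ is built from $y$ alone and does not depend on the hidden bit $b$, which is essential since Bob never learns $b$ directly.

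First I would check that $O_y \in \obs(\alpha n)$. Since $\mathcal{M}$ is a matching, the index sets $\{i_k,j_k\}$ are pairwise disjoint, so the $\ket{e_k}$ are orthonormal and $O_y$ is an orthogonal projector of rank $\alpha n$; hence $\norm{O_y} = 1$ and $\Tr(O_y^2) = \Tr(O_y) = \alpha n = B$, landing exactly on the boundary of the observable class. The heart of the argument is then the trace identity. I would expand $\Tr(O_y\rho_x) = \sum_k |\innerbraket{\psi_x}{e_k}|^2$ and compute a single overlap as $\innerbraket{\psi_x}{e_k} = \frac{1}{\sqrt{2n}}((-1)^{x_{i_k}} - (-1)^{w_k}(-1)^{x_{j_k}})$, whence $|\innerbraket{\psi_x}{e_k}|^2 = \frac{1}{n}(1 - (-1)^{x_{i_k}\oplus x_{j_k}\oplus w_k})$. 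Invoking the Boolean Hidden Matching promise $x_{i_k}\oplus x_{j_k}\oplus w_k = b$ makes every edge contribute the \emph{same} value $\frac{1}{n}(1-(-1)^b)$, so the $\alpha n$ terms sum to $\Tr(O_y\rho_x) = \alpha(1-(-1)^b) = 2\alpha\cdot\BHM_\alpha(x,y)$, as claimed.

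For the reduction, Alice (who knows $x$) prepares $\rho_x^{\otimes s}$, runs $\mathcal A_{\mathrm{meas}}$ on it, and sends the resulting classical shadow to Bob; Bob (who knows $y$) forms $O_y$, runs $\mathcal A_{\mathrm{est}}(O_y,\cdot)$ on the shadow to obtain an estimate $E$, and outputs $0$ if $E < \alpha$ and $1$ otherwise. The target value $\Tr(O_y\rho_x)$ is $0$ for $b=0$ and $2\alpha$ for $b=1$, a gap of $2\alpha = 2\epsilon$, so an estimate accurate to within $\epsilon = \alpha$ always falls on the correct side of the threshold $\alpha$; consequently Bob recovers $b$ whenever the classical shadows estimate succeeds, which by assumption happens with probability at least $1-\delta$. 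Since Alice's only randomness is internal to $\mathcal A_{\mathrm{meas}}$ and Bob's is internal to $\mathcal A_{\mathrm{est}}$, with no shared coins, this is a one-way private-coin protocol with failure probability at most $\delta$.

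I expect this construction to be a matter of choosing the right gadget rather than overcoming a deep obstacle; the one thing needing care is the simultaneous bookkeeping of constants, namely forcing $\Tr(O_y^2)$ to meet the budget $B = \alpha n$ exactly, matching the value gap $2\alpha$ to twice the accuracy $\epsilon$, and using the promise to make the per-edge contributions uniform so that the sum collapses cleanly to $2\alpha\cdot\BHM_\alpha(x,y)$.
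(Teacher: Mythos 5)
Your proposal is correct and follows essentially the same route as the paper: the identical phase-encoded state $\rho_x$, the identical rank-$\alpha n$ projector $O_y$ built from the matching, the same trace computation using the promise, and the same measure-then-estimate-then-round reduction (your threshold at $\alpha$ is equivalent to the paper's rounding of $E/(2\alpha)$). Your explicit check that the matching makes the $\ket{e_k}$ orthonormal is a nice touch the paper leaves implicit, but the argument is the same.
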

\begin{proof}
Given valid inputs $x$ and $y = (\mathcal M, w)$ to the $\BHM_\alpha$ function, define the pure state 
$$
\ket{\psi_x} := \frac{1}{\sqrt{n}} \sum_{i=1}^n (-1)^{x_i} \ket{i}
$$
and the observable
$$
O_y := \sum_{k=1}^{\alpha n} \frac{1}{2}(\ket{i_k} - (-1)^{w_k} \ket{j_k})(\bra{i_k} - (-1)^{w_k} \bra{j_k}).
$$
Notice that $O_y \in \obs(\alpha n)$ since $O_y$ is a $\alpha n$-rank projector. Letting $\rho_x := \ketbra{\psi_x}{\psi_x}$, we get
$$
\Tr( O_y \rho_x ) = \bra{\psi_x} O_y \ket{\psi_x} = \frac{1}{n} \sum_{k=1}^{\alpha n}(1 - (-1)^{x_{i_k} \oplus x_{j_k} \oplus w_k })= 2 \alpha b = 2\alpha \BHM_\alpha(x,y).
$$
In particular, this implies that if $E$ is an $\alpha$-approximation to $\Tr( O_y \rho_x )$, then 
$$
|E - \Tr( O_y \rho_x )| < \alpha \implies \left|\frac{E}{2\alpha} - \BHM_\alpha(x,y)\right| < 1/2,
$$
or in other words, rounding $E/(2\alpha)$ is equal to $\BHM_\alpha(x,y)$.

We now claim that the existence of these states and observables implies a private-coin one-way protocol for the Boolean Hidden Matching problem (see \Cref{fig:lower_bound_figure}): Suppose we want a protocol for $\BHM_\alpha$ with probability of failure at most $\delta$. Let $s = \task(\alpha n, \alpha, \delta)$. On input $x$, Alice prepares the state $\rho_x^{\otimes s}$, measures it with a valid classical shadows strategy, and sends the resulting classical shadow to Bob. On input $y$, Bob computes the observable $O_y$, and then computes an estimate $E$ for $\Tr(O_y \rho_x)$ using the classical shadow sent by Alice. The correctness of the classical shadows strategy implies that $E$ is an $\alpha$-approximation to $\Tr(O_y \rho_x)$ with probability of failure at most $\delta$. As shown above, Bob can then compute $\BHM_\alpha(x,y)$ with failure probability at most $\delta$ by appropriately rounding the estimate.
\end{proof}

Let us now note a key property of the one-way protocol in \Cref{thm:bhm_to_shadows} for the Boolean Hidden Matching problem. Namely, once Alice prepares $\rho_x^{\otimes s}$, she no longer uses her original input $x$. Her message (the classical shadow) only depends on her measurement of the state $\rho_x^{\otimes s}$. In particular, if her message is to contain a lot of information about her input $x$, then Holevo's theorem stipulates that she must be measuring a state of high dimension, or, in other words, $s$ must be large:

\begin{thm}[Holevo \cite{holevo1973bounds}]
\label{thm:holevo}
Let $Z$ be the classical outcome of measuring a $d$-dimensional state drawn from an ensemble $\{ \rho_x \}_{x \in \mathcal X}$ according to $x \sim X$. Then, $I(X : Z) \le \log d$.
\end{thm}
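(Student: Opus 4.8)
The plan is to recognize the claimed inequality as a special case of the \emph{Holevo bound} and to derive it from the monotonicity of quantum mutual information under quantum channels (equivalently, strong subadditivity of the von Neumann entropy). Write $p_x := \Pr[X = x]$, let $\bar\rho := \sum_x p_x \rho_x$ be the average state of the ensemble, define the von Neumann entropy $S(\sigma) := -\Tr(\sigma \log \sigma)$, and set the Holevo quantity $\chi := S(\bar\rho) - \sum_x p_x S(\rho_x)$. The proof splits into two inequalities, $I(X : Z) \le \chi$ and $\chi \le \log d$, which together give the claim.

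For the first step I would package the ensemble into the classical--quantum state
$$
\rho_{XQ} := \sum_x p_x \ketbra{x}{x} \otimes \rho_x
$$
on a register $X$ (holding a copy of Alice's input) tensored with the quantum system $Q$ that Bob measures. A direct computation of the three entropies $S(X)$, $S(Q)$, and $S(XQ)$ shows that the quantum mutual information of this state equals the Holevo quantity, $I(X:Q)_{\rho_{XQ}} = \chi$. Bob's measurement is a POVM, which I would model as a quantum channel $\mathcal E$ sending $Q$ to a classical outcome register $Z$; applying $\mathrm{id}_X \otimes \mathcal E$ produces a classical--classical state whose quantum mutual information is exactly the classical mutual information $I(X:Z)$ appearing in the statement. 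The data-processing inequality for quantum mutual information---its non-increase under a channel applied to one subsystem---then yields $I(X:Z) \le I(X:Q)_{\rho_{XQ}} = \chi$.

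For the second step I would simply note that $\chi = S(\bar\rho) - \sum_x p_x S(\rho_x) \le S(\bar\rho)$, since von Neumann entropy is non-negative, and that $S(\bar\rho) \le \log d$ because $\bar\rho$ is a density operator on a $d$-dimensional space, whose entropy is maximized (at $\log d$) by the maximally mixed state. Chaining the two steps gives $I(X:Z) \le \log d$.

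The main obstacle is the data-processing inequality invoked in the first step: it is equivalent to strong subadditivity of the von Neumann entropy, the one genuinely deep ingredient, which is not elementary to establish from scratch. In a self-contained treatment I would either cite it directly (via Lieb--Ruskai) or route the argument through monotonicity of quantum relative entropy under CPTP maps (Lindblad--Uhlmann): writing $\chi = \sum_x p_x D(\rho_x \,\|\, \bar\rho)$ and observing that the measurement channel maps $\rho_x \mapsto \mathcal E(\rho_x)$ and $\bar\rho \mapsto \mathcal E(\bar\rho)$, so that $I(X:Z) = \sum_x p_x D(\mathcal E(\rho_x) \,\|\, \mathcal E(\bar\rho)) \le \sum_x p_x D(\rho_x \,\|\, \bar\rho) = \chi$. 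Every remaining step is a routine entropy computation.
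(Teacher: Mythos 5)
Your proposal is correct, but there is nothing in the paper to compare it against: the paper invokes this statement purely as a citation to Holevo's 1973 result and gives no proof of its own, treating it as a black box inside the lower-bound argument of \Cref{thm:main_lower_bound}. What you have written is the standard textbook derivation of the Holevo bound, and it is sound in all its steps: the identification $I(X:Q)_{\rho_{XQ}} = \chi$ for the classical--quantum state is a routine entropy computation; modeling the POVM as a channel $\mathcal E$ to a classical register and applying data processing gives $I(X:Z) \le \chi$; and $\chi \le S(\bar\rho) \le \log d$ follows from non-negativity of von Neumann entropy and the maximal-entropy property of the maximally mixed state. Your alternative routing through $\chi = \sum_x p_x D(\rho_x \,\|\, \bar\rho)$ and Lindblad--Uhlmann monotonicity is also correct, and you are right to flag that the one genuinely deep ingredient (strong subadditivity, equivalently monotonicity of relative entropy) must be cited rather than derived---this is exactly the level of rigor one would expect, since proving SSA from scratch is well outside the scope of this kind of appeal. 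One small remark for fit with the paper: the application in \Cref{thm:main_lower_bound} uses the bound with $d$ replaced by $\dim \sym^{(s)}$, i.e., the dimension of the subspace supporting the measured states; your proof delivers this refinement automatically, since $S(\bar\rho) \le \log(\rank \bar\rho)$ whenever the ensemble is supported on a subspace of that dimension.
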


Na\"{i}vely, the states $\rho_{x}^{\otimes s}$ in \Cref{thm:bhm_to_shadows} consist of $s$ qudits of dimension $n$, i.e., it is a space of dimension $n^s$. However, since each $\rho_x^{\otimes s}$ is invariant under permutation, it belongs to the \emph{symmetric subspace}, which has dimension nearly a factor of $s!$ smaller (see \Cref{symmetric_subspace_size}).

We are now ready to put all of the pieces of the lower bound together:
\begin{thm}
\label{thm:main_lower_bound}
$\task(B, \epsilon) = \Omega\left(\frac{\sqrt{B}}{\epsilon \log(B+1)}\right)$ provided $B \le \epsilon d$.
\end{thm}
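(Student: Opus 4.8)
The plan is to instantiate the reduction of \Cref{thm:bhm_to_shadows} with the parameters $\alpha := \epsilon$ and $n := B/\epsilon$, so that $B = \alpha n$ and $\epsilon = \alpha$; the hypothesis $B \le \epsilon d$ guarantees $n \le d$, so the $n$-dimensional states $\rho_x$ embed in the $d$-dimensional space, and we may take $\epsilon \le 1/4$ so that $\alpha \in (0,1/4]$, which is the interesting regime. A classical shadows protocol using $s := \task(B,\epsilon)$ samples then yields a one-way private-coin protocol for $\BHM_\alpha$ in which Alice's entire message $M$ is the classical shadow obtained by measuring $\rho_x^{\otimes s}$. Since $\sqrt{n/\alpha} = \sqrt{B}/\epsilon$, the strategy is to sandwich the mutual information $I(X:M)$ between Alice's input $X$ and her message: a communication-complexity argument forces $I(X:M) = \Omega(\sqrt B/\epsilon)$ from below, while Holevo's theorem combined with the symmetric-subspace dimension bounds it from above by $\log\kappa_s$. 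Comparing the two and solving for $s$ gives the claim.

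For the lower bound on $I(X:M)$, I would first invoke Yao's minimax principle (\Cref{thm:yaos_minimax}) to fix a hard input distribution $(X^*,Y^*)$ achieving $D^{(X^*,Y^*)}_{\Delta+\delta}(\BHM_\alpha) = R_{\Delta+\delta}(\BHM_\alpha)$, where $\Delta$ is a small constant and $\delta$ is the (constant) failure probability, chosen so that $\Delta + \delta < 1/2$; by \Cref{thm:gavinsky_main} this equals $\Theta(\sqrt{n/\alpha})$. The message-compression bound (\Cref{thm:information_bounds_communication}) then says any private-coin protocol of error at most $\delta$ on $(X^*,Y^*)$ has information cost at least $\tfrac{\Delta}{2} D^{(X^*,Y^*)}_{\Delta+\delta}(\BHM_\alpha) - O(1) = \Omega(\sqrt{n/\alpha})$. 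Because our classical shadows protocol succeeds on every input with failure probability $\delta$, it is in particular a valid error-$\delta$ protocol on $(X^*,Y^*)$, so its information cost inherits this lower bound: $I(X^*:M) = \Omega(\sqrt{n/\alpha}) = \Omega(\sqrt B/\epsilon)$.

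For the upper bound on $I(X:M)$, the key observation is that each $\rho_x^{\otimes s}$ is permutation-invariant and so lives in the symmetric subspace of $(\mathbb C^n)^{\otimes s}$, which by \Cref{symmetric_subspace_size} has dimension only $\kappa_s = \binom{s+n-1}{n-1}$ rather than $n^s$. Since Alice's message is just the outcome of measuring this state, Holevo's theorem (\Cref{thm:holevo}) gives $I(X^*:M) \le \log \kappa_s$. Combining the two bounds yields $\log\binom{s+n-1}{n-1} = \Omega(\sqrt B/\epsilon)$, and it remains only to invert this for $s$.

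The main obstacle, and the step deserving the most care, is this final inversion, because a naive estimate produces $\log(B/\epsilon)$ in the denominator rather than the claimed $\log(B+1)$. I would argue as follows. The regime $s \ge n$ trivially satisfies the bound (then $s \ge B/\epsilon$), so assume $s < n$, where $\binom{s+n-1}{n-1} = \binom{s+n-1}{s} \le (2en/s)^s$ and hence $\log\kappa_s \le s\log(2en/s)$. Writing $s = \lambda\sqrt B/\epsilon$, the crucial cancellation is that $n/s = \sqrt B/\lambda$ is \emph{independent of} $\epsilon$, so $\log(2en/s) = \tfrac12\log B + O(\log\log B) = O(\log(B+1))$ in the relevant range. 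Since $h(s) := s\log(2en/s)$ is increasing for $s \le n$, assuming $s < \tfrac{c'\sqrt B}{\epsilon\log(B+1)}$ for a suitable constant $c'$ would force $\log\kappa_s \le h(s) < c\sqrt B/\epsilon$, contradicting the lower bound $\log\kappa_s \ge c\sqrt B/\epsilon$. Therefore $s = \Omega\!\left(\tfrac{\sqrt B}{\epsilon\log(B+1)}\right)$, as claimed.
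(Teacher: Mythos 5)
Your proposal is correct, and it follows the paper's overall architecture exactly: the same $\BHM_\alpha$ reduction via \Cref{thm:bhm_to_shadows} with $\alpha=\epsilon$, $n=B/\epsilon$, the same sandwich of $I(X:M)$ between the Yao/Gavinsky/message-compression lower bound (\Cref{thm:yaos_minimax}, \Cref{thm:gavinsky_main}, \Cref{thm:information_bounds_communication}) and the Holevo/symmetric-subspace upper bound (\Cref{thm:holevo}, \Cref{symmetric_subspace_size}). Where you genuinely depart from the paper is the final inversion, which the paper itself flags as the delicate step: from $s\log(n/s+1)=\Omega(\sqrt B/\epsilon)$, a naive inversion leaves a $\log(1/\epsilon)$ in the denominator, and the paper escapes this by bootstrapping with the independently proven bound $s \ge \task(1,\epsilon)=\Omega(\epsilon^{-2})$ from \Cref{thm:distinguishing_lb}, substituting it into the right-hand side to turn $\log(B/(\epsilon s)+1)$ into $\log(B\epsilon+1)\le\log(B+1)$. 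You instead observe that at the threshold $s_0 = c'\sqrt B/(\epsilon\log(B+1))$ the ratio $n/s_0 = \sqrt B\log(B+1)/c'$ is independent of $\epsilon$, so by monotonicity of $h(s)=s\log(2en/s)$ on $s\le n$, any $s<s_0$ gives $\log\kappa_s \le h(s_0) \le c'\bigl(C+\log(1/c')\bigr)\sqrt B/\epsilon$, which contradicts the information lower bound once $c'$ is small enough. This buys two things: your proof of this theorem is self-contained (it does not invoke the state-distinguishing lower bound, so the footnote in the paper about why Alice uses $\task(\alpha n,\alpha)$ samples becomes unnecessary), and your handling of the error parameters in \Cref{thm:information_bounds_communication} (distinguishing $D_{\Delta+\delta}$ from error-$\delta$ protocols) is cleaner than the paper's. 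What the paper's route buys in exchange is a nominally stronger intermediate bound $\Omega\bigl(\sqrt B/(\epsilon\log(B\epsilon+1))\bigr)$ before it is weakened to the stated form. Two shared caveats, not gaps relative to the paper: both proofs implicitly need $\epsilon\le 1/4$ so that $\alpha\in(0,1/4]$, and both gloss over integrality of $n$ and $\alpha n$; also, your contradiction step should (as you indicate) verify that $c'\bigl(C+\log(1/c')\bigr)$ can be made smaller than the constant in the information lower bound, which indeed holds as $c'\to 0$.
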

\begin{proof} Using the communication complexity of $\BHM_\alpha$ as our starting point, we first show that Alice's message to Bob in every successful protocol for the Boolean Hidden Matching problem must contain a significant amount of information. To show this, note that by Yao's minimax principle (Theorem~\ref{thm:yaos_minimax}), there exists a distribution $(X,Y)$ such that $R_\delta(\BHM_{\alpha}) = D^{(X,Y)}_\delta(\BHM_{\alpha})$, for any given $\delta$ is the probability of error. \Cref{thm:information_bounds_communication} lets us upper bound the (deterministic) complexity with mutual information, and \Cref{thm:gavinsky_main} proves a lower bound. Thus,  
\begin{align*}
D^{(X,Y)}_{\delta}(\BHM_\alpha) &= O(\min I(M : X) + 1), \text{ and} \\
D^{(X,Y)}_{\delta}(\BHM_\alpha) &= \Omega(\sqrt{n/\alpha}),
\end{align*}
for any constant $\delta$. It follows that $I(M : X) = \Omega(\sqrt{n/\alpha})$ for any one-way private-coin protocol for $\BHM_\alpha$.

Now consider the classical shadows strategy for solving $\BHM_\alpha$ as described by \Cref{thm:bhm_to_shadows}, and suppose that Alice measures $s = \task(\alpha n, \alpha)$ copies of $\rho_x$.\footnote{It is possible that Alice could use fewer than $\task(\alpha n, \alpha)$ samples for this specific application to Boolean Hidden Matching, but it will be useful later that $s$ is big enough to estimate a broader class of observables (specifically those used in \Cref{thm:distinguishing_lb}). }  Recall that Alice's message $M$ depends only on her measurement of $\rho_X^{\otimes s}$ which has classical outcome $Z$. We get
$$
I(X : M) \leq I(X : Z) \leq \log(\dim \sym^{(s)}) \le O(s \log(n/s + 1)),
$$
where we have used (in order) the Data Processing Inequality, Holevo's theorem (\Cref{thm:holevo}), the dimension of the symmetric subspace (\Cref{symmetric_subspace_size}), and the following inequality:
$$
\dim \sym^{(s)} = \binom{n+s-1}{n-1} \leq \binom{n+s}{n} = \binom{n+s}{s} \leq \left( \frac{e(n+s)}{s} \right)^{s}.
$$

Notice that we now have both an upper bound and a lower bound for the mutual information between Alice's input and her message for a one-way protocol for $\BHM_\alpha$ with constant error probability. Setting $\epsilon := \alpha$ and $B := \epsilon n$, we have 
\begin{align*}
    I(X : M) &= \Omega(\sqrt{B}/\epsilon), \text{ and} \\
    I(X : M) &= O(s \log(B/(\epsilon s) + 1)).
\end{align*}
It follows that 
\[
s = \Omega\left(\frac{\sqrt{B}}{\epsilon \log(B/(\epsilon s) + 1)}\right).
\]
Notice that if we substitute any lower bound for $s$ in the RHS of the equation above, then we get a new lower bound for $s$ on the LHS. Unfortunately, plugging in the trivial lower bound ($s \ge 1$) is not very tight. Instead, we will use the $s = \Omega(1/\epsilon^2)$ lower bound from the previous section. To justify this, notice that 
\[
s = \task(B,\epsilon) \ge \task(1,\epsilon) = \Omega(1/\epsilon^2),
\]
where we have used \Cref{thm:distinguishing_lb} and the fact that $B \ge 1$ since $\| O \| = 1$. Therefore, we can plug $s = \Omega(1/\epsilon^2)$ into the RHS above to arrive at the following:
$$
s = \Omega\left(\frac{\sqrt{B}}{\epsilon \log(B\epsilon + 1)}\right).
$$
Assuming $\epsilon \leq 1$, we simplify this to $s = \Omega( \frac{\sqrt{B}}{\epsilon \log(B+1)})$.\footnote{While this simplification apparently makes lower bound weaker for no reason, it doesn't actually effect the overall lower bound. In regimes where $\log(B\epsilon + 1)$ is significantly smaller than $\log(B + 1)$, the additive $1/\epsilon^2$ term in the lower bound becomes dominant.}

Finally, we point out that the construction of \Cref{thm:bhm_to_shadows} operates in the regime where the states have dimension $d := n$ and the observables are of rank $B = \epsilon d$. One can extend the lower bound to apply to all observables of rank $B \le \epsilon d$ by embedding the states used in \Cref{thm:bhm_to_shadows} into a subspace of dimension $n \leq d$ and keeping the observables the same.
\end{proof}


\section{Independent Measurement Upper Bound}
\label{sec:independent_measurements}
Since the global Clifford group acting on qubits is a $3$-design, the randomized Clifford measurement classical shadows algorithm of Huang, Kueng, and Preskill \cite{Huang2020} can be viewed as simulating independent $\mathcal{M}_1$ measurements on all copies of $\rho$ then, constructing an unbiased estimator from the measurement outcome on each copy. Their result is for independent measurements and general mixed states, but it upper bounds pure states as a special case. 
\begin{thm}[Huang, Kueng, Preskill \cite{Huang2020}] 
For all $\epsilon, \delta > 0$, 
    $$\itask(B, \epsilon, \delta) = \bigo{ \frac{B\log(\delta^{-1})}{\epsilon^2}}.$$
\end{thm}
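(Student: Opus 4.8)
The plan is to re-derive the Huang--Kueng--Preskill bound inside the framework already developed for \Cref{thm:ps_jm_ub}, with the single crucial change that we measure each of the copies of $\rho$ \emph{independently} with $\mathcal{M}_1$ and only combine the outcomes classically afterward. As noted in the discussion above, a uniformly random Clifford followed by a computational-basis readout reproduces $\mathcal{M}_1$ exactly through third moments, so the Clifford $3$-design property is enough to implement the scheme on qubits. Each single-copy measurement returns an outcome $\psi_i$, and applying the $s=1$ case of \Cref{lem:first_moment} I would set $\rhohat_i := (d+1)\psi_i - I$, which is unbiased: $\E[\rhohat_i] = \rho$.

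First I would control the variance of a single-copy estimate. Specializing the variance bound established inside the proof of \Cref{cor:batch_estimate} to $s=1$ gives $\Var(\Tr(O \rhohat_i)) \le \Tr(O^2) + 8\norm{O^2} \le B + 8 = \bigo{B}$, using $\norm{O^2} = \norm{O}^2 = 1$ and $\Tr(O^2) \le B$. Because the estimators $\rhohat_1, \dots, \rhohat_n$ coming from distinct copies are i.i.d., averaging them into a batch estimate $\rhohat := \tfrac{1}{n}\sum_{i=1}^n \rhohat_i$ reduces the variance by a factor of $n$, so $\Var(\Tr(O\rhohat)) \le \bigo{B/n}$. Choosing $n = \bigo{B/\epsilon^2}$ then makes this variance at most $\epsilon^2/4$, and Chebyshev's inequality bounds the failure probability of a single batch by a constant strictly below $1/2$.

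The final step is the standard median-of-means amplification used in \Cref{thm:ps_jm_ub}: I would repeat the batch procedure $k = \bigo{\log(\delta^{-1})}$ times and output the median of the $k$ batch estimates, exactly as in Algorithm~\ref{alg:ps_jm_est}. Since the batches are independent and each fails with constant probability, a Chernoff bound shows the median is $\epsilon$-accurate except with probability at most $\delta$. The total number of copies consumed is $n \cdot k = \bigo{(B/\epsilon^2)\log(\delta^{-1})}$, matching the claim; the additive $\epsilon^{-2}$ term that appears in the joint-measurement theorem is absorbed here because $B \ge \norm{O}^2 = 1$.

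I do not expect a genuine obstacle, since \Cref{cor:batch_estimate} already carries out the analytic heavy lifting. The one conceptual point worth emphasizing is \emph{why} the sample complexity degrades relative to the joint bound: averaging independent unbiased estimators shrinks the dominant $\Tr(O^2)$ contribution only linearly in the number of copies (the $1/n$ above), whereas the joint measurement $\mathcal{M}_s$ of \Cref{lem:second_moment} shrinks that same term quadratically (a $1/s^2$). This linear-versus-quadratic gap in the leading term is precisely the separation between independent and joint measurements for pure states.
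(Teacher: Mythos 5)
Your proposal is correct, and it is essentially the approach the paper itself leans on: the paper states this theorem as a citation to \cite{Huang2020} without reproving it, but the Discussion subsection following \Cref{thm:ps_jm_ub} sketches exactly your argument --- independent $\mathcal{M}_1$ measurements (implementable with random Cliffords by the $3$-design property), unbiased single-copy estimators $\hat{\rho}_i = (d+1)\Psi_i - I$ from \Cref{lem:first_moment} at $s=1$, a $1/n$ variance reduction from averaging the $\bigo{B}$ single-copy variance, and median-of-means amplification. Your only additions are making the variance bound explicit by specializing the bound inside the proof of \Cref{cor:batch_estimate} to $s=1$ (correctly using the $\norm{O^2}$ form derived in that proof, rather than the $\Tr(O^2\rho)$ form in its statement) and observing that $B \ge 1$ absorbs the additive $\epsilon^{-2}$ term; both steps are sound.
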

Huang, Kueng, and Preskill also show a matching lower bound, but the hard instances they construct are with states of full rank. We give an independent measurement classical shadows algorithm for pure states which is better in certain parameter regimes (and is no worse). 
\begin{thm}
\label{thm:ps_im_ub}
    For all $\epsilon, \delta > 0$,
    $$\itask(B, \epsilon, \delta) = \bigo{\min \left\{ \frac{B}{\epsilon^2}, \frac{\sqrt{Bd}}{\epsilon} + \frac{1}{\epsilon^2} \right \} \log(\delta^{-1})}.$$
\end{thm}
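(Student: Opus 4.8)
The plan is to obtain the bound as the better of two algorithms, taking whichever of the two arguments of the $\min$ is smaller. The first argument, $B\epsilon^{-2}\log(\delta^{-1})$, is exactly the Huang--Kueng--Preskill guarantee, so nothing new is needed there; the entire content lies in the second argument, $(\sqrt{Bd}\,\epsilon^{-1}+\epsilon^{-2})\log(\delta^{-1})$. For that branch I would measure each of $s$ copies independently with the single-copy measurement $\mathcal{M}_1$ (equivalently, random Clifford measurements, which agree with $\mathcal{M}_1$ up to third moments since the Cliffords form a $3$-design), obtaining single-copy linear estimators $\rhohat_1,\dots,\rhohat_s$ with $\rhohat_a=(d+1)\psi_a-I$ and $\E[\rhohat_a]=\rho$. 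The ``different estimator'' promised in the introduction is then the \emph{quadratic} $U$-statistic
\[
\rhohat \;=\; \frac{1}{s(s-1)}\sum_{a\neq b} \tfrac12\left(\rhohat_a\rhohat_b+\rhohat_b\rhohat_a\right),
\]
a Hermitian (but not positive) matrix that serves as the batch classical shadow fed into Algorithm~\ref{alg:ps_jm_est}. Because the $\psi_a$ are independent and $\rho$ is \emph{pure}, $\E[\rhohat_a\rhohat_b]=\rho^2=\rho$ for $a\neq b$, so $\E[\Tr(O\rhohat)]=\Tr(O\rho)$ and the estimator is unbiased; this is the one place purity is essential.

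Next I would bound the variance of $\Tr(O\rhohat)$ using the Hoeffding decomposition for $U$-statistics, $\Var(\Tr(O\rhohat)) \le \tfrac{4\zeta_1}{s}+\tfrac{2\zeta_2}{s^2}$, where $\zeta_1$ is the variance of the conditional kernel $g(\psi_1):=\E_{\psi_2}[\tfrac12(\Tr(O\rhohat_1\rhohat_2)+\Tr(O\rhohat_2\rhohat_1))]=\Tr(\rhohat_1 N)$ with $N:=\tfrac12(O\rho+\rho O)$, and $\zeta_2$ is the full variance of the kernel $\Tr(O\rhohat_1\rhohat_2)$. The term $\zeta_1$ is easy: using purity one computes $\Tr(N^2)=\tfrac12(\Tr(O\rho)^2+\Tr(O^2\rho))\le 1$ and $\|N\|\le\|O\|\,\|\rho\|=1$, so the single-copy variance bound already established inside the proof of \Cref{cor:batch_estimate} (applied to $N$ in place of $O$, including the trace-removal step) gives $\zeta_1=O(1)$, with no dependence on $B$ or $d$.

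The crux is bounding $\zeta_2=\E[\,|\Tr(O\rhohat_1\rhohat_2)|^2\,]-\Tr(O\rho)^2$, and I expect this to be the main obstacle. Writing $\Tr(O\rhohat_1\rhohat_2)=\Tr(\rhohat_2\,M_1)$ with $M_1:=O\rhohat_1$, I would first take the expectation over $\psi_2$ with $\psi_1$ fixed: since $\Tr(\rhohat_2 M_1)$ is linear in $\rhohat_2$, the identity $\E_{\psi_2}[|\Tr(\rhohat_2 M_1)|^2]=\Tr((M_1\otimes M_1^{\dagger})\,\E[\rhohat_2^{\otimes 2}])$ lets me evaluate the inner expectation from the $s=1$ case of \Cref{lem:second_moment}; the dominant contribution comes from the SWAP part of $\sym^{(2)}$ and equals, up to lower-order terms, $\Tr(M_1^{\dagger}M_1)=\Tr(O^2\rhohat_1^2)$. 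Then, using $\rhohat_1^2=(d+1)(d-1)\psi_1+I$ (again by purity of the measurement outcome) together with \Cref{lem:first_moment}, I get $\E_{\psi_1}[\Tr(O^2\rhohat_1^2)]=d\,\Tr(O^2)+(d-1)\Tr(O^2\rho)=O(Bd)$. The real work is checking that every remaining term produced by fully expanding $\E[\rhohat_2^{\otimes2}]$ and then averaging over $\psi_1$ is also $O(Bd)$, i.e.\ that the large $(d+1)^{4}$ prefactors are cancelled by the $(d+1)^{-2}$ suppression in the moments so that no term scales worse than $Bd$.

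Granting $\zeta_2=O(Bd)$, the batch variance is $O(1/s)+O(Bd/s^2)$, so a batch of $s=O(\sqrt{Bd}\,\epsilon^{-1}+\epsilon^{-2})$ copies makes $\Var(\Tr(O\rhohat))$ a small constant multiple of $\epsilon^2$. Feeding $k=O(\log\delta^{-1})$ independent such batches into the median-of-means estimator of Algorithm~\ref{alg:ps_jm_est} and invoking the Chernoff argument from the proof of \Cref{thm:ps_jm_ub} yields $\itask(B,\epsilon,\delta)=O((\sqrt{Bd}\,\epsilon^{-1}+\epsilon^{-2})\log\delta^{-1})$ for this branch, and taking the minimum with the Huang--Kueng--Preskill bound completes the proof.
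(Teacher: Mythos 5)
Your skeleton is the paper's skeleton: the same $\mathcal M_1$ measurements, the same quadratic estimator (the paper's $\hat Y = \frac{1}{s(s-1)}\sum_{i\neq j}\rhohat_i\rhohat_j$ sums over \emph{ordered} pairs, so it already coincides with your symmetrized $U$-statistic), the same median-of-means wrapper, and the same fallback to Huang--Kueng--Preskill for the first branch of the $\min$. Your Hoeffding decomposition is a repackaging of the paper's covariance expansion in \Cref{lem:all_the_covariances}: the one-index-overlap covariances (\Cref{cor:ijjk} and \Cref{cor:ijkj}) are exactly your $\zeta_1$, and the two-index-overlap ones (\Cref{cor:ijji} and \Cref{cor:ijij}) are your $\zeta_2$. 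Two of your choices are genuinely tidier than the paper's: bounding $\zeta_1$ via the conditional kernel $N=\tfrac12(O\rho+\rho O)$ and a single-copy variance bound is cleaner than the direct computations of \Cref{cor:ijjk,cor:ijkj}; and bounding the symmetric kernel by $(\Re T)^2\le |T|^2$ plus conditioning on $\psi_1$ lets you reduce everything to a two-qudit expression $\Tr\bigl(\E[\rhohat^{\otimes 2}](I\otimes O)\,\E[\rhohat^{\otimes 2}](O\otimes I)\bigr)$, avoiding the three-qudit identity the paper needs in \Cref{cor:ijij}.

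The gap is the step you yourself flag: $\zeta_2=O(Bd)$ is asserted, not proved, and it is the crux of the theorem---without it there is no $\sqrt{Bd}/\epsilon$ term at all. You have verified a single contribution (the $I\otimes I$ part of the SWAP term, giving $d\Tr(O^2)+(d-1)\Tr(O^2\rho)$, which matches the leading term $(d+3)\Tr(O^2)+2(d+1)\Tr(O^2\rho)+d\Tr(O\rho)^2$ in the paper's \Cref{cor:ijij}), but the remaining terms are exactly where the paper spends its effort in \Cref{subsec:covariance_bounds}. The check is not vacuous for a structural reason the paper highlights there: unlike in the joint-measurement analysis (\Cref{cor:batch_estimate}), you cannot reduce to traceless $O$, because $\Tr(\hat Y)$ is a random variable rather than the constant $1$, so $\Var(\Tr(O\hat Y))\neq\Var(\Tr(O_0\hat Y))$; the expansion therefore produces genuine $\Tr(O)$ and $\Tr(O)^2$ terms, and $\Tr(O)^2$ can itself be as large as $d\Tr(O^2)=Bd$, so these are controlled only after Cauchy--Schwarz and tracking of the $(d+2)^{-1}$ prefactors, not by being ``lower order.'' Until that bookkeeping is carried out (it is precisely \Cref{cor:ijji,cor:ijij}, feeding into \Cref{lem:quadratic_covariance}), the bound $\Var(\Tr(O\hat Y))=O(1/s + Bd/s^2)$, and hence the theorem, is unsupported. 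In short: right approach, correct identification of the dominant term, but the proof's actual content---the full kernel-variance computation---is missing.
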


For example, consider the parameter regime in which $\delta$ is a constant, $B = d$, and any $\epsilon = o(1)$. Note that this encompasses natural settings such as estimating full-weight Paulis. One can check that (as $d$ grows) the sample complexity given by \Cref{thm:ps_im_ub} is $\mathcal O(d/\epsilon + 1/\epsilon^2)$, which is evidently less than $\mathcal O(d/\epsilon^2)$, the sample complexity of the Huang-Kueng-Preskill protocol. In general, our approach gives lower sample complexity whenever $\epsilon = o(\sqrt{d/B})$ and $B = \omega(1)$.

As it turns out, our measurement algorithm is the same as the one in \cite{Huang2020}---on each copy of $\rho$, we make an independent measurement with the POVM $\mathcal{M}_1$, which (on multi-qubit systems) can be performed with a random Clifford measurement since we only use third moments of the Haar measure. The difference is in how we construct our estimator for the unknown state. To see this, first let $\Psi_1, \ldots, \Psi_s$ be the Hermitian random variables for the measurement outcomes. Using Lemma~\ref{lem:first_moment}, notice that $\rhohat_{i} := (d+1)\Psi_i - I$ is an unbiased estimator for the unknown state, i.e., $\E[\rhohat_i] = \rho.$ The average of the $\rhohat_i$'s, i.e., 
$$
\hat{X} := \frac{1}{s} \sum_{i=1}^{s} \rhohat_i
$$
is effectively the Huang-Kueng-Preskill estimator. Our key observation is that when $\rho$ is pure, $\rhohat_i \rhohat_j$ is also an unbiased estimator of $\rho$:
$$
\E[\rhohat_i \rhohat_j] = \E[\rhohat_i] \E[\rhohat_j] = \rho^2 = \rho.
$$
where we have used the independence of the measurements for the first equality and the purity of $\rho$ for the last.
In light of this, we consider an estimator $\hat{Y}$ defined to be the average of the $s(s-1)$ quadratic terms where $i \neq j$.
$$
\hat{Y} := \frac{1}{s(s-1)} \sum_{i \neq j} \rhohat_i \rhohat_j.
$$
To analyze the accuracy of the estimator $\hat{Y}$, we will once again turn to Chebyshev's inequality: 
\begin{align*}
\Pr[|\Tr(O \hat{Y}) - \Tr(O \rho)]| \geq \epsilon] \leq \frac{\Var(\Tr(O \hat{Y}))}{\epsilon^2}.
\end{align*}
Expanding out the variance term using the definition of $\hat Y$, we get
\[
\Var( \Tr(O \hat{Y}) ) = \frac{1}{s^2 (s-1)^2} \sum_{i \neq j} \sum_{k \neq \ell} \Cov( \Tr(O \rhohat_i \rhohat_j), \Tr(O \rhohat_k \rhohat_{\ell})).
\]

We need to bound all of these covariance terms to bound the variance. When all indices $i,j,k,\ell$ are distinct, then the covariance is 0 (by independence). For the other four cases, we rely on corollaries~\ref{cor:ijjk}, \ref{cor:ijkj}, \ref{cor:ijji}, and \ref{cor:ijij}, which we summarize in the following lemma (proof in \Cref{subsec:covariance_bounds}):
\begin{lemma} 
\label{lem:all_the_covariances}
For each combination of $i,j,k,\ell$, $\Cov(\Tr(O \rhohat_i \rhohat_j), \Tr(O \rhohat_k \rhohat_{\ell}))$ is
\begin{enumerate}
    \item One index matches $(|\{ i, j \} \cap \{ k, \ell \}| = 1)$
    \begin{itemize}
        \item Match in different positions $(i = \ell$ or $j = k)$: $\mathcal O(\|O\|^2)$ 
        \item Match in same position $(i = k$ or $j = \ell)$: $\mathcal O(\|O\|^2)$
    \end{itemize}
    \item Both indices match $(|\{ i, j \} \cap \{ k, \ell \}| = 2)$
    \begin{itemize}
        \item Order swapped $(i = \ell$ and $j = k)$: $\mathcal O(Bd)$
        \item Same order $(i = j$ and $k = \ell)$: $\mathcal O(Bd)$ 
    \end{itemize}
\end{enumerate}
\end{lemma}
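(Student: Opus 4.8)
The plan is to treat all four cases uniformly by exploiting the \emph{independence} of the single-copy outcomes $\Psi_1,\dots,\Psi_s$, reducing every covariance to an expression in the first and second moments of a single estimator $\rhohat_i=(d+1)\Psi_i-I$. Writing each covariance as $\E[XY]-\E[X]\E[Y]$ with $X=\Tr(O\rhohat_i\rhohat_j)$ and $Y=\Tr(O\rhohat_k\rhohat_\ell)$, I would first record that $\E[X]=\E[Y]=\Tr(O\rho)$, using $\E[\rhohat]=\rho$ (\Cref{lem:first_moment}) together with purity $\rho^2=\rho$. For any index appearing in only one of the two trace-products, independence lets me take its expectation first and thereby replace the corresponding $\rhohat$ by its mean $\rho$. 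What survives depends only on the indices shared between $\{i,j\}$ and $\{k,\ell\}$, and the relevant object is the single-copy second moment $\E[\rhohat\otimes\rhohat]$, which I obtain from \Cref{lem:second_moment} at $s=1$ via the affine substitution $\rhohat=(d+1)\Psi-I$. The workhorse identity, valid after cyclically rotating each trace so the relevant copy of $\rhohat$ sits in the first slot, is $\Tr(\rhohat M_1)\Tr(\rhohat M_2)=\Tr\big((M_1\otimes M_2)\,\E[\rhohat\otimes\rhohat]\big)$.

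For the two \textbf{one-index-matching} subcases (Case 1), exactly one $\rhohat$ survives in each trace and the other is replaced by $\rho$. After cyclic rearrangement, both surviving operators $M_1,M_2$ then contain a factor of $\rho$; since $\rho$ is a rank-one projector, this annihilates the dimension-carrying pieces of $\E[\rhohat\otimes\rhohat]$. Concretely, the different-position representative $j=k$ reduces (using $\rho^2=\rho$) to $\E[\Tr(\rhohat\,O\rho)\Tr(\rhohat\,\rho O)]-\Tr(O\rho)^2$, while the same-position representative $i=k$ reduces to $\E[\Tr(\rhohat\,\rho O)^2]-\Tr(O\rho)^2$; both are bounded using $\norm{O}=1$ and Hölder-type inequalities to give $\mathcal O(\norm{O}^2)$. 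The essential point is simply that the $\rho$-sandwich suppresses all dependence on $d$.

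The \textbf{both-indices-matching} subcases (Case 2) are the crux, since now neither $\rhohat_i$ nor $\rhohat_j$ is replaced by its mean and the second moment must be applied to \emph{both} indices. For the order-swapped case $i=\ell,\,j=k$ I would write $\E[\Tr(O\rhohat_i\rhohat_j)\Tr(O\rhohat_j\rhohat_i)]$, integrate out $\rhohat_i$ via the identity above, and then integrate out $\rhohat_j$ against its second moment; the same-order case is the variance $\Var(\Tr(O\rhohat_i\rhohat_j))$ and is handled identically. Because $\rhohat$ has operator and Frobenius norm of order $d$, a crude bound yields only $\mathcal O(d^4)$, so everything hinges on showing that the leading powers of $d$ cancel. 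Expanding $\E[\rhohat\otimes\rhohat]$ explicitly—where the $(d+1)^2$ prefactor from the affine map cancels the $\tfrac{1}{(d+1)(d+2)}$ normalization of the symmetric-subspace second moment—and using purity through $\Tr((O\rho)^2)=\Tr(O\rho)^2$ to collapse the cross terms, I expect the surviving contribution to be a single term of order $\Tr(O^2)\cdot d$, which is $\mathcal O(Bd)$ since $\Tr(O^2)\le B$.

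The main obstacle is exactly this dimension bookkeeping in Case 2: the estimator entries are of size $\Theta(d)$, so the dangerous terms must cancel almost completely, and only by tracking the $(d+1)$ and $(d+2)$ factors precisely and isolating the lone surviving $\Tr(O^2)$-term does one recover the true $\mathcal O(Bd)$ scaling rather than $\mathcal O(d^2)$ or worse. Case 1, by contrast, is routine once the $\rho$-sandwich structure is recognized. Assembling the four bounds and substituting them into
\[
\Var(\Tr(O\hat{Y}))=\frac{1}{s^2(s-1)^2}\sum_{i\neq j}\sum_{k\neq\ell}\Cov\big(\Tr(O\rhohat_i\rhohat_j),\,\Tr(O\rhohat_k\rhohat_\ell)\big)
\]
then completes the lemma, and (via Chebyshev) yields the $\tfrac{\sqrt{Bd}}{\epsilon}+\tfrac{1}{\epsilon^2}$ scaling of \Cref{thm:ps_im_ub}, since the $\mathcal O(Bd)$ terms number $\mathcal O(s^2)$ and the $\mathcal O(\norm{O}^2)$ terms number $\mathcal O(s^3)$.
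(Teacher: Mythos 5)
Your overall reduction is the same as the paper's: use independence to collapse every covariance onto the single-copy second moment $\E[\rhohat \otimes \rhohat]$ (the paper's \Cref{lem:second_rhohat}, obtained exactly as you describe by specializing \Cref{lem:second_moment} to $s=1$ and tracking the $(d+1)$, $(d+2)$ factors through the affine map), and your Case~1 treatment---replace the unshared estimators by their mean $\rho$ and let the rank-one sandwich suppress all $d$-dependence---is precisely what the paper does in \Cref{cor:ijjk,cor:ijkj}. Your final assembly of the four bounds into $\Var(\Tr(O\hat Y))$ also matches \Cref{lem:quadratic_covariance}.

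The gap is in Case~2, at the words ``handled identically.'' Note first that $\Tr(O\rhohat_i\rhohat_j)$ is a trace of a product of three Hermitian matrices and is therefore complex in general, so the two Case-2 covariances reduce to two \emph{different} fourth-moment products: $\E\bigl[\Tr(O\rhohat_i\rhohat_j)^2\bigr]$ and $\E\bigl[\Tr(O\rhohat_i\rhohat_j)\Tr(O\rhohat_j\rhohat_i)\bigr]$. The first one does succumb to your plan, because
$(O\rhohat_i\rhohat_j)\otimes(O\rhohat_i\rhohat_j) = (O\otimes O)(\rhohat_i\otimes\rhohat_i)(\rhohat_j\otimes\rhohat_j)$,
so independence immediately gives $\Tr\bigl((O\otimes O)\,\E[\rhohat_i\otimes\rhohat_i]\,\E[\rhohat_j\otimes\rhohat_j]\bigr)$; this is the paper's \Cref{cor:ijji}. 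The second one does not: writing
$\rhohat_i\rhohat_j\otimes\rhohat_j\rhohat_i = (\rhohat_i\otimes I)(\rhohat_j\otimes\rhohat_j)(I\otimes\rhohat_i)$
and integrating out $j$, the two surviving copies of $\rhohat_i$ straddle $\E[\rhohat_j\otimes\rhohat_j]$ \emph{and act on different tensor factors}, so your workhorse identity cannot be applied a second time---there is no way to rewrite the remaining expectation as $\Tr\bigl((M_1\otimes M_2)\,\E[\rhohat_i\otimes\rhohat_i]\bigr)$ for fixed $M_1, M_2$. This is exactly the point where the paper flags ``it is not so easy to decompose'' and introduces a genuinely new device in \Cref{cor:ijij}: embed into a third tensor factor via
$\rhohat_i\rhohat_j\otimes\rhohat_j\rhohat_i = \Tr_3\bigl((I\otimes\rhohat_i\otimes\rhohat_i)(\rhohat_j\otimes\rhohat_j\otimes I)\,W_{(1\,3)(2)}\bigr)$,
then expand both second moments into their $W_{\pi}$ pieces and bound the resulting four permutation terms, the dominant one giving $(d+\mathcal O(1))\Tr(O^2) = \mathcal O(Bd)$. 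Your intuition about the final answer (a lone surviving $d\,\Tr(O^2)$ term after the $(d+1)$, $(d+2)$ cancellations) is correct, but without this third-qudit trick---or an equivalent, such as computing sandwiched moments $\E[\rhohat M \rhohat]$ term-by-term against the explicit form of $\E[\rhohat\otimes\rhohat]$---your plan stalls on exactly the subcase whose $\mathcal O(Bd)$ contribution dominates the variance.
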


Since $\|O\|^2 \le 1$, the contribution from the first two terms is extremely small compared to the last term. This gives us the following bound on the variance:

\begin{lemma}
\label{lem:quadratic_covariance}
$\Var(\Tr(O \hat{Y})) = \mathcal O(\frac{Bd}{s^2} + \frac{1}{s})$.
\end{lemma}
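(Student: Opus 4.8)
The plan is to bound the variance $\Var(\Tr(O \hat{Y}))$ by plugging the covariance estimates from \Cref{lem:all_the_covariances} into the double-sum expansion and carefully counting how many index tuples $(i,j,k,\ell)$ fall into each case. Recall that we sum over all $(i,j)$ with $i \neq j$ and all $(k,\ell)$ with $k \neq \ell$, which is $(s(s-1))^2 = \mathcal O(s^4)$ tuples in total, and then divide by $s^2(s-1)^2 = \mathcal O(s^4)$. So the final bound is a weighted average of the covariance values, weighted by the fraction of tuples in each case, and the whole point is that the large $\mathcal O(Bd)$ covariances occur only for a small fraction of tuples.

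First I would dispose of the case where all four indices $i,j,k,\ell$ are distinct: by independence the covariance vanishes, so these $\mathcal O(s^4)$ tuples contribute nothing. Next I would count the ``one index matches'' tuples (case 1 of \Cref{lem:all_the_covariances}): fixing $(i,j)$ with $i \neq j$, the choices of $(k,\ell)$ with exactly one shared index number $\mathcal O(s)$, so there are $\mathcal O(s^3)$ such tuples, each with covariance $\mathcal O(\|O\|^2) = \mathcal O(1)$. After dividing by $s^4$, this family contributes $\mathcal O(s^3/s^4) = \mathcal O(1/s)$. Finally I would count the ``both indices match'' tuples (case 2): here $\{k,\ell\} = \{i,j\}$, so once $(i,j)$ is chosen there are only $\mathcal O(1)$ choices for $(k,\ell)$, giving $\mathcal O(s^2)$ tuples, each with covariance $\mathcal O(Bd)$. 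Dividing by $s^4$ yields $\mathcal O(Bd \cdot s^2 / s^4) = \mathcal O(Bd/s^2)$.

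Summing the two surviving contributions gives exactly
\[
\Var(\Tr(O \hat{Y})) = \mathcal O\!\left(\frac{Bd}{s^2} + \frac{1}{s}\right),
\]
as claimed. The reduction relies on the fact that $\|O\|^2 \le 1$ for observables in $\obs(B)$, which is what lets us discard the case-1 covariances down to an $\mathcal O(1/s)$ term rather than something scaling with $B$ or $d$.

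The main obstacle is really the bookkeeping in the index counting rather than any deep idea: one must be careful that the four subcases of \Cref{lem:all_the_covariances} are mutually exclusive and collectively exhaust all tuples with a nonzero covariance, and that the multiplicities ($\mathcal O(s^3)$ for one match, $\mathcal O(s^2)$ for two matches) are correct. Since the covariance bounds themselves are supplied by the referenced corollaries, no further quantum-information input is needed here—the lemma follows purely from combining those bounds with the tuple counts and the normalization $1/(s^2(s-1)^2)$.
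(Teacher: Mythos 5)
Your proof is correct and follows essentially the same route as the paper: expand $\Var(\Tr(O\hat{Y}))$ into the double sum of covariances, invoke \Cref{lem:all_the_covariances} together with independence for the all-distinct case, and count $\mathcal{O}(s^3)$ one-match tuples (each $\mathcal{O}(\|O\|^2) = \mathcal{O}(1)$) and $\mathcal{O}(s^2)$ two-match tuples (each $\mathcal{O}(Bd)$) against the $1/s^2(s-1)^2$ normalization. The bookkeeping and the resulting bound $\mathcal{O}(Bd/s^2 + 1/s)$ match the paper's argument exactly.
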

\begin{proof}
Expand the variance as
\[
\Var( \Tr(O \hat{Y}) ) = \frac{1}{s^2 (s-1)^2} \sum_{i \neq j} \sum_{k \neq \ell} \Cov( \Tr(O \rhohat_i \rhohat_j), \Tr(O \rhohat_k \rhohat_{\ell})).
\]
Using \Cref{lem:all_the_covariances}, we account for the contribution of each type of covariance term to get
$$
\Var(\Tr(O \hat{Y})) = \bigo{\frac{0\cdot s^4 + \|O\|^2 \cdot s^3 + \|O\|^2 \cdot s^3 + Bd \cdot s^2 + Bd \cdot s^2}{s^4}} = \bigo{Bd/s^2 + 1/s}
$$
where we have used that $\|O\|^2 \le 1$ and that there are $\mathcal O(s^4)$ terms where $i,j,k,\ell$ are distinct; $\mathcal O(s^3)$ terms where exactly one index matches; and $\mathcal O(s^2)$ terms where both indices match. 
\end{proof}

Putting everything together, we can now prove the claimed sample complexity in Theorem~\ref{thm:ps_im_ub}. 

\begin{proof}[Proof of Theorem~\ref{thm:ps_im_ub}]
We first point out that the sample complexity of our new estimator is only better (or at least no worse) when $\epsilon \leq \sqrt{B/d}$, so when that does not hold we simply use the original $\hat{X}$ estimator of Huang, Kueng, and Preskill.\footnote{Actually, it is better to smoothly transition between the estimators $\hat{X}$ and $\hat{Y}$ using a convex combination rather than use a sharp threshold. However, the improvement is only a constant factor and would require computing covariance of linear vs.\ quadratic terms (e.g., $\Cov(\Tr(O \rhohat_i \rhohat_j), \Tr(O \rhohat_k))$) to justify rigorously. Hence, we simply use one or the other.}

Otherwise, we use Algorithm~\ref{alg:ps_im_meas} to measure the state and construct several $\hat{Y}$ estimators (line~\ref{state:estimator}, constituting the classical shadow. This shadow is then used in Algorithm~\ref{alg:ps_jm_est} for the observable estimation step, which once again uses the median-of-means method where the analysis will be identical to that in the proof of \Cref{thm:ps_jm_ub}.

It suffices to analyze the variance of the estimator constructed within each batch, which is $\bigo{Bd/s^2 + 1/s}$ by \Cref{lem:quadratic_covariance}. To apply Chebyshev's inequality, we need the variance to be at most $\epsilon^2$, which occurs when we have at least $s = \mathcal O(\sqrt{Bd}/\epsilon + 1/\epsilon^2)$ samples.
\end{proof}

We simulate the new quadratic estimator as shown in \Cref{fig:linear-vs-quadratic-simulation}. The plots show the empirical variances of the linear and quadratic estimators in a regime where the target observable has large Frobenius norm, namely $B = d$. For the linear estimator, one expects that the variance should decrease linearly with the number of samples. For the quadratic estimator, the variance is $\mathcal O(Bd/s^2 + 1/s)$ by \Cref{lem:quadratic_covariance}. Therefore, whenever $Bd/s^2$ dominates $1/s$, the variance should decrease \emph{quadratically} in the number of samples. Since the plots are shown on a log-log scale, this should result in a slope of $-2$. We see this scaling in the graph shown on the right since $d$ is large (the slope of the regression for the linear estimator is $-.998$ and the slope for the quadratic estimator is $-1.936$). However, for the left graph $Bd = d^2$ is only $64$, so we expect that the variance for the quadratic estimator to scale linearly after about $s = 64$ copies. Indeed, one can observe that the lines for the linear and quadratic estimators are essentially parallel after that point. As a final observation, we note that in both graphs the quadratic estimator becomes better than the linear estimator at the point where the variance becomes less than 1. Since the estimate is only useful once the variance is less than 1, one could interpret this as conveying the fact that the quadratic estimator is \emph{always} better than the linear estimator in this particular parameter regime.

\begin{figure}[h]
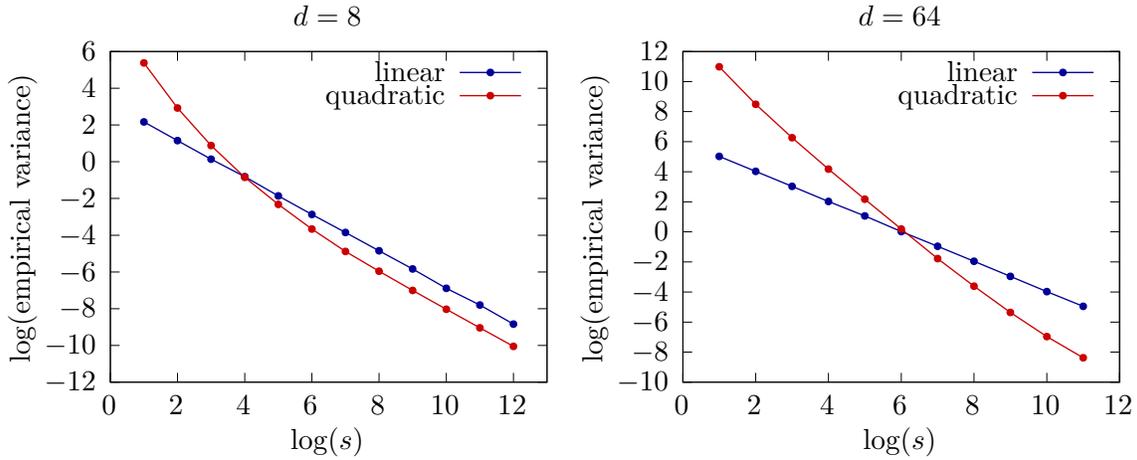

\centering
\begin{minipage}{.48\linewidth}
    \input{figures/Z3.tex}
\end{minipage}
\begin{minipage}{.48\linewidth}
    \input{figures/Z6.tex}
\end{minipage}
\caption{Empirical variances of the linear estimator ($\hat X$, the estimator of \cite{Huang2020}) and our quadratic estimator ($\hat Y$, defined in \Cref{sec:independent_measurements}) for the classical shadows task using independent measurements. The data confirm our variance calculation in \Cref{lem:quadratic_covariance}: $\mathcal O(\frac{Bd}{s^2} + \frac{1}{s})$. Specifically, the quadratic estimator variance scales inverse-quadratically in $s$ when $Bd/s^2$ dominates $1/s$ (a slope on the graph of $-2$), and the linear estimator scales inverse-linearly in $s$ (a slope of $-1$). 
Each point in the graphs represent an empirical variance of $10^4$ trials of the following procedure: generate a random pure state $\rho$ of dimension $d = 2^n$ and a random full-weight Pauli operator $P$ on $n$ qubits; independently measure $s$ copies of $\rho$ in a random basis to obtain outcomes $\psi_1, \ldots, \psi_s$; compute $\rho_i = (d+1) \psi_i - I$ for all $i \in [s]$; compute estimates $x = \frac{1}{s} \sum_{i=1}^{s} \rho_i$ and $y = \frac{1}{s(s-1)} \sum_{i \neq j} \rho_i \rho_j$; output error from true expectation value: $\Tr(P x) - \Tr(P \rho )$ and $\Tr(P y) - \Tr(P \rho)$. See main text for detailed explanation of the slope of the lines. 
}
\label{fig:linear-vs-quadratic-simulation}
\end{figure}


\section{Open Problems}
\label{sec:outlook}

Our upper and lower bounds almost completely settle the question of sample complexity for the classical shadows task with arbitrary measurements and arbitrary observables, but clearly many questions remain. First, can the remaining discrepancies between our upper and lower bounds be removed? That is, can we get the lower bounds to have the correct dependence on $\delta$ (which we conjecture to be $\log(\delta^{-1})$) and remove a $\log B$ factor? 

Second, the sample complexity of learning states in the context of tomography is known for all combinations of independent vs.\ joint measurements, and pure vs.\ mixed states. Can we characterize the sample complexity of classical shadows as thoroughly? Ref.~\cite{Huang2020} gives matching bounds for independent measurements and mixed states, and our result gives matching bounds for joint measurements on pure states, but the independent/pure and joint/mixed cases are open. At the very least, we know that in the independent measurement setting that the pure state case is different than the mixed state setting since our upper bound in \Cref{thm:ps_im_ub} is smaller than the lower bound in Ref.~\cite{Huang2020} for some regimes.

As a follow up, tomography bounds are sometimes stated as a function of $r$, the rank of the unknown state. For example, the sample complexity is $\tilde{\Theta}(rd \epsilon^{-1})$ for joint measurements, capturing the pure state $(r=1)$ and worst-case mixed state ($r = d$) behaviour simultaneously. We believe the sample complexity of the classical shadows task depends smoothly on $r$, but do not yet have a conjecture. 

Third, we do not describe how to concretely implement our large joint measurements. We may replace the continuum of Haar random pure states and elements $A_{\psi}$ in the POVM $\mathcal M_s$ with a concrete $(s+2)$-design, but even then it is not clear how to practically implement the measurement. Alternatively, can we make a similar POVM with a simpler ensemble of states, e.g., $t$-designs for some $t< s+2$, and update the analysis to achieve an equivalent end result? We note that for our exact measurement, a lower bound on $t$ can be computed by using upper bounds on the number of bits required to describe a state $t$-design. To see this, notice that the outcome of the POVM $\mathcal M_s$ suffices as a compressed description of the state for the purpose of observable estimation. Since we proved a state compression lower bound of $\Omega(\sqrt{B}\epsilon^{-1})$-many bits, the number of bits required to specify the state must be at least this large. In particular, this implies that you could not implement our measurement on $n$ qubits with a $3$-design since Clifford states can be specified using $\mathcal O(n^2)$-many bits.

Fourth, there is the question of robustness to error or noise. For any classical shadows protocol, we can ask how it behaves when the samples are not exactly of the form $\rho^{\otimes s}$, due to variation in samples. For our pure state protocols, we are also interested in how fast our algorithms degrade when given mixed states that are close to pure. 

Fifth, Ref.~\cite{Huang2020} introduced two classical shadows protocols known as the random Pauli measurements and the random Clifford measurements schemes. The former schemes targets local observable. The latter scheme, like our protocols, targets observables with low Frobenius norm. These target classes of observables are mutually exclusive and each scheme achieves lower sample complexity with respect to its target class observable. Recent work~\cite{bertoni2022shallow,akhtar2022scalable} has also focused on the development of an intermediate scheme that achieves favourable sample complexity scaling in both target classes of observables. All these works focus on general states and independent measurements. Our work identifies an optimal protocol for the low Frobenius norm class of observable in the setting of pure states and joint measurements. So it is natural to consider the pure states and/or joint measurements setting in the context of local observables or the combined class.

Finally, one may consider cubic or higher order generalizations of the quadratic estimator used in the proof of \cref{thm:ps_im_ub}. We leave the analysis of such estimators to future work. 


\section*{Acknowledgements}
We thank Anurag Anshu, Stephen Bartlett, Daniel Burgarth, David Gosset, David Gross and Richard Kueng for useful discussions. 
Research at Perimeter Institute is supported in part by the Government of Canada through the Department of Innovation, Science and Economic Development Canada and by the Province of Ontario through the Ministry of Colleges and Universities. HP also acknowledges the support of the Natural Sciences and Engineering Research Council of Canada (NSERC) discovery grants [RGPIN-2019-04198] and [RGPIN-2018-05188]. 

\bibliographystyle{quantum}
\bibliography{bibRef}

\begin{thebibliography}{10}

\bibitem{Haah2016sample}
Jeongwan Haah, Aram~W. Harrow, Zhengfeng Ji, Xiaodi Wu, and Nengkun Yu.
\newblock ``Sample-optimal tomography of quantum states''.
\newblock In Proceedings of the Forty-Eighth Annual ACM Symposium on Theory of Computing (STOC 2016).
\newblock \href{https://dx.doi.org/10.1145/2897518.2897585}{Page 913–925}.
\newblock New York, NY, USA~(2016). Association for Computing Machinery.

\bibitem{ODonnel2016efficient}
Ryan O'Donnell and John Wright.
\newblock ``Efficient quantum tomography''.
\newblock In Proceedings of the Forty-Eighth Annual ACM Symposium on Theory of Computing (STOC 2016).
\newblock \href{https://dx.doi.org/10.1145/2897518.2897544}{Page 899–912}.
\newblock New York, NY, USA~(2016). Association for Computing Machinery.

\bibitem{Aaronson2018shadow}
Scott Aaronson.
\newblock ``Shadow tomography of quantum states''.
\newblock In Proceedings of the 50th Annual ACM SIGACT Symposium on Theory of Computing (STOC 2018).
\newblock \href{https://dx.doi.org/10.1145/3188745.3188802}{Page 325–338}.
\newblock New York, NY, USA~(2018). Association for Computing Machinery.

\bibitem{buadescu2021improved}
Costin B\u{a}descu and Ryan O'Donnell.
\newblock ``Improved quantum data analysis''.
\newblock In Proceedings of the 53rd Annual ACM SIGACT Symposium on Theory of Computing (STOC 2021).
\newblock \href{https://dx.doi.org/10.1145/3406325.3451109}{Page 1398–1411}.
\newblock New York, NY, USA~(2021). Association for Computing Machinery.

\bibitem{chen2022exponential}
Sitan Chen, Jordan Cotler, Hsin-Yuan Huang, and Jerry Li.
\newblock ``Exponential separations between learning with and without quantum memory''.
\newblock In 2021 IEEE 62nd Annual Symposium on Foundations of Computer Science (FOCS 2022).
\newblock \href{https://dx.doi.org/10.1109/FOCS52979.2021.00063}{Pages 574--585}.
\newblock Los Alamitos, CA, USA~(2022).

\bibitem{Huang2020}
Hsin-Yuan Huang, Richard Kueng, and John Preskill.
\newblock ``Predicting many properties of a quantum system from very few measurements''.
\newblock \href{https://dx.doi.org/10.1038/s41567-020-0932-7}{Nature Physics {\bf 16}, 1050--1057}~(2020).

\bibitem{gosset_smolin2019}
David Gosset and John Smolin.
\newblock ``A compressed classical description of quantum states''.
\newblock In 14th Conference on the Theory of Quantum Computation, Communication and Cryptography (TQC 2019).
\newblock \href{https://dx.doi.org/10.4230/LIPIcs.TQC.2019.8}{Volume 135 of Leibniz International Proceedings in Informatics (LIPIcs), pages 8:1--8:9}.
\newblock Dagstuhl, Germany~(2019). Schloss Dagstuhl--Leibniz-Zentrum fuer Informatik.

\bibitem{scott2006tight}
A~J Scott.
\newblock ``Tight informationally complete quantum measurements''.
\newblock \href{https://dx.doi.org/10.1088/0305-4470/39/43/009}{Journal of Physics A: Mathematical and General {\bf 39}, 13507}~(2006).

\bibitem{gross2015partial}
D.~Gross, F.~Krahmer, and R.~Kueng.
\newblock ``A partial derandomization of {P}hase{L}ift using spherical designs''.
\newblock \href{https://dx.doi.org/10.1007/s00041-014-9361-2}{Journal of Fourier Analysis and Applications {\bf 21}, 229--266}~(2015).

\bibitem{roberts2017chaos}
Daniel~A. Roberts and Beni Yoshida.
\newblock ``Chaos and complexity by design''.
\newblock \href{https://dx.doi.org/10.1007/JHEP04(2017)121}{Journal of High Energy Physics {\bf 2017}, 121}~(2017).

\bibitem{lugosi2019mean}
G.~Lugosi and S.~Mendelson.
\newblock ``Mean estimation and regression under heavy-tailed distributions: A survey''.
\newblock \href{https://dx.doi.org/10.1007/s10208-019-09427-x}{Found. Comp. Math. {\bf 19}, 1145--1190}~(2019).

\bibitem{lerasle2019lecture}
M.~Lerasle.
\newblock ``Lecture notes: Selected topics on robust statistical learning theory''~(2019).
\newblock  \href{http://arxiv.org/abs/1908.10761}{arXiv:1908.10761}.

\bibitem{bajnok1992construction}
Bela Bajnok.
\newblock ``Construction of spherical $t$-designs''.
\newblock \href{https://dx.doi.org/10.1007/BF00147866}{Geometriae Dedicata {\bf 43}, 167--179}~(1992).

\bibitem{hayashi2005reexamination}
A.~Hayashi, T.~Hashimoto, and M.~Horibe.
\newblock ``Reexamination of optimal quantum state estimation of pure states''.
\newblock \href{https://dx.doi.org/10.1103/PhysRevA.72.032325}{Phys. Rev. A {\bf 72}, 032325}~(2005).

\bibitem{bondarenko2013optimal}
Andriy Bondarenko, Danylo Radchenko, and Maryna Viazovska.
\newblock ``Optimal asymptotic bounds for spherical designs''.
\newblock \href{https://dx.doi.org/10.4007/annals.2013.178.2.2}{Annals of Mathematics {\bf 178}, 443–452}~(2013).

\bibitem{nielsen2010quantum}
Michael~A Nielsen and Isaac~L Chuang.
\newblock ``Quantum computation and quantum information''.
\newblock \href{https://dx.doi.org/10.1017/CBO9780511976667}{Cambridge University Press}. ~(2010).

\bibitem{Webb2016}
Zak Webb.
\newblock ``The {C}lifford group forms a unitary 3-design''.
\newblock \href{https://dx.doi.org/10.26421/qic16.15-16-8}{Quantum Information and Computation {\bf 16}, 1379–1400}~(2016).

\bibitem{Zhu2017}
Huangjun Zhu.
\newblock ``Multiqubit {C}lifford groups are unitary 3-designs''.
\newblock \href{https://dx.doi.org/10.1103/physreva.96.062336}{Physical Review A{\bf 96}}~(2017).

\bibitem{kueng2015qubit}
Richard Kueng and David Gross.
\newblock ``Qubit stabilizer states are complex projective 3-designs''~(2015).
\newblock  \href{http://arxiv.org/abs/1510.02767}{arXiv:1510.02767}.

\bibitem{zhu2016clifford}
Huangjun Zhu, Richard Kueng, Markus Grassl, and David Gross.
\newblock ``The {C}lifford group fails gracefully to be a unitary 4-design''~(2016).
\newblock  \href{http://arxiv.org/abs/1609.08172}{arXiv:1609.08172}.

\bibitem{Gavinsky2007exponential}
Dmitry Gavinsky, Julia Kempe, Iordanis Kerenidis, Ran Raz, and Ronald {De Wolf}.
\newblock ``Exponential separations for one-way quantum communication complexity, with applications to cryptography''.
\newblock In Proceedings of the Annual ACM Symposium on Theory of Computing (STOC 2007).
\newblock \href{https://dx.doi.org/10.1145/1250790.1250866}{Pages 516--525}.
\newblock New York, NY, USA~(2007). Association for Computing Machinery.

\bibitem{helstrom1976quantum}
Carl~W. Helstrom.
\newblock ``Quantum detection and estimation theory''.
\newblock \href{https://dx.doi.org/10.1007/BF01007479}{Journal of Statistical Physics {\bf 1}, 231--252}~(1969).

\bibitem{Fuchs1999}
C.A. Fuchs and J.~van~de Graaf.
\newblock ``Cryptographic distinguishability measures for quantum-mechanical states''.
\newblock \href{https://dx.doi.org/10.1109/18.761271}{IEEE Transactions on Information Theory {\bf 45}, 1216--1227}~(1999).

\bibitem{yao1977probabilistic}
Andrew Chi-Chin Yao.
\newblock ``Probabilistic computations: Toward a unified measure of complexity''.
\newblock In 18th Annual Symposium on Foundations of Computer Science (SFCS 1977).
\newblock \href{https://dx.doi.org/10.1109/SFCS.1977.24}{Pages 222--227}.
\newblock IEEE~(1977).

\bibitem{harsha2007communication}
Prahladh Harsha, Rahul Jain, David McAllester, and Jaikumar Radhakrishnan.
\newblock ``The communication complexity of correlation''.
\newblock \href{https://dx.doi.org/10.1109/TIT.2009.2034824}{IEEE Transactions on Information Theory {\bf 56}, 438--449}~(2010).

\bibitem{holevo1973bounds}
Alexander~Semenovich Holevo.
\newblock ``Bounds for the quantity of information transmitted by a quantum communication channel''.
\newblock Problemy Peredachi Informatsii {\bf 9}, 3--11~(1973).
\newblock  url:~\href{http://mi.mathnet.ru/ppi903}{http://mi.mathnet.ru/ppi903}.

\bibitem{bertoni2022shallow}
Christian Bertoni, Jonas Haferkamp, Marcel Hinsche, Marios Ioannou, Jens Eisert, and Hakop Pashayan.
\newblock ``Shallow shadows: Expectation estimation using low-depth random {Clifford} circuits''~(2022).
\newblock  \href{http://arxiv.org/abs/2209.12924}{arXiv:2209.12924}.

\bibitem{akhtar2022scalable}
Ahmed~A. Akhtar, Hong-Ye Hu, and Yi-Zhuang You.
\newblock ``Scalable and flexible classical shadow tomography with tensor networks''.
\newblock \href{https://dx.doi.org/10.22331/q-2023-06-01-1026}{{Quantum} {\bf 7}, 1026}~(2023).

\end{thebibliography}

\appendix


\section{Proper Learning Discussion}
\label{sec:proper_learning}

The classical shadows task is a learning problem: given many samples of a quantum state, we make measurements with the goal of learning the state well enough to approximate arbitrary observables. A learning problem is said to be \emph{proper} if it requires the learned representation to be from the same class---in our case, the class of pure states---as the original object. Our classical shadows algorithm fails to be proper on several counts:
\begin{enumerate}
    \item The output of the problem is a real number, not the classical description of a quantum state. By definition, the classical shadows task is not a proper learning task.
    \item Internally, our algorithm does produce Hermitian matrices $\rhohat^{(i)}$ with trace $1$ (the \emph{shadows}), which have the potential to represent a quantum state. However, each such estimate is
    \begin{enumerate}
    \item high rank, so it cannot represent a pure state, and 
    \item not positive semi-definite, so it cannot represent a mixed state.
    \end{enumerate}
    \item The algorithm uses multiple $\rhohat^{(i)}$ shadows, and takes a median of their estimates on each observable. Even if each $\rhohat^{(i)}$ were a quantum state, there may not exist a state exactly consistent with all our medians on a set of observables. 
\end{enumerate}

On the other hand, in the limit where the failure probability is extremely small, we can afford to estimate the expectation of the state on an $\epsilon$-cover of the observables. From these values, we can approximate the original state to accuracy $\epsilon$ in trace distance. In this regime, the problem is equivalent to tomography, which \emph{is} a proper learning problem.

In this section, we show that any proper learning algorithm for classical shadows would require significantly more samples. Our starting point is a known lower bound for the quantum state tomography question for pure states:
\begin{theorem}[\cite{Haah2016sample}]
\label{thm:tomography_lower_bound}
Any quantum algorithm that takes copies of an unknown pure state $\rho$ and outputs a classical estimate $\rhohat$ such that $\pnorm{\rho - \rhohat}{1} \le \epsilon$ with constant failure probability $\delta < 1$ requires $\Omega(d \epsilon^{-2} / \log(d/\epsilon))$ samples.
\end{theorem}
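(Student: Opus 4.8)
The plan is to prove this lower bound by a packing argument combined with Holevo's theorem, in direct analogy with the strategy used for \Cref{thm:main_lower_bound}: pure states that overlap heavily are hard to distinguish from few copies, so any estimator accurate to trace distance $\epsilon$ must extract more information than Holevo permits unless $s$ is large. First I would fix a maximal set $\{\ket{\psi_x}\}_{x \in \mathcal X}$ of pure states in $\mathbb C^d$ that are pairwise separated by trace distance greater than $2\epsilon$. Since pure states form the projective space $\mathbb{CP}^{d-1}$, which has \emph{real} dimension $2(d-1)$, a standard volume/covering estimate in the Fubini--Study metric gives a packing of size $N := |\mathcal X| \ge (c/\epsilon)^{2(d-1)}$ for a universal constant $c$ and $\epsilon$ below a constant, so $\log N = \Omega\big(d \log(1/\epsilon)\big)$.

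Next I would set up the reduction. Let $X$ be uniform on $\mathcal X$, prepare $\rho_X^{\otimes s}$ with $\rho_x := \ketbra{\psi_x}{\psi_x}$, and feed it to the purported tomography algorithm, letting $Z$ denote its classical measurement record. If the algorithm returns $\rhohat$ with $\pnorm{\rho_X - \rhohat}{1} \le \epsilon$, then because the packing is $2\epsilon$-separated, the nearest packing element to $\rhohat$ is exactly $\rho_X$; hence $X$ can be decoded from $Z$ with failure probability at most $\delta$. Fano's inequality then forces $I(X:Z) \ge (1-\delta)\log N - 1 = \Omega\big(d\log(1/\epsilon)\big)$ for any constant $\delta < 1$.

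The crux is the matching \emph{upper} bound on $I(X:Z)$, and this is where purity is essential. Each $\rho_x^{\otimes s}$ is permutation invariant, hence supported on the symmetric subspace, whose dimension is $\kappa_s = \binom{s+d-1}{d-1}$ by \Cref{symmetric_subspace_size}. By the Data Processing Inequality followed by Holevo's theorem (\Cref{thm:holevo}), $I(X:Z) \le \log \kappa_s \le (d-1)\log\frac{e(s+d-1)}{d-1} = O\big(d\log(s/d + 1)\big)$. Combining the two bounds gives $2(d-1)\log(c/\epsilon) \lesssim (d-1)\log(es/d)$, i.e.\ $c^2/\epsilon^2 \lesssim es/d$, so that $s = \Omega(d/\epsilon^2)$ for constant $\delta$ (and in general a bound of order $d\epsilon^{-2}$ up to logarithmic factors, consistent with the cited statement; the precise $\log(d/\epsilon)$ denominator tracks the fine details of the packing and information estimates).

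The main obstacle I expect is getting the packing-number exponent right: it is the \emph{doubling} from $d-1$ complex to $2(d-1)$ real parameters that converts the $\log(1/\epsilon)$ appearing on the information side into $\log(1/\epsilon^2)$, and hence produces the quadratic $\epsilon^{-2}$ scaling rather than $\epsilon^{-1}$. A loose packing bound would silently discard the entire improvement, so the Fubini--Study volume estimate must be carried out carefully. The only other delicate point is that the returned $\rhohat$ need not be pure or lie in the net, which is handled cleanly by nearest-neighbor decoding against the $2\epsilon$-separated packing; and the confinement of $\rho_x^{\otimes s}$ to the symmetric subspace---the step that makes the Holevo bound scale like $d\log s$ rather than $s\log d$---is exactly the place where the pure-state hypothesis is used, and is what ultimately distinguishes this bound from the mixed-state case.
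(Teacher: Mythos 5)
You should first note that the paper itself offers no proof of this statement---it is imported verbatim from the cited reference---so your proposal must stand on its own. Your template (packing, Fano, Holevo on the symmetric subspace) is the same family of ideas as the paper's proof of \Cref{thm:main_lower_bound}, but the final combination step has a genuine gap. Fano's inequality only gives $I(X:Z) \ge (1-\delta)\log N - 1$, and with your global packing $\log N = 2(d-1)\log(c/\epsilon)$ while your dimension bound gives $I(X:Z) \le (d-1)\log\left(e(s+d-1)/(d-1)\right)$. Since both sides have the form $(d-1)\cdot\log(\cdot)$, the factor $(1-\delta)$ lands \emph{in the exponent} when you solve for $s$: the conclusion is $s = \Omega\left(d\,(c/\epsilon)^{2(1-\delta)}\right)$, not $\Omega(d/\epsilon^2)$. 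For every constant $\delta > 0$ this is polynomially weaker than the claimed $\Omega\left(d\epsilon^{-2}/\log(d/\epsilon)\right)$ (e.g.\ $\epsilon^{-1.998}$ versus $\epsilon^{-2}/\log(1/\epsilon)$ at $\delta = 0.001$), and it becomes vacuous as $\delta \to 1$, whereas the theorem is asserted for every constant $\delta < 1$. Your parenthetical hedge that the discrepancy only concerns ``logarithmic factors'' is therefore incorrect: the loss is polynomial in $1/\epsilon$, and it cannot be removed by sharpening the Fubini--Study volume estimate, because the $(1-\delta)$ loss in Fano is unavoidable (a channel that transmits $X$ perfectly with probability $1-\delta$ and outputs noise otherwise saturates it).

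The standard repair is to localize the packing so that the $\epsilon$-dependence moves out of $\log N$ and into the per-copy information, where the $(1-\delta)$ factor only costs a linear prefactor. Fix $\ket{0}$ and take $\ket{\psi_x} = \sqrt{1-c\epsilon^2}\,\ket{0} + \sqrt{c}\,\epsilon\,\ket{\phi_x}$, where $\{\ket{\phi_x}\}$ are $2^{\Omega(d)}$ unit vectors orthogonal to $\ket{0}$ with pairwise overlap at most $1/2$; these $\rho_x$ are pairwise $\Theta(\epsilon)$-separated in trace norm, so $\log N = \Omega(d)$ with no $\log(1/\epsilon)$ factor, and nearest-neighbor decoding plus Fano gives $I(X:Z) = \Omega\left((1-\delta)d\right)$. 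On the other side, do not use the symmetric-subspace dimension at all: bound the Holevo quantity by the entropy of the average state, $I(X:Z) \le S\left(\E_x\left[\rho_x^{\otimes s}\right]\right) \le s\, S\left(\E_x[\rho_x]\right) = O\left(s\,\epsilon^2 \log(d/\epsilon)\right)$, which follows because $\E_x[\rho_x]$ has weight $1 - O(\epsilon^2)$ on $\ketbra{0}{0}$ (here purity enters through $S(\rho_x^{\otimes s}) = 0$ in the Holevo quantity). Combining yields $s = \Omega\left((1-\delta)\, d\, \epsilon^{-2}/\log(d/\epsilon)\right)$, exactly the stated bound. In short, the mechanism that produces the $\epsilon^{-2}$ is the $O(\epsilon^2\log(d/\epsilon))$ entropy of the mean state of a local ensemble, not the symmetric-subspace dimension count that your write-up identifies as the crux.
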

In particular, we show that a proper classical shadows algorithm implies a state tomography algorithm.

\begin{thm}\label{thm:proper_learning_LB}
Suppose there exists a quantum learning algorithm that, given $s$ copies of an unknown $d$-dimensional state $\rho$, outputs a classical description of a trace $1$, Hermitian PSD matrix $\hat{\rho}$ such that, for all $O\in \obs(1)$ with failure probability $\delta < 1$:
\begin{align}
    \abs{\tr{O\rho}-\tr{O\hat{\rho}}}\leq \epsilon.
\end{align}
Then, $s = \tilde\Omega(d/\epsilon)$. 
\end{thm}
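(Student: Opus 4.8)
The plan is to observe that a proper classical shadows algorithm is \emph{itself} a pure-state tomography algorithm, and then invoke the tomography lower bound of \Cref{thm:tomography_lower_bound}. Since the hypothesized algorithm succeeds on \emph{every} $d$-dimensional state, I would restrict attention to \emph{pure} inputs $\rho = \ketbra{\psi}{\psi}$ (on which the algorithm must still work) and simply output the returned matrix $\hat{\rho}$ as the tomographic estimate. The only real content is to show that the observable-estimation guarantee, which \emph{a priori} only controls expectation values, forces $\hat{\rho}$ to be close to $\rho$ in trace distance.

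First I would extract a fidelity bound. The key point is that the unknown state's own projector $\rho = \ketbra{\psi}{\psi}$ lies in $\obs(1)$, since it is a rank-one projector with $\norm{\rho} = 1$ and $\tr{\rho^2} = 1$. Applying the assumed guarantee to the single observable $O = \rho$ gives $\abs{\tr{\rho^2} - \tr{\rho\hat{\rho}}} = \abs{1 - \matrixel{\psi}{\hat{\rho}}{\psi}} \le \epsilon$, i.e.\ the fidelity satisfies $\matrixel{\psi}{\hat{\rho}}{\psi} \ge 1 - \epsilon$ with probability at least $1 - \delta$. Crucially, this step never requires knowing $\rho$: the reduction runs the algorithm unchanged and outputs $\hat{\rho}$, and only the \emph{analysis} invokes the guarantee at the particular observable $O = \rho$.

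Next I would convert the fidelity bound into a trace-distance bound. Because $\hat{\rho}$ is a genuine state (PSD with trace one) and $\rho$ is pure, the Fuchs--van de Graaf inequality~\cite{Fuchs1999} gives
\[
\tfrac{1}{2}\sonenorm{\rho - \hat{\rho}} \le \sqrt{1 - \matrixel{\psi}{\hat{\rho}}{\psi}} \le \sqrt{\epsilon},
\]
so $\hat{\rho}$ is an $\epsilon' := 2\sqrt{\epsilon}$ accurate estimate of $\rho$ in Schatten $1$-norm, produced from $s$ samples with failure probability $\delta$. Feeding $\epsilon' = 2\sqrt{\epsilon}$ into \Cref{thm:tomography_lower_bound} then yields $s = \Omega\left(d (\epsilon')^{-2} / \log(d/\epsilon')\right) = \Omega\left(d / (\epsilon \log(d/\sqrt{\epsilon}))\right) = \tilde{\Omega}(d/\epsilon)$, as claimed.

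The main obstacle is exactly this passage from expectation values to trace distance, and it is where the purity of $\rho$ and the PSD/trace-one property of $\hat{\rho}$ are both indispensable. One cannot route the argument through \Cref{lem:observable_trace_distance}, whose optimal distinguishing observable has rank equal to $\rank(\rho - \hat{\rho})$ and hence squared Frobenius norm far exceeding $1$ once $\hat{\rho}$ is high rank---such observables are not in $\obs(1)$. Restricting to $\obs(1)$ only controls the spectral distance (equivalently, via $O=\rho$, the fidelity), and it is precisely the assumption that $\hat{\rho}$ is a \emph{bona fide} state that lets Fuchs--van de Graaf upgrade this to a $1$-norm bound; for the non-proper estimators used elsewhere in the paper this upgrade fails, which is exactly why properness carries the $d/\epsilon$ price. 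Note also that the square-root loss in accuracy ($\epsilon \mapsto 2\sqrt{\epsilon}$) is what transmutes Haah's $\epsilon^{-2}$ scaling into the final $\epsilon^{-1}$.
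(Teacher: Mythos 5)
Your proposal is correct and takes essentially the same route as the paper's own proof: apply the guarantee at the single observable $O = \rho \in \obs(1)$, use Fuchs--van de Graaf together with the fact that $\hat{\rho}$ is a genuine (PSD, trace-one) state to upgrade the fidelity bound $\Tr(\rho\hat{\rho}) \ge 1-\epsilon$ to a trace-distance bound of $O(\sqrt{\epsilon})$, and then invoke the tomography lower bound of \Cref{thm:tomography_lower_bound}. The only differences are cosmetic (your constant $2\sqrt{\epsilon}$ versus the paper's $\sqrt{8\epsilon}$, and your slightly more careful remark that the guarantee at $O=\rho$ is used only in the analysis, not by the reduction itself).
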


\begin{proof}
Run the algorithm and feed it $O = \rho$. We have that $\pnorm{\rho-\hat{\rho}}{1}\leq \sqrt{8\epsilon}$ since
\begin{align*}
    \epsilon &\geq |  \Tr(O \rho) - \Tr(O \rhohat) | \\
    &= |\Tr( \rho^2 )- \Tr( \rho \rhohat )| \\
    &= |1 - \Tr( \rho \rhohat )| & \text{$\rho = \rho^2$ since $\rho$ is pure} \\
    &= |1 - F( \rho,  \rhohat )^2| & \text{$\Tr(\rho \sigma) = F(\rho, \sigma)^2$ if either is pure} \\
    &\geq 1 - F(\rho, \rhohat)^2 \\
    &\geq \tfrac{1}{8} \| \rho - \rhohat \|_1^{2} & \text{Fuchs-van de Graaf inequality}
\end{align*}
where $F(\rho, \sigma) := \Tr(\sqrt{\sqrt{\rho} \sigma \sqrt{\rho}})$ is the \emph{fidelity} of $\rho$ and $\sigma$. 

Hence, if we solve the classical shadows task to error $\epsilon$ on this observable, then we have estimated $\rho$ to within Schatten $1$-norm distance $\mathcal{O}(\sqrt{\epsilon})$. That is, classical shadows learner is also a quantum tomography algorithm, so we can use the known lower bound from \Cref{thm:tomography_lower_bound}. It follows that a \emph{proper} learning algorithm for the classical shadows task requires $\tilde\Omega(d/\epsilon)$ samples.  
\end{proof}
In the commonly considered regime where $d\gg\epsilon^{-1}$, the above lower bound is significantly more than both our algorithm (from Section~\ref{sec:upper_bound}) and the original classical shadows algorithm, which use only  $\bigo{\tfrac{1}{\epsilon^2}}$ samples.


\section{Covariance bounds}
\label{subsec:covariance_bounds}
The goal of this subsection is to prove \Cref{lem:all_the_covariances}, which gives each of the covariance terms $\Cov(\Tr(O \rhohat_i \rhohat_j), \Tr(O \rhohat_k \rhohat_{\ell}))$ that appears in the expansion of the estimator $\hat Y$. Because $\rhohat_j$ appears twice in all of the covariance terms that are non-zero, it will be convenient to explicitly calculate the second moment of $\rhohat_j$. 
\begin{lemma}
\label{lem:second_rhohat}
For all $j$, the second moment of $\rhohat_j$ is 
$$
\E[\rhohat_j^{\otimes 2}] = \left( I \otimes I + I \otimes \rho + \rho \otimes I \right) \left( W_{(1\,2)} - \frac{2}{d+2}\sym^{(2)} \right).
$$
\end{lemma}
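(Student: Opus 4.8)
The plan is to reduce everything to the moment formulas already established for the standard symmetric joint measurement, specialized to $s=1$, since each $\rhohat_j = (d+1)\Psi_j - I$ is built from a single-copy measurement $\mathcal{M}_1$ with Hermitian outcome $\Psi_j$. First I would expand the tensor square directly,
\[
\rhohat_j^{\otimes 2} = (d+1)^2\, \Psi_j\otimes\Psi_j - (d+1)\left(\Psi_j\otimes I + I\otimes\Psi_j\right) + I\otimes I,
\]
and take expectations term by term. This turns the problem into substituting $\E[\Psi]$ from \Cref{lem:first_moment} and $\E[\Psi\otimes\Psi]$ from \Cref{lem:second_moment}, both at $s=1$, where they read $\E[\Psi] = (I+\rho)/(d+1)$ and $\E[\Psi\otimes\Psi] = \frac{2}{(d+1)(d+2)}\bigl((I+\rho)^{\otimes 2} - \rho\otimes\rho\bigr)\sym^{(2)}$.

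The key simplification is to observe that $(I+\rho)^{\otimes 2} - \rho\otimes\rho = I\otimes I + I\otimes\rho + \rho\otimes I$, which is precisely the prefactor appearing in the statement; call it $S$. After substitution, the $(d+1)^2$ from the tensor square cancels a denominator in the quadratic term, while the linear terms collapse cleanly: using $(d+1)\E[\Psi] = I+\rho$ one finds $-(d+1)(\E[\Psi]\otimes I + I\otimes\E[\Psi]) + I\otimes I = -(I\otimes I + \rho\otimes I + I\otimes\rho) = -S$. Hence the whole expression becomes
\[
\E[\rhohat_j^{\otimes 2}] = \frac{2(d+1)}{d+2}\,S\,\sym^{(2)} - S.
\]

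The final step is to match this with the claimed factored form $S\bigl(W_{(1\,2)} - \frac{2}{d+2}\sym^{(2)}\bigr)$. Here I would first note that $S$ commutes with $\sym^{(2)}$: since $W_{(1\,2)}(A\otimes B)W_{(1\,2)} = B\otimes A$, the operator $S$ is invariant under conjugation by the swap, so it commutes with $W_{(1\,2)}$ and therefore with $\sym^{(2)} = \tfrac{1}{2}(I + W_{(1\,2)})$; this justifies pulling $S$ out on either side. Then substituting $W_{(1\,2)} = 2\sym^{(2)} - I$ into the target gives $\bigl(2 - \tfrac{2}{d+2}\bigr)\sym^{(2)} - I = \frac{2(d+1)}{d+2}\sym^{(2)} - I$ inside the parentheses, which reproduces my expression exactly.

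There is no substantive obstacle here beyond careful bookkeeping of scalar factors; the two things worth flagging explicitly are the algebraic identity $(I+\rho)^{\otimes 2} - \rho\otimes\rho = S$ (which is what makes the common prefactor emerge) and the commutativity of $S$ with $\sym^{(2)}$ (which legitimizes the factored presentation and the substitution $W_{(1\,2)} = 2\sym^{(2)} - I$).
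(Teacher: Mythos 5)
Your proposal is correct and follows essentially the same route as the paper: expand $((d+1)\Psi - I)^{\otimes 2}$, substitute the $s=1$ specializations of \Cref{lem:first_moment} and \Cref{lem:second_moment}, observe that the linear-in-$\Psi$ terms collapse to $-(I\otimes I + I\otimes\rho + \rho\otimes I)$, and rewrite the result using $W_{(1\,2)} = 2\sym^{(2)} - I$. The only differences are cosmetic (the paper carries $W_{(1)(2)} + W_{(1\,2)}$ notation and factors at the end, while you keep $\sym^{(2)}$ throughout; your commutativity remark is a fine sanity check but not actually needed, since you only ever multiply by the prefactor on the left).
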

\begin{proof}
Recall that $\rhohat_j$ is obtained through an independent and identical measurement process, so it suffices to analyze a specific $\rhohat := (d+1)\Psi - I$ term. By \Cref{lem:first_moment} and \Cref{lem:second_moment}, we compute the first and second moment of $\Psi$ for the special case $s = 1$ as 
$$
\E[\Psi] = \frac{I + \rho}{d+1} \; \text{ and } \;        
\E[\Psi \otimes \Psi] = \frac{I \otimes I + I \otimes \rho + \rho \otimes I}{(d+1)(d+2)} \left( W_{(1)(2)} + W_{(1\,2)} \right) 
$$
Now, expanding out the second moment, we get
\begin{align*}
    \E[\rhohat^{\otimes 2}] &= \E[ ((d+1)\Psi - I)^{\otimes 2}] \\
    &= (d+1)^2 \E[ \Psi \otimes \Psi] - (d+1)(\E[\Psi] \otimes I + I \otimes \E[\Psi]) + I \otimes I \\
    &= \frac{d+1}{d+2} \left(I \otimes I + I \otimes \rho + \rho \otimes I \right)(W_{(1)(2)} + W_{(1\,2)}) - \left(I \otimes I + I \otimes \rho + \rho \otimes I \right) \\
    &= \left(I \otimes I + I \otimes \rho + \rho \otimes I \right) \left( \frac{(d+1) W_{(1\,2)} - W_{(1)(2)}}{d+2} \right),
\end{align*}
where we recall that $W_{(1)(2)} = I \otimes I$ to obtain the last equality. We arrive at the lemma by writing the final line in terms of the symmetric subspace $\sym^{(2)} = (W_{(1\,2)} + W_{(1)(2)})/2$.
\end{proof}
Our goal will now be to express all covariance terms in a manner such that we can apply \Cref{lem:second_rhohat}. Since these equations can become quite cumbersome to write out fully, we will often drop the ``$\otimes$'' symbol in expressions with $I$ and $\rho$. For example, 
$$
I \otimes I + \rho \otimes I + I \otimes \rho \to (II + I \rho + \rho I)
$$
will be a common abbreviation. We enclose these abbreviations in parentheses when they are multiplied with other terms.

Let's first tackle the covariance terms $\Cov(\Tr(O \rhohat_i \rhohat_j), \Tr(O \rhohat_k \rhohat_{\ell}))$ where there is only 1 index shared, i.e., $(|\{ i, j \} \cap \{ k, \ell \}| = 1)$. There are two subcases: a match in different positions ($i = \ell$ or $j = k$); or a match in same position ($i = k$ or $j = \ell$). While the proofs are quite similar, we break them into two separate corollaries.

\begin{cor}
    \label{cor:ijjk}
    For all distinct $i, j, k$,
    $$\Cov( \Tr( O\rhohat_i \rhohat_j ), \Tr( O\rhohat_j \rhohat_k ) ) \leq 2\Tr(O \rho)^2 \leq 2\| O \|^2 .$$
\end{cor}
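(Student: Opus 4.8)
The plan is to evaluate this covariance in closed form and then bound it. Since $i,j,k$ are distinct, the estimators $\rhohat_i,\rhohat_j,\rhohat_k$ are mutually independent and only $\rhohat_j$ is shared between the two factors, so the first step is to condition on $\rhohat_j$. Given $\rhohat_j$, the factor $\Tr(O\rhohat_i\rhohat_j)$ depends only on $\rhohat_i$ and $\Tr(O\rhohat_j\rhohat_k)$ only on $\rhohat_k$; taking these inner expectations with $\E[\rhohat_i]=\E[\rhohat_k]=\rho$ from \Cref{lem:first_moment} collapses the product to $\E_{\rhohat_j}[\Tr(O\rho\rhohat_j)\Tr(O\rhohat_j\rho)]$. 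The same independence shows each factor has mean $\Tr(O\rho^2)=\Tr(O\rho)$ using purity, so the product-of-means piece subtracted in the covariance is exactly $\Tr(O\rho)^2$.

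Next I would convert the surviving expectation into a single trace against the second moment $\E[\rhohat_j^{\otimes2}]$ from \Cref{lem:second_rhohat}. After cycling each scalar trace so that $\rhohat_j$ sits on the right, the identity $\Tr(A\rhohat_j)\Tr(B\rhohat_j)=\Tr[(A\otimes B)\rhohat_j^{\otimes2}]$ with $A=O\rho$ and $B=\rho O$ rewrites the term as $\Tr[(O\rho\otimes\rho O)\,\E[\rhohat_j^{\otimes2}]]$, so \Cref{lem:second_rhohat} applies directly.

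I would then evaluate this trace with three tools: the contraction $\Tr[(A\otimes B)W_{(1\,2)}]=\Tr(AB)$ (\Cref{fact:two} at $n=2$), purity as $\rho^2=\rho$, and the rank-one collapse $\rho O\rho=\Tr(O\rho)\rho$. Multiplying $(O\rho\otimes\rho O)$ into $(I\otimes I+I\otimes\rho+\rho\otimes I)$ produces three tensor terms, each paired against $W_{(1)(2)}$ and $W_{(1\,2)}$ via $W_{(1\,2)}-\tfrac{2}{d+2}\sym^{(2)}=\tfrac{1}{d+2}\big((d+1)W_{(1\,2)}-I\otimes I\big)$; tracking the scalar factors yields the exact value
\[
\Cov(\Tr(O\rhohat_i\rhohat_j),\Tr(O\rhohat_j\rhohat_k))=\frac{2(d+1)}{d+2}\Tr(O^2\rho)-\frac{4}{d+2}\Tr(O\rho)^2.
\]
Discarding the nonnegative subtracted term and using $\tfrac{2(d+1)}{d+2}<2$ together with $\Tr(O^2\rho)\le\norm{O^2}\Tr(\rho)=\norm{O}^2$ (since $O^2\preceq\norm{O^2}I$) gives the clean chain $\Cov<2\Tr(O^2\rho)\le2\norm{O}^2$, which is the $\mathcal O(\norm{O}^2)$ bound that \Cref{lem:all_the_covariances} needs.

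The main obstacle is the bookkeeping in the evaluation step, where each of the three tensor terms splits into a $W_{(1)(2)}$ and a $W_{(1\,2)}$ contribution and the collapses $\rho^2=\rho$ and $\rho O\rho=\Tr(O\rho)\rho$ must be applied in exactly the right tensor slots to keep the scalar coefficients correct. One point worth recording is that this computation produces the controlling trace $\Tr(O^2\rho)$ rather than $\Tr(O\rho)^2$ (the two agree only when the pure state is an eigenvector of $O$); since $\Tr(O^2\rho)\le\norm{O}^2$ anyway, the final $2\norm{O}^2$ bound and its downstream use are unaffected.
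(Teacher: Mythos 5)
Your overall strategy --- condition on $\rhohat_j$, reduce everything to the second moment $\E[\rhohat_j^{\otimes 2}]$ from \Cref{lem:second_rhohat}, and evaluate via $W_{(1\,2)}$ contractions together with the purity collapses $\rho^2 = \rho$ and $\rho O \rho = \Tr(O\rho)\rho$ --- is exactly the paper's, and your algebra is internally consistent. But there is one genuine gap: you expanded the \emph{unconjugated} product $\E[\Tr(O\rhohat_i\rhohat_j)\Tr(O\rhohat_j\rhohat_k)]$, whereas these traces are complex in general ($O\rhohat_j\rhohat_k$ is not Hermitian), and the paper's covariance conjugates the second factor, using $\Tr(O\rhohat_j\rhohat_k)^{*} = \Tr(O\rhohat_k\rhohat_j)$. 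The conjugation places $\rhohat_j$ in the \emph{same} trace position in both factors, so after conditioning the product becomes $\Tr(O\rho\,\rhohat_j)^2 = \Tr[(O\rho \otimes O\rho)\,\rhohat_j^{\otimes 2}]$; purity then collapses $(O\rho \otimes O\rho)(I\otimes I + I\otimes\rho + \rho\otimes I)$ to $3(O\rho\otimes O\rho)$, every surviving contraction equals $\Tr((O\rho)^2) = \Tr(O\rho)^2$, and one obtains the exact value $\tfrac{2(d-1)}{d+2}\Tr(O\rho)^2 \le 2\Tr(O\rho)^2$, which is the inequality as stated. Your pairing $A = O\rho$, $B = \rho O$ instead yields (correctly, for the unconjugated quantity)
\[
\frac{2(d+1)}{d+2}\Tr(O^2\rho) \;-\; \frac{4}{d+2}\Tr(O\rho)^2,
\]
which is \emph{not} bounded by $2\Tr(O\rho)^2$: take any pure $\rho$ with $\Tr(O\rho) = 0$ but $\Tr(O^2\rho) > 0$. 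So the stated intermediate bound $2\Tr(O\rho)^2$ is not established by your argument. You noticed the $\Tr(O^2\rho)$-versus-$\Tr(O\rho)^2$ discrepancy in your closing remark, but attributed it to bookkeeping rather than to the missing conjugation, which is its actual source.

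In fact, what you computed is precisely the paper's \Cref{cor:ijkj}: under the conjugated convention, $\Cov(\Tr(O\rhohat_i\rhohat_j), \Tr(O\rhohat_k\rhohat_j))$ has second-moment term $\E[\Tr(O\rhohat_i\rhohat_j)\Tr(O\rhohat_j\rhohat_k)]$ --- exactly your expression --- and the paper bounds it by $2\Tr(O^2\rho) \le 2\norm{O}^2$, consistent with your closed form. Because $\Tr(O\hat{Y})$ is real and the family of summands $\{\Tr(O\rhohat_i\rhohat_j)\}_{i\neq j}$ is closed under conjugation, the conjugated and unconjugated decompositions of $\Var(\Tr(O\hat{Y}))$ merely swap the labels of the two one-index-match families, so \Cref{lem:all_the_covariances} and \Cref{lem:quadratic_covariance} are unaffected by your convention. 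Still, to prove the corollary as literally stated you need the one-line fix above: conjugate, rewrite as $\Tr(O\rho\,\rhohat_j)^2$, and rerun your own calculation, which then produces $\tfrac{2(d-1)}{d+2}\Tr(O\rho)^2$.
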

\begin{proof}
    First, translate covariance to a second moment calculation: 
    \begin{align*}
        \Cov( \Tr( O\rhohat_i \rhohat_j ), \Tr( O\rhohat_j \rhohat_k ) )&= \E[\Tr( O\rhohat_i \rhohat_j ) \Tr( O\rhohat_j \rhohat_k )^{*}] - \Tr(O \rho) \Tr(O \rho)^{*} \\
        &= \E[\Tr( O\rhohat_i \rhohat_j ) \Tr( O\rhohat_k \rhohat_j )] - \Tr(O \rho)^2,
    \end{align*}
    where the last equality uses that $O$, $\rho_j$, $\rho_k$ are Hermitian. The second moment can be further decomposed using independence of $\rhohat_i$, $\rhohat_j$, and $\rhohat_k$.  
    \begin{align*}
    \E[\Tr( O\rhohat_i \rhohat_j ) \Tr( O\rhohat_k \rhohat_j )]
    &= \Tr \left( (O \otimes O) \, \E[\rhohat_i \rhohat_j \otimes \rhohat_k \rhohat_j] \right) \\
    &= \Tr \left( (O \otimes O) \, (\E[\rhohat_i] \otimes \E[\rhohat_k]) \, \E[\rhohat_j \otimes \rhohat_j] \right)
    \end{align*}
    Using $\E[\rhohat_i] = \E[\rhohat_k] = \rho$ and \Cref{lem:second_rhohat}, we have
    \begin{align*}
    (\E[\rhohat_i] \otimes \E[\rhohat_k]) \E[\rhohat_j \otimes \rhohat_j] 
    &= (\rho \rho) \left( II + \rho I + I \rho \right)( W_{(1\,2)} - 2\sym^{(2)}/(d+2) ) \\
    &= 3 (\rho \rho) ( W_{(1\,2)} - 2\sym^{(2)}/(d+2) ),
    \end{align*}
    where we have once again use the purity of $\rho$. Plugging everything in, we get
    \begin{align*}
    \Cov( \Tr( O\rhohat_i \rhohat_j ), \Tr( O\rhohat_j \rhohat_k ) ) &= 3\Tr( O^{\otimes 2} \rho^{\otimes 2} W_{(1\,2)}) - \frac{6}{d+2} \Tr( O^{\otimes 2} \rho^{\otimes 2} \sym^{(2)} ) - \Tr(O \rho)^2 \\
    &= 3 \Tr(O \rho)^2 - \frac{6}{d+2} \Tr(O \rho)^2 - \Tr(O \rho)^2 \\
    &\leq 2 \Tr(O \rho)^2 \leq 2 \| O \|^2. 
    \end{align*}
\end{proof}
\begin{cor}
    \label{cor:ijkj}
    For all distinct $i, j, k$,
    $$\Cov( \Tr( O\rhohat_i \rhohat_j ), \Tr( O\rhohat_k \rhohat_j ) ) \leq 2 \Tr(O^2 \rho) \leq 2 \| O \|^2.$$
\end{cor}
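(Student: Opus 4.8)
The plan is to follow the same template as the proof of \Cref{cor:ijjk}, the only genuine difference being the bookkeeping of where the shared index $j$ lands after taking complex conjugates. First I would translate the covariance into a second-moment expression,
\[
\Cov(\Tr(O\rhohat_i\rhohat_j),\Tr(O\rhohat_k\rhohat_j)) = \E[\Tr(O\rhohat_i\rhohat_j)\,\Tr(O\rhohat_k\rhohat_j)^*] - \Tr(O\rho)^2,
\]
and then use Hermiticity of $O$, $\rhohat_k$, $\rhohat_j$ together with cyclicity of the trace to rewrite the conjugated factor as $\Tr(O\rhohat_k\rhohat_j)^* = \Tr(O\rhohat_j\rhohat_k)$. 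This is the crucial point of departure from \Cref{cor:ijjk}: there the shared $\rhohat_j$ sat in the last slot of both traces, whereas here conjugation moves it into the middle slot of the second trace, so the two copies of $\rhohat_j$ are no longer adjacent in the same way.

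Next I would express the product of the two traces as a single trace on the doubled space, $\E[\Tr(O\rhohat_i\rhohat_j)\Tr(O\rhohat_j\rhohat_k)] = \Tr((O\otimes O)\,\E[\rhohat_i\rhohat_j\otimes\rhohat_j\rhohat_k])$, and factor the inner expectation using independence of $\rhohat_i,\rhohat_j,\rhohat_k$. The key algebraic identity is $\rhohat_i\rhohat_j\otimes\rhohat_j\rhohat_k = (\rhohat_i\otimes I)(\rhohat_j\otimes\rhohat_j)(I\otimes\rhohat_k)$, which yields $(\rho\otimes I)\,\E[\rhohat_j^{\otimes 2}]\,(I\otimes\rho)$ after taking expectations and using $\E[\rhohat_i]=\E[\rhohat_k]=\rho$. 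Compare this to the $(\rho\otimes\rho)\,\E[\rhohat_j^{\otimes 2}]$ that arises in \Cref{cor:ijjk}; the sandwiching by $\rho$ on each tensor factor is precisely what will convert the dominant term from $\Tr(O\rho)^2$ into $\Tr(O^2\rho)$.

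I would then substitute \Cref{lem:second_rhohat} for $\E[\rhohat_j^{\otimes 2}]$ and repeatedly apply the purity relation $\rho^2=\rho$ to collapse the factor $I\otimes I+I\otimes\rho+\rho\otimes I$. Evaluating the remaining traces against $W_{(1\,2)}$ and $\sym^{(2)}$ is routine using $\Tr((A\otimes B)W_{(1\,2)})=\Tr(AB)$ and the pure-state identity $\Tr((O\rho)^2)=\Tr(O\rho)^2$; after cyclicity this produces terms $\Tr(O^2\rho)$ and $\Tr(O\rho)^2$. Collecting everything, the covariance should simplify to
\[
\Cov(\Tr(O\rhohat_i\rhohat_j),\Tr(O\rhohat_k\rhohat_j)) = 2\Tr(O^2\rho) - \tfrac{2}{d+2}\bigl(\Tr(O^2\rho)+2\Tr(O\rho)^2\bigr),
\]
and since $\Tr(O^2\rho)\ge 0$ and $\Tr(O\rho)^2\ge 0$, each subtracted piece is manifestly nonnegative, so I can drop them to obtain $\Cov\le 2\Tr(O^2\rho)$. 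Finally, H\"older's inequality ($\Tr(O^2\rho)\le\|O^2\|\Tr(\rho)=\|O\|^2$) gives the second bound $2\|O\|^2$.

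The main obstacle is simply the careful position-tracking in the factorization step: getting the $I$'s and $\rho$'s onto the correct tensor slots and invoking purity at the right moments, since a misplaced factor would spuriously produce $\Tr(O\rho)^2$ (the \Cref{cor:ijjk} answer) rather than the $\Tr(O^2\rho)$ demanded here. Everything downstream is bounded arithmetic with nonnegative error terms.
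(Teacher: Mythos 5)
Your proposal is correct and takes essentially the same route as the paper's proof: the same reduction via independence and the factorization $\rhohat_i\rhohat_j\otimes\rhohat_j\rhohat_k=(\rhohat_i\otimes I)(\rhohat_j\otimes\rhohat_j)(I\otimes\rhohat_k)$ to $(\rho\otimes I)\,\E[\rhohat_j^{\otimes 2}]\,(I\otimes\rho)$, then \Cref{lem:second_rhohat}, purity, and discarding the nonnegative $\sym^{(2)}$ contribution before applying H\"older. Your exact intermediate formula $2\Tr(O^2\rho)-\tfrac{2}{d+2}\left(\Tr(O^2\rho)+2\Tr(O\rho)^2\right)$ checks out (the paper only asserts that the dropped term is nonnegative), so the argument is sound.
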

\begin{proof}
The proof is similar to that of \Cref{cor:ijjk}. We expand the covariance as 
$$
\Cov(\Tr(O \rhohat_i \rhohat_j), \Tr(O \rhohat_k \rhohat_j)) = \E[\Tr(O \rhohat_i \rhohat_j) \Tr(O \rhohat_j \rhohat_k)] - \Tr(O \rho)^2
$$
and compute the second moment term using independence:
\begin{align*}
    \E[\Tr(O \rhohat_i \rhohat_j) \Tr(O \rhohat_j \rhohat_k)]
    &= \Tr \big((O \otimes O) (\E[\rhohat_i] \otimes I) \E[ \rhohat_j \otimes \rhohat_j] (I \otimes \E[\rhohat_k])] \big) \\
    &= \Tr ( (O \otimes O) (\rho I)  ( II + \rho I + I\rho ) ( W_{(1\,2)} - 2\sym^{(2)}/(d+2) ) (I \rho) )
\end{align*}
For the $W_{(1 \, 2)}$ term, we get 
$$
\Tr( O^{\otimes 2} (\rho I)(II + I\rho + \rho I) W_{(1\,2)} (I \rho)] = \Tr( O^{\otimes 2} (2 \rho I + \rho \rho) W_{(1\,2)} ) = 2 \Tr(O^2 \rho) + \Tr(O \rho)^2.  
$$
The $\sym^{(2)}$ term subtracts a positive quantity, so we drop it to get an upper bound on covariance.
$$
\Cov(\Tr(O \rhohat_i \rhohat_j), \Tr(O \rhohat_k \rhohat_j)) \leq 2 \Tr(O^2 \rho) + \Tr(O \rho)^2 - \Tr(O \rho)^2 = 2 \Tr(O^2 \rho) \leq 2 \| O \|^2.
$$
\end{proof}

We now turn to the covariance terms $\Cov(\Tr(O \rhohat_i \rhohat_j), \Tr(O \rhohat_k \rhohat_{\ell}))$ which share two indices, i.e., $(|\{ i, j \} \cap \{ k, \ell \}| = 2)$. Once again, there are two subcases: the order is swapped ($i = \ell$ and $j = k$); or the order is the same ($i = j$ and $k = \ell$). 

In both cases, their are terms of the covariance that are proportional to $\Tr(O)$, but interestingly, we cannot assume $O$ is traceless as we have earlier. This is due to the fact that the $\hat Y$ estimator does not necessarily have trace 1. Nevertheless, it will turn out that this cannot affect the overall covariance of $\hat Y$ too much, as we will show in the following corollaries.

\begin{cor}
\label{cor:ijji}
    For all distinct $i, j$,
    $$\Cov( \Tr( O\rhohat_i \rhohat_j ), \Tr( O\rhohat_j \rhohat_i ) ) \leq d \Tr(O^2) + 6 \sqrt{d \Tr(O^2)} + \| O \|^2$$
\end{cor}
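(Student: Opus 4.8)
The plan is to follow the template of \Cref{cor:ijjk} and \Cref{cor:ijkj}, except that here both shared indices force the relevant object to be the second moment $M := \E[\rhohat^{\otimes 2}]$ from \Cref{lem:second_rhohat}, applied twice. First I would use Hermiticity of $O$, $\rhohat_i$, $\rhohat_j$ to note that $\Tr(O\rhohat_j\rhohat_i) = \Tr(O\rhohat_i\rhohat_j)^{*}$, so that
\[
\Cov(\Tr(O\rhohat_i\rhohat_j),\Tr(O\rhohat_j\rhohat_i)) = \E[\Tr(O\rhohat_i\rhohat_j)^2] - \Tr(O\rho)^2 .
\]
Writing the squared trace on the two-fold tensor space as $\Tr(O\rhohat_i\rhohat_j)^2 = \Tr[(O\otimes O)(\rhohat_i\otimes\rhohat_i)(\rhohat_j\otimes\rhohat_j)]$ and using independence of $\rhohat_i$ and $\rhohat_j$ to factor the expectation, I obtain $\E[\Tr(O\rhohat_i\rhohat_j)^2] = \Tr[(O\otimes O)M^2]$, reducing everything to a computation of $M^2$.

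The next step is to evaluate $M^2$. Setting $P := I\otimes I + I\otimes\rho + \rho\otimes I$ and using $\sym^{(2)} = \tfrac12(I\otimes I + W_{(1\,2)})$, \Cref{lem:second_rhohat} rewrites as $M = \tfrac{1}{d+2}P\bigl((d+1)W_{(1\,2)} - I\otimes I\bigr)$. The key simplification is that $P$ commutes with $W_{(1\,2)}$: conjugation by the swap exchanges the two tensor factors and fixes $P$, i.e.\ $W_{(1\,2)}PW_{(1\,2)} = P$. Hence $M^2 = \tfrac{P^2}{(d+2)^2}\bigl((d+1)W_{(1\,2)} - I\otimes I\bigr)^2 = \tfrac{P^2}{(d+2)^2}\bigl(((d+1)^2+1)\,I\otimes I - 2(d+1)W_{(1\,2)}\bigr)$, using $W_{(1\,2)}^2 = I\otimes I$. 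Since $\rho$ is pure ($\rho^2=\rho$) and $I\otimes\rho$, $\rho\otimes I$ commute, $P^2$ collapses to $I\otimes I + 3(I\otimes\rho) + 3(\rho\otimes I) + 2(\rho\otimes\rho)$. I would then trace against $O\otimes O$ using the two elementary rules $\Tr[(O\otimes O)(A\otimes B)] = \Tr(OA)\Tr(OB)$ and $\Tr[(O\otimes O)(A\otimes B)W_{(1\,2)}] = \Tr(OAOB)$. The identity part contributes $\Tr(O)^2 + 6\Tr(O)\Tr(O\rho) + 2\Tr(O\rho)^2$ and the swap part contributes $\Tr(O^2) + 6\Tr(O^2\rho) + 2\Tr(O\rho)^2$, each with an explicit $d$-dependent prefactor; subtracting $\Tr(O\rho)^2$ gives a closed scalar form for the covariance.

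To reach the stated bound, I would discard the terms carrying the factor $-2(d+1)$, which are nonnegative since $\Tr(O^2), \Tr(O^2\rho), \Tr(O\rho)^2 \ge 0$, observe that $\tfrac{(d+1)^2+1}{(d+2)^2} < 1$, and bound the surviving invariants by $\Tr(O)^2 = \Tr(OI)^2 \le \Tr(O^2)\Tr(I^2) = d\,\Tr(O^2)$ (Cauchy--Schwarz), $|\Tr(O\rho)| \le \|O\|$ (H\"older), and $\Tr(O\rho)^2 \le \|O\|^2$. The one delicate point is the $\Tr(O\rho)^2$ bookkeeping: the $+2\Tr(O\rho)^2$ arising from $P^2$ is partly cancelled by the $-\Tr(O\rho)^2$ coming from the mean, and I would check that the net coefficient $\tfrac{2((d+1)^2+1)}{(d+2)^2} - 1 = \tfrac{d^2}{(d+2)^2} < 1$ keeps this contribution at most $\|O\|^2$. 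Assembling the three surviving pieces then yields $d\,\Tr(O^2) + 6\sqrt{d\,\Tr(O^2)} + \|O\|^2$. The main obstacle is not conceptual but the careful accounting of the $O(1/d)$-order prefactors, so that the additive constant lands at exactly $\|O\|^2$ rather than a larger multiple; the commutation $[P,W_{(1\,2)}]=0$ and the purity identity $\rho^2=\rho$ are what make this accounting tractable.
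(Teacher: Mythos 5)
Your proposal is correct and takes essentially the same route as the paper's proof: reduce the covariance to $\Tr\left( (O \otimes O)\, \E[\rhohat_i^{\otimes 2}]\, \E[\rhohat_j^{\otimes 2}] \right) - \Tr(O\rho)^2$ via independence and Hermiticity, square the expression from \Cref{lem:second_rhohat} using $[P, W_{(1\,2)}]=0$ and purity, discard the negatively-weighted terms, and finish with Cauchy--Schwarz and H\"older. The only (cosmetic) difference is that you expand $M^2$ in the basis $\{I\otimes I,\, W_{(1\,2)}\}$ instead of $\{I\otimes I,\, \sym^{(2)}\}$, which in fact makes the discarding step slightly cleaner, since each dropped summand ($\Tr(O^2)$, $\Tr(O^2\rho)$, $\Tr(O\rho)^2$) is individually nonnegative, whereas the paper's dropped bracket contains the sign-indefinite term $6\Tr(O)\Tr(O\rho)$ and needs an extra observation (e.g.\ $\Tr(O^2\rho)\ge\Tr(O\rho)^2$) to be seen as nonnegative.
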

\begin{proof}
Using independence, we expand the covariance as 
$$
\Cov( \Tr( O\rhohat_i \rhohat_j ), \Tr( O\rhohat_j \rhohat_i ) ) = \Tr( (O \otimes O) \E[ \rhohat_i \otimes \rhohat_i] \E[ \rhohat_j \otimes \rhohat_j] ) - \Tr(O \rho)^2.
$$
Using \Cref{lem:second_rhohat}, we get an expression for the second moment terms:
\begin{align*}
\E[ \rhohat_i \otimes \rhohat_i] \E[ \rhohat_j \otimes \rhohat_j] 
&= ( II + \rho I + I \rho)^2 ( W_{(1\,2)} - 2\sym^{(2)}/(d+2) )^2 \\
&=\left( II + 3 I \rho + 3 \rho I + 2 \rho \rho \right) ( W_{(1)(2)} - 4\sym^{(2)}/(d+2) + 4\sym^{(2)}/(d+2)^2 ) \\
&=\left( II + 3 I \rho + 3 \rho I + 2 \rho \rho \right) ( W_{(1)(2)} - 4(d+1)\sym^{(2)}/(d+2)^2).
\end{align*}
For the $W_{(1)(2)}$ term in $\Tr( (O \otimes O) \E[ \rhohat_i \otimes \rhohat_i] \E[ \rhohat_j \otimes \rhohat_j])$, we get
$$
\Tr( O^{\otimes 2} (II + 3 I \rho + 3 \rho I + 2 \rho \rho) W_{(1)(2)}) = \Tr(O)^2 + 6\Tr(O) + 2 \Tr(O \rho)^2.
$$
For the $\sym^{(2)}$ term, we get (ignoring the scalar factor)
$$
\Tr( O^{\otimes 2} (II + 3 I \rho + 3 \rho I + 2 \rho \rho) \sym^{(2)}) = \frac{\Tr(O^2) + \Tr(O)^2 + 6 \Tr(O) + 6\Tr(O^2 \rho) + 4 \Tr(O\rho)^2}{2}.
$$
Therefore, the covariance $\Cov( \Tr( O\rhohat_i \rhohat_j ), \Tr( O\rhohat_j \rhohat_i ) )$ is 
\begin{align*}
    &= \Tr(O)^2 + 6\Tr(O) + \Tr(O \rho)^2 - \frac{2(d+1)}{(d+2)^2} \left( \Tr(O^2) + \Tr(O)^2 + 6 \Tr(O) + 6\Tr(O^2 \rho) + 4 \Tr(O\rho)^2 \right) \\
    &\le \Tr(O)^2 + 6|\Tr(O)| + \Tr(O \rho)^2 \\
    &\le d \Tr(O^2) + 6 \sqrt{d \Tr(O^2)} + \| O \|^2
\end{align*}
where the first inequality drops the negative terms and the last line comes from Cauchy-Schwarz.
\end{proof}

\begin{cor}
\label{cor:ijij}
    For all distinct $i, j$,
    $$\Cov( \Tr( O\rhohat_i \rhohat_j ), \Tr( O\rhohat_i \rhohat_j ) ) = (d+2) \Tr(O^2) + (3d-2) \| O \|^2.$$
\end{cor}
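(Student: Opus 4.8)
The plan is to compute this covariance exactly and then bound its subleading pieces. The first thing to notice is that, unlike the earlier corollaries, here both shared indices appear \emph{crossed}: $\rhohat_i$ sits to the left of $\rhohat_j$ in the first trace but to its right in the (conjugated) second trace, and symmetrically for $\rhohat_j$. Interpreting the covariance with the same conjugate convention used in \Cref{cor:ijji} and invoking Hermiticity of $O,\rhohat_i,\rhohat_j$ to reverse the order inside the conjugated trace, I would first write
\begin{align*}
\Cov(\Tr(O\rhohat_i\rhohat_j), \Tr(O\rhohat_i\rhohat_j)) = \E[\Tr(O\rhohat_i\rhohat_j)\Tr(O\rhohat_j\rhohat_i)] - \Tr(O\rho)^2 .
\end{align*}

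Next I would turn the product of traces into a single trace on two copies and use independence of $\rhohat_i$ and $\rhohat_j$ to isolate their second moments. Starting from $\Tr(O\rhohat_i\rhohat_j)\Tr(O\rhohat_j\rhohat_i) = \Tr[(O\otimes O)(\rhohat_i\rhohat_j \otimes \rhohat_j\rhohat_i)]$ and repeatedly pushing factors through the trace (as in the step $AB\otimes BA = (A\otimes I)(B\otimes B)(I\otimes A)$), the crossed structure collapses to
\begin{align*}
\E[\Tr(O\rhohat_i\rhohat_j)\Tr(O\rhohat_j\rhohat_i)] = \Tr\big[(O\otimes I)\,\E[\rhohat_i^{\otimes 2}]\,(I\otimes O)\,\E[\rhohat_j^{\otimes 2}]\big].
\end{align*}
The key distinction from \Cref{cor:ijji} is that the observable is inserted as $(O\otimes I)\cdots(I\otimes O)$ \emph{between} the two moment tensors, rather than as $(O\otimes O)$ in front of $M^2$; this is exactly what makes the same-order term a genuinely different (and harder) computation. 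Substituting \Cref{lem:second_rhohat} twice then gives $\Tr[(O\otimes I)M(I\otimes O)M] - \Tr(O\rho)^2$, where $M = (I\otimes I + \rho\otimes I + I\otimes\rho)\big(W_{(1\,2)} - \tfrac{2}{d+2}\sym^{(2)}\big)$.

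The bulk of the work is then expanding this single trace. Writing $M$ as a linear combination of $W_{(1\,2)}$ and $I\otimes I$ (with scalar weights $\tfrac{d+1}{d+2}$ and $\tfrac{1}{d+2}$) times $(I\otimes I + \rho\otimes I + I\otimes\rho)$, the product $(O\otimes I)M(I\otimes O)M$ becomes a sum of traces of products of tensor-factorized operators and swaps. Each such trace reduces, via the identity $W_{(1\,2)}(X\otimes Y)W_{(1\,2)} = Y\otimes X$ together with \Cref{fact:two}, to a product of single-qudit traces; using $\rho^2=\rho$, $\Tr(\rho)=1$, and $\Tr((O\rho)^2)=\Tr(O\rho)^2$ for pure $\rho$, every term lands in the small set $\{\Tr(O^2),\,\Tr(O^2\rho),\,\Tr(O\rho)^2,\,\Tr(O)^2,\,\Tr(O)\Tr(O\rho)\}$, producing an exact rational-in-$d$ expression for the covariance.

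Finally I would bound the result. The coefficient of $\Tr(O^2)$ works out to be at most $d+2$, so that term is controlled by $(d+2)\Tr(O^2)$. Every remaining invariant is controlled by $\|O\|^2$: H\"older gives $\Tr(O^2\rho)\le\|O\|^2$ and $\Tr(O\rho)^2\le\|O\|^2$, while Cauchy--Schwarz gives $\Tr(O)^2\le d\,\Tr(O^2)$ and $|\Tr(O)\Tr(O\rho)|\le d\|O\|^2$ (the last two weighted by the tiny factor $(d+2)^{-2}$). Summing these contributions yields the claimed $(d+2)\Tr(O^2)+(3d-2)\|O\|^2$. I expect the main obstacle to be precisely the bookkeeping in the expansion step: there are on the order of two dozen terms, and one must track the exact scalar weights and the $(d+2)^{-2}$ normalization carefully to land on the constants $d+2$ and $3d-2$ rather than merely the right order of magnitude.
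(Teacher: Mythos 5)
Your proposal is correct, and it reaches the paper's bound by a genuinely different decomposition at exactly the step the paper flags as the difficulty. The paper writes the second moment as $\E[\rhohat_i\rhohat_j\otimes\rhohat_j\rhohat_i]$ and, observing that this does not directly factor into $\E[\rhohat_i^{\otimes 2}]$ and $\E[\rhohat_j^{\otimes 2}]$, introduces a \emph{third} qudit via the identity $\rhohat_i\rhohat_j\otimes\rhohat_j\rhohat_i = \Tr_3\bigl((I\otimes\rhohat_i\otimes\rhohat_i)(\rhohat_j\otimes\rhohat_j\otimes I)W_{(1\,3)(2)}\bigr)$, then expands everything in three-qudit permutation operators. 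You instead stay on two qudits, using $AB\otimes BA=(A\otimes I)(B\otimes B)(I\otimes A)$ to obtain $\E[\Tr(O\rhohat_i\rhohat_j)\Tr(O\rhohat_j\rhohat_i)]=\Tr\bigl[(O\otimes I)\,\E[\rhohat_i^{\otimes 2}]\,(I\otimes O)\,\E[\rhohat_j^{\otimes 2}]\bigr]$; this identity is valid (one checks $\Tr[(O\otimes I)(\rhohat_i\otimes\rhohat_i)(I\otimes O)(\rhohat_j\otimes\rhohat_j)]=\Tr(O\rhohat_i\rhohat_j)\Tr(\rhohat_iO\rhohat_j)$, and cyclicity gives the crossed trace), after which independence factorizes the expectation immediately. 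Both routes then yield the same four-term expansion: substituting $M=(II+\rho I+I\rho)\bigl(\tfrac{d+1}{d+2}W_{(1\,2)}-\tfrac{1}{d+2}II\bigr)$ twice, your $W_{(1\,2)}$--$W_{(1\,2)}$ term equals the paper's dominant term $(d+3)\Tr(O^2)+2(d+1)\Tr(O^2\rho)+d\Tr(O\rho)^2$ (using $W_{(1\,2)}(X\otimes Y)W_{(1\,2)}=Y\otimes X$ and purity), your $II$--$II$ term equals the paper's $\Tr(O)^2+6\Tr(O)\Tr(O\rho)+2\Tr(O\rho)^2$, and the two cross terms are non-negative with negative coefficient, so both proofs drop them. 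What each approach buys: your two-qudit route avoids three-qudit permutation bookkeeping and makes the independence step transparent, while the paper's third-qudit trick with $W_{(1\,3)(2)}$ is the one that plugs directly into its tensor-network toolkit (\Cref{fact:two}) and scales more naturally to the higher-order (cubic) estimators raised in the open problems. One cosmetic remark applying equally to both: the corollary is stated as an equality, but what is actually derived, by you and by the paper, is the upper bound $\Cov\bigl(\Tr(O\rhohat_i\rhohat_j),\Tr(O\rhohat_i\rhohat_j)\bigr)\le (d+2)\Tr(O^2)+(3d-2)\|O\|^2$, which is all that \Cref{lem:all_the_covariances} requires.
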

\begin{proof}
    We expand the covariance as usual and obtain the following.
    $$
    \Cov( \Tr( O\rhohat_i \rhohat_j ), \Tr( O\rhohat_i \rhohat_j ) ) = \Tr( (O \otimes O) \E[ \rhohat_i \rhohat_j \otimes \rhohat_j \rhohat_i] ) - \Tr(O \rho)^2
    $$
    It is not so easy to decompose this into $\E[ \rhohat_i \otimes \rhohat_i ]$ and $\E[ \rhohat_j \otimes \rhohat_j]$. We introduce a third qudit, and use the following identity:
    $$
    \rhohat_i \rhohat_j \otimes \rhohat_j \rhohat_i = \Tr_{3}( (I \otimes \rhohat_i \otimes \rhohat_i) (\rhohat_j \otimes \rhohat_j \otimes I) W_{(1\,3)(2)} ).
    $$
    So, plugging back into the expectation, we get 
    \begin{align*}
        \Tr( (O \otimes O) \E[ \rhohat_i \rhohat_j \otimes \rhohat_j \rhohat_i] ) &=  \Tr( (O \otimes O \otimes I) (I \otimes \E[\rhohat_i \otimes \rhohat_i]) (\E[\rhohat_j \otimes \rhohat_j] \otimes I) W_{(1\,3)(2)} ).
    \end{align*}
    Once again, we use \Cref{lem:second_rhohat} to compute
    \begin{align*}
        I \otimes \E[\rhohat_i \otimes \rhohat_i] &= (III + I\rho I + II\rho)
         \left( \tfrac{d+1}{d+2} W_{(1)(2\,3)} - \tfrac{1}{d+2} W_{(1)(2)(3)} \right), \\
        \E[\rhohat_j \otimes \rhohat_j] \otimes I &= (III + \rho II + I\rho I)\left( \tfrac{d+1}{d+2} W_{(1\,2)(3)} - \tfrac{1}{d+2} W_{(1)(2)(3)} \right).
    \end{align*}
    Each expectation is a difference of two $W_{\pi}$ permutation terms. Therefore, computing the product of the two expectations, we get four $W_{\pi}$ terms. It will turn out that the dominant one is the following, where we have taken the $W_{(1)(2\,3)}$ term for $\rhohat_i$ and the $W_{(1\,2)(3)}$ term from $\rhohat_j$ (to visualize the largest contribution from this term, we refer to tensor network picture in \Cref{fig:dominant_ijij}). Dropping the scalar factor $(d+1)^2/(d+2)^2$ for now, we get
    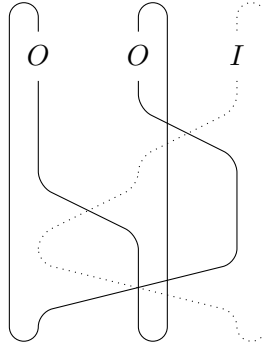
\begin{figure}
        \centering
        \def\rowspacing{1em}
\def\colspacing{2em}
\def\roundness{0.5em}
\def\overdist{1em}
\[
\begin{tikzpicture}[baseline=(current bounding box.center)]
	\matrix (m) [matrix of math nodes, row sep=\rowspacing, column sep=\colspacing, text height=0.8em,text depth=.2em] {
		O & O & I \\
		\phantom{\psi} & \phantom{\psi} & \phantom{\psi} \\
		\phantom{\psi} & \phantom{\psi} & \phantom{\psi} \\
		\phantom{\psi} & \phantom{\psi} & \phantom{\psi} \\
	};
	\draw[rounded corners=\roundness] (m-1-1) -- (m-3-1.north) -- (m-3-2.south) -- (m-4-2.south) -- ++(0,-\overdist) -- ++(\loopwidth,0) |- ($(m-1-2.north) + (0,\overdist)$) -- (m-1-2);
	
	\draw[rounded corners=\roundness] (m-1-2) -- (m-2-2.north) -- (m-2-3.south) -- (m-4-3.north) -- (m-4-1.south) -- ++(0,-\overdist) -- ++(-\loopwidth,0) |- ($(m-1-1.north) + (0,\overdist)$) -- (m-1-1);
	
	\draw[rounded corners=\roundness,dotted] (m-1-3) -- (m-2-3.north) -- (m-2-2.south) --
	(m-3-2.north) -- (m-3-1.south) -- (m-4-1.north) -- (m-4-3.south) -- ++(0,-\overdist) -- ++(\loopwidth,0) |- ($(m-1-3.north) + (0,\overdist)$) -- (m-1-3);
\end{tikzpicture}
\]
        \caption{The dominant term in Corollary~\ref{cor:ijij}, $\Tr( (O \otimes O \otimes I) \, III \, W_{(1)(2\,3)} \, III \, W_{(1\,2)(3)} W_{(1\,3)(2)}) = d \Tr(O^2)$.}
        \label{fig:dominant_ijij}
    \end{figure}
    \begin{align*}
        &\quad \Tr( (O \otimes O \otimes I) (III + I\rho I + II\rho) W_{(1)(2\,3)} (III + \rho II + I\rho I) W_{(1\,2)(3)} W_{(1\,3)(2)} ) \\
        &= \Tr( (O \otimes O \otimes I) (III + I\rho I + II\rho)(III + \rho II + II \rho) W_{(1)(2\,3)}  W_{(1\,2)(3)} W_{(1\,3)(2)} ) \\
        &= \Tr( (O \otimes O \otimes I) (III + 3II \rho + I \rho I + \rho II + \rho I \rho + I \rho \rho + \rho \rho I) W_{(1\,2)(3)} ) \\
        &= \Tr(O^2)(\Tr(I)+3\Tr(\rho)) + \Tr(O^2 \rho)(2\Tr(I) + 2\Tr(\rho)) + \Tr(O \rho)^2 \Tr(I) \\
        &=(d+3) \Tr(O^2) + 2(d+1)\Tr(O^2 \rho) + d \Tr(O \rho)^2.
    \end{align*}
    For completeness, let's also compute the other 3 terms (also without their scalar factors): 
    \begin{align*}
        &\quad \Tr( (O \otimes O \otimes I) (III + I\rho I + II\rho) W_{(1)(2)(3)} (III + \rho II + I\rho I) W_{(1)(2)(3)} W_{(1\,3)(2)} ) \\
        &= \Tr( (O \otimes O \otimes I) (III + \rho II + 3I\rho I + II\rho + \rho I \rho + \rho \rho I + I\rho \rho) W_{(1\,3)(2)} ) \\
        &= \Tr(O)^2 + 6 \Tr(O) \Tr(O \rho) + 2 \Tr(O \rho)^2 \\
        &\le d \Tr(O^2) + 6 \sqrt{d}\Tr(O^2) + 2 \Tr(O \rho)^2 \le 6 d \Tr(O^2) + 2 \Tr(O \rho)^2,
    \end{align*}
    where inequality comes from Cauchy-Schwarz and the fact that $d \ge 2$. The final two terms turn out to be equal:
    \begin{align*}
        &\quad \Tr( (O \otimes O \otimes I) (III + I\rho I + II\rho) W_{(1)(2\,3)} (III + \rho II + I\rho I) W_{(1)(2)(3)} W_{(1\,3)(2)} ) \\
        &= \Tr( (O \otimes O \otimes I) (III + 3II \rho + I \rho I + \rho II + I \rho \rho  + \rho I \rho + \rho \rho I) W_{(1\,2\,3)} ) \\
        &= \Tr(O^2) + 6 \Tr(O^2\rho) + 2 \Tr(O\rho)^2
    \end{align*}
    and
    \begin{align*}
        &\quad \Tr( (O \otimes O \otimes I) (III + I\rho I + II\rho) W_{(1)(2)(3)} (III + \rho II + I\rho I) W_{(1)(2\,3)} W_{(1\,3)(2)} ) \\
        &= \Tr( (O \otimes O \otimes I)  (III + II\rho + 3I\rho I + \rho II + I\rho \rho+ \rho I \rho + \rho \rho I ) W_{(1\,2\,3)} ) \\
        &= \Tr(O^2) + 6 \Tr(O^2\rho) + 2 \Tr(O\rho)^2
    \end{align*}
    
Notice that these last two terms are non-negative, and so multiplying them by $-(d+1)/(d+2)^2$ makes them non-positive. Since we want to give an upper bound on the covariance, these terms can be dropped. Altogether, and inserting the appropriate constants, we get the following upper bound on the covariance $\Cov( \Tr( O\rhohat_i \rhohat_j ), \Tr( O\rhohat_i \rhohat_j ) )$:
\begin{align*}
    =\,& \Tr( (O \otimes O \otimes I) (I \otimes \E[\rhohat_i \otimes \rhohat_i]) (\E[\rhohat_j \otimes \rhohat_j] \otimes I) W_{(1\,3)(2)} ) \\
    \leq\,& \tfrac{(d+1)^2}{(d+2)^2} ((d+3) \Tr(O^2) + 2(d+1)\Tr(O^2 \rho) + d \Tr(O \rho)^2) + \frac{(6 d \Tr(O^2) + 2 \Tr(O \rho)^2 )}{(d+2)^2} - \Tr(O \rho)^2 \\
    \leq\,& \frac{1}{(d+2)^2} ( (d^3 + 5d^2 + 13 d + 3) \Tr(O^2) + (3 d^3 + 7 d^2 + 3 d) \| O \|^2 ) \\
    \leq\,& (d+2) \Tr(O^2) + (3d-2) \| O \|^2
\end{align*}
where we've used once again that $d \ge 2$.
\end{proof}

\begin{algorithm}[H]
\caption{Algorithm for Theorem~\ref{thm:ps_im_ub}}\label{alg:ps_im_meas}
\begin{algorithmic}[1]
\algrenewcommand\algorithmicrequire{\textbf{Input:}}
\algrenewcommand\algorithmicensure{\textbf{Output:}}
\Require Quantum state $\rho^{\otimes N}$, $B$, $\epsilon$, $\delta$, $d$.
\Ensure Classical shadow $\set{\hat{\rho}^{(i)}}_{i \in [k]}$.
\Statex
\State{$p \gets \bigo{\log(1/\delta)}$} \Comment{Number of batches}
\State{$s \gets \mathcal O(\sqrt{Bd}/\epsilon + 1/\epsilon^2)$} \label{line:sets} \Comment{Samples per batch}

\State{$N \gets p s$} \Comment{Total number of samples}
\For{each batch $i = 1, \ldots, p$}
    \For{$j = 1, \ldots, s$}
    \State{$\psi_j^{(i)} \gets $ Measure fresh $\rho$ with $\mathcal{M}_1$}
    \State{$\rhohat_j^{(i)} \gets (d+1) \psi_j^{(i)} - I$}
    \EndFor
    \State{$\rhohat^{(i)} \gets \frac{1}{s(s-1)} \sum_{j \neq k} \rhohat_j^{(i)} \rhohat_k^{(i)}$} \label{state:estimator}
\EndFor
\State{\Return $\set{\hat{\rho}^{(i)}}_{i \in [k]}$}
\end{algorithmic}
\end{algorithm}

\end{document}